\newcommand{\mc}{\mathcal}
\newcommand{\Oh}{\mathcal{O}}
\newcommand{\allsets}{[V(G)]^{\leq k}}
\newtheorem{proposition}{Proposition}
\newtheorem{lemma}{Lemma}
\newtheorem{theorem}{Theorem}
\newtheorem{corollary}{Corollary}
\newtheorem{claim}{Claim}
\title{The complexity of independent set reconfiguration \\ on bipartite graphs}
\author{
Daniel Lokshtanov
\thanks{Department of Informatics, University of Bergen, Norway, e-mail: \texttt{daniello@ii.uib.no}}
\and 
Amer E. Mouawad
\thanks{Department of Informatics, University of Bergen, Norway, e-mail: \texttt{a.mouawad@uib.no}}
}
\date{}
\begin{document}
\maketitle
\thispagestyle{empty}

\begin{abstract} 
We settle the complexity of the {\sc Independent Set Reconfiguration} problem on 
bipartite graphs under all three commonly studied reconfiguration models. 
We show that under the token jumping or token addition/removal model the problem is \NP-complete. 
For the token sliding model, we show that the problem remains \PSPACE-complete. 
\end{abstract}

\section{Introduction}\label{sec-intro}
Many real-world problems present themselves in the following form:
given the description of a system state and the description of a state we would ``prefer'' the system to be in, is it possible to transform the system 
from its current state into the desired one without ``breaking'' the system in the process? Such questions, with some generalizations and specializations, have  
received a substantial amount of attention under the so-called \emph{reconfiguration framework}~\cite{DBLP:journals/tcs/BrewsterMMN16,MouawadN0SS17,H13,Wrochna15}. 
Historically, the study of reconfiguration questions predates the field of computer science, as 
many classic one-player games can be formulated as reachability questions~\cite{JS79,KPS08}, e.g., the $15$-puzzle and Rubik's cube. 
More recently, reconfiguration problems have emerged from computational problems in 
different areas such as graph theory~\cite{CHJ08,IDHPSUU11,IKD12}, constraint satisfaction~\cite{GKMP09,DBLP:conf/icalp/MouawadNPR15}, 
computational geometry~\cite{DBLP:books/daglib/0019278,DBLP:conf/stacs/KanjX15,DBLP:journals/comgeo/LubiwP15}, and even quantum complexity theory~\cite{DBLP:conf/icalp/GharibianS15}. 

In this work, we focus on the reconfiguration of independent sets and vertex covers of bipartite graphs.  
We view an independent set as a collection of tokens placed on the vertices of a graph such that no two tokens are adjacent. 
This gives rise to three natural adjacency relations between independent sets (or token configurations), also called \emph{reconfiguration steps}.   
In the \emph{token addition/removal} (TAR) model, first introduced by Ito et al.~\cite{IDHPSUU11}, we are allowed to either add or remove one token 
at a time as long as there are at least $k$ (non-adjacent) tokens on the graph at all times. 
In the \emph{token jumping} (TJ) model, introduced by Kami\'{n}ski et al.~\cite{KMM12}, a single reconfiguration step consists of first removing 
a token on some vertex $u$ and then immediately adding it back on any other vertex $v$, as long as no two tokens become adjacent. 
The token is said to \emph{jump} from vertex $u$ to vertex $v$. 
Finally, in the \emph{token sliding} (TS) model, introduced by Hearn and Demaine~\cite{DBLP:journals/tcs/HearnD05}, two independent sets are adjacent if one can 
be obtained from the other by a token jump from vertex $u$ to vertex $v$ with the additional requirement of $uv$ being an edge of the graph. 
The token is then said to \emph{slide} from vertex $u$ to vertex $v$ along the edge $uv$. Note that, in both 
the TJ and TS models, the size of independent sets is fixed, while the TAR model only enforces a lower bound.  
Generally speaking, in the {\sc $\mc{M}$-Independent Set Reconfiguration} ($\mc{M}$-ISR) problem, where $\mc{M} \in \{\text{TAR},\text{TJ},\text{TS}\}$, 
we are given a graph $G$ and two independent sets $I$ and $J$ of $G$. The goal is to determine 
whether there exists a sequence of reconfiguration steps -- a \emph{reconfiguration sequence} -- that transforms $I$ into $J$ (where the reconfiguration step depends on the model). 
$\mc{M}$-ISR has been extensively studied under the reconfiguration framework, albeit under 
different names~\cite{DBLP:journals/corr/BonamyB16,DBLP:conf/swat/BonsmaKW14,DBLP:conf/isaac/DemaineDFHIOOUY14,DBLP:conf/isaac/Fox-EpsteinHOU15, 
DBLP:conf/tamc/ItoKOSUY14,DBLP:journals/ieicet/ItoNZ16,KMM12,DBLP:conf/wads/LokshtanovMPRS15,DBLP:conf/isaac/MouawadNR14}.
It is known that the problem is \PSPACE-complete for all three models, even on restricted graph classes such as  
graphs of bounded bandwidth/pathwidth~\cite{DBLP:journals/corr/Wrochna14} and planar graphs~\cite{DBLP:journals/tcs/HearnD05}. 
A popular open question related to $\mc{M}$-ISR is whether the problem becomes polynomial-time solvable 
on bipartite graphs~\cite{DBLP:conf/isaac/Fox-EpsteinHOU15,presentation1,DBLP:journals/corr/HoangU17,DBLP:conf/isaac/MouawadNR14}. 
A few positive results for subclasses of bipartite graphs are known. For instance, it was shown 
by Demaine et al.~\cite{DBLP:conf/isaac/DemaineDFHIOOUY14} that TS-ISR can be solved in polynomial time on trees. 
Fox-Epstein et al.~\cite{DBLP:conf/isaac/Fox-EpsteinHOU15} gave polynomial-time algorithms 
for solving TS-ISR on bipartite permutation and bipartite distance-hereditary graphs, and conjectured that the problem remains polynomial-time solvable on bipartite graphs. 
Mouawad et al.~\cite{DBLP:conf/isaac/MouawadNR14} studied the shortest path variant of TAR-ISR, where we seek a shortest reconfiguration sequence, 
and showed that it is \NP-hard on bipartite graphs.  
They asked whether the problem is in \NP\ and whether the problem remains hard without any length restrictions.  
We settle the complexity of $\mc{M}$-ISR (and the shortest path variant) on bipartite graphs under all three models. 
We show that under the token jumping or token addition/removal model the problem is \NP-complete. 
For the token sliding model, we show that the problem remains \PSPACE-complete. 
Our \NP-completeness result comes as somewhat of a surprise, as reconfiguration problems are typically in \P\ or \PSPACE-complete~\cite{H13}. 
To the best of our knowledge, TAR-ISR on bipartite graphs is the first ``natural'' \NP-complete reconfiguration problem 
(that asks for the existence of a reconfiguration sequence of \emph{any} length). 

It is known~\cite{KMM12} that the token addition/removal model generalizes the token jumping model in the following sense.  
There exists a sequence between two independent sets $I$ and $J$, with $|I| = |J|$, under the TJ model if and only if there exists a  
sequence between them under the TAR model, with $k = |I| - 1$. Hence, we only consider the TAR model.  
In addition, TAR-ISR is easily seen to be equivalent to the following problem, namely {\sc Vertex Cover Reconfiguration} (VCR). 
We are given an $n$-vertex graph $G$, an integer $k$, and two vertex covers of $G$, $S$ and $T$, of size at most $k$. 
The goal is to determine whether there exists a sequence $\sigma = \langle Q_0, \ldots, Q_t \rangle$ satisfying the following. 
\begin{itemize}
\item $Q_0 = S$ and $Q_t = T$; 
\item $Q_i$ is a vertex cover of $G$ and $|Q_i| \leq k$, for $0 \leq i \leq t$;
\item $|Q_i \Delta Q_{i+1}| = 1$, where $0 \leq i < t$ and $Q_i \Delta Q_{i+1} = (Q_i \setminus Q_{i+1}) \cup (Q_{i+1} \setminus Q_{i})$ 
denotes the \emph{symetric difference} of $Q_i$ and $Q_{i+1}$. 
\end{itemize}
An alternative perspective on the VCR problem (and reconfiguration problems in general) is via the notion of the \emph{reconfiguration graph} $\mathcal{R}_k(G)$. 
Nodes in $\mathcal{R}_k(G)$ represent vertex covers of $G$ of size at most $k$ and two nodes $Q$ and $Q'$ 
are connected by an edge whenever $|Q \Delta Q'| = 1$. In other words, $Q'$ can be obtained from $Q$ by the addition or removal of a single vertex. 
An edge in $\mathcal{R}_k(G)$ is sometimes referred to as a reconfiguration step and a walk or path in 
this graph is a reconfiguration sequence. An equivalent formulation of VCR is then to 
determine whether $S$ and $T$ belong to the same connected component of $\mathcal{R}_k(G)$. 

\begin{figure}
\centering
    \begin{tikzpicture}[scale=.75, auto=left, remember picture ,every node/.style={circle},inner/.style={circle},outer/.style={circle}]

	\node at (1,2.75) {\large $S$ (source)};
	\node at (14,0.5) {\large $T$ (target)};
	
    \node[outer] (B1) at (1,5) {
    \begin{tikzpicture}
        \node [inner, circle, fill=gray!50, draw=black, inner sep=1pt] (a1)  {~3};
        \node [inner, circle, fill=gray!50, draw=black, inner sep=1pt, above=0.2cm of a1] (a2) {10};
        \node [inner, circle, fill=gray!50, draw=black, inner sep=1pt, above=0.2cm of a2] (a3) {10};
        \node [inner, circle, fill=white, draw=black, inner sep=1pt, above=0.2cm of a3] (a4) {~8};
        \node [inner, circle, fill=gray!50, draw=black, inner sep=1pt, right=0.2cm of a4] (a5) {14};
		\node [inner, circle, fill=white, draw=black, inner sep=1pt, below=0.2cm of a5] (a6) {~5};
		\node [inner, circle, fill=white, draw=black, inner sep=1pt, below=0.2cm of a6] (a7) {~2};
		\node [inner, circle, fill=white, draw=black, inner sep=1pt, below=0.2cm of a7] (a8) {10};
        \foreach \from/\to in {a4/a5,a3/a5,a3/a6,a3/a7,a2/a7,a2/a8,a1/a7,a1/a8}
        \draw (\from) -- (\to);
    \end{tikzpicture}};

    \node[outer] (B2) at (4,5) {
    \begin{tikzpicture}
        \node [inner, circle, fill=gray!50, draw=black, inner sep=1pt] (a1)  {~3};
        \node [inner, circle, fill=gray!50, draw=black, inner sep=1pt, above=0.2cm of a1] (a2) {10};
        \node [inner, circle, fill=gray!50, draw=black, inner sep=1pt, above=0.2cm of a2] (a3) {10};
        \node [inner, circle, fill=white, draw=black, inner sep=1pt, above=0.2cm of a3] (a4) {~8};
        \node [inner, circle, fill=gray!50, draw=black, inner sep=1pt, right=0.2cm of a4] (a5) {14};
		\node [inner, circle, fill=gray!50, draw=black, inner sep=1pt, below=0.2cm of a5] (a6) {~5};
		\node [inner, circle, fill=gray!50, draw=black, inner sep=1pt, below=0.2cm of a6] (a7) {~2};
		\node [inner, circle, fill=white, draw=black, inner sep=1pt, below=0.2cm of a7] (a8) {10};
        \foreach \from/\to in {a4/a5,a3/a5,a3/a6,a3/a7,a2/a7,a2/a8,a1/a7,a1/a8}
        \draw (\from) -- (\to);
    \end{tikzpicture}};

    \node[outer] (B3) at (7,5) {
    \begin{tikzpicture}
        \node [inner, circle, fill=gray!50, draw=black, inner sep=1pt] (a1)  {~3};
        \node [inner, circle, fill=gray!50, draw=black, inner sep=1pt, above=0.2cm of a1] (a2) {10};
        \node [inner, circle, fill=white, draw=black, inner sep=1pt, above=0.2cm of a2] (a3) {10};
        \node [inner, circle, fill=white, draw=black, inner sep=1pt, above=0.2cm of a3] (a4) {~8};
        \node [inner, circle, fill=gray!50, draw=black, inner sep=1pt, right=0.2cm of a4] (a5) {14};
		\node [inner, circle, fill=gray!50, draw=black, inner sep=1pt, below=0.2cm of a5] (a6) {~5};
		\node [inner, circle, fill=gray!50, draw=black, inner sep=1pt, below=0.2cm of a6] (a7) {~2};
		\node [inner, circle, fill=white, draw=black, inner sep=1pt, below=0.2cm of a7] (a8) {10};
        \foreach \from/\to in {a4/a5,a3/a5,a3/a6,a3/a7,a2/a7,a2/a8,a1/a7,a1/a8}
        \draw (\from) -- (\to);
    \end{tikzpicture}};
	
    \node[outer] (B4) at (10,5) {
    \begin{tikzpicture}
        \node [inner, circle, fill=gray!50, draw=black, inner sep=1pt] (a1)  {~3};
        \node [inner, circle, fill=gray!50, draw=black, inner sep=1pt, above=0.2cm of a1] (a2) {10};
        \node [inner, circle, fill=white, draw=black, inner sep=1pt, above=0.2cm of a2] (a3) {10};
        \node [inner, circle, fill=white, draw=black, inner sep=1pt, above=0.2cm of a3] (a4) {~8};
        \node [inner, circle, fill=gray!50, draw=black, inner sep=1pt, right=0.2cm of a4] (a5) {14};
		\node [inner, circle, fill=gray!50, draw=black, inner sep=1pt, below=0.2cm of a5] (a6) {~5};
		\node [inner, circle, fill=gray!50, draw=black, inner sep=1pt, below=0.2cm of a6] (a7) {~2};
		\node [inner, circle, fill=gray!50, draw=black, inner sep=1pt, below=0.2cm of a7] (a8) {10};
        \foreach \from/\to in {a4/a5,a3/a5,a3/a6,a3/a7,a2/a7,a2/a8,a1/a7,a1/a8}
        \draw (\from) -- (\to);
    \end{tikzpicture}};
	
    \node[outer] (B5) at (13,5) {
    \begin{tikzpicture}
        \node [inner, circle, fill=white, draw=black, inner sep=1pt] (a1)  {~3};
        \node [inner, circle, fill=white, draw=black, inner sep=1pt, above=0.2cm of a1] (a2) {10};
        \node [inner, circle, fill=white, draw=black, inner sep=1pt, above=0.2cm of a2] (a3) {10};
        \node [inner, circle, fill=white, draw=black, inner sep=1pt, above=0.2cm of a3] (a4) {~8};
        \node [inner, circle, fill=gray!50, draw=black, inner sep=1pt, right=0.2cm of a4] (a5) {14};
		\node [inner, circle, fill=gray!50, draw=black, inner sep=1pt, below=0.2cm of a5] (a6) {~5};
		\node [inner, circle, fill=gray!50, draw=black, inner sep=1pt, below=0.2cm of a6] (a7) {~2};
		\node [inner, circle, fill=gray!50, draw=black, inner sep=1pt, below=0.2cm of a7] (a8) {10};
        \foreach \from/\to in {a4/a5,a3/a5,a3/a6,a3/a7,a2/a7,a2/a8,a1/a7,a1/a8}
        \draw (\from) -- (\to);
    \end{tikzpicture}};
	
    \node[outer] (B6) at (16,5) {
    \begin{tikzpicture}
        \node [inner, circle, fill=white, draw=black, inner sep=1pt] (a1)  {~3};
        \node [inner, circle, fill=white, draw=black, inner sep=1pt, above=0.2cm of a1] (a2) {10};
        \node [inner, circle, fill=gray!50, draw=black, inner sep=1pt, above=0.2cm of a2] (a3) {10};
        \node [inner, circle, fill=white, draw=black, inner sep=1pt, above=0.2cm of a3] (a4) {~8};
        \node [inner, circle, fill=gray!50, draw=black, inner sep=1pt, right=0.2cm of a4] (a5) {14};
		\node [inner, circle, fill=gray!50, draw=black, inner sep=1pt, below=0.2cm of a5] (a6) {~5};
		\node [inner, circle, fill=gray!50, draw=black, inner sep=1pt, below=0.2cm of a6] (a7) {~2};
		\node [inner, circle, fill=gray!50, draw=black, inner sep=1pt, below=0.2cm of a7] (a8) {10};
        \foreach \from/\to in {a4/a5,a3/a5,a3/a6,a3/a7,a2/a7,a2/a8,a1/a7,a1/a8}
        \draw (\from) -- (\to);
    \end{tikzpicture}};
	
    \node[outer] (B7) at (6,0.5) {
    \begin{tikzpicture}
        \node [inner, circle, fill=white, draw=black, inner sep=1pt] (a1)  {~3};
        \node [inner, circle, fill=white, draw=black, inner sep=1pt, above=0.2cm of a1] (a2) {10};
        \node [inner, circle, fill=gray!50, draw=black, inner sep=1pt, above=0.2cm of a2] (a3) {10};
        \node [inner, circle, fill=white, draw=black, inner sep=1pt, above=0.2cm of a3] (a4) {~8};
        \node [inner, circle, fill=gray!50, draw=black, inner sep=1pt, right=0.2cm of a4] (a5) {14};
		\node [inner, circle, fill=white, draw=black, inner sep=1pt, below=0.2cm of a5] (a6) {~5};
		\node [inner, circle, fill=gray!50, draw=black, inner sep=1pt, below=0.2cm of a6] (a7) {~2};
		\node [inner, circle, fill=gray!50, draw=black, inner sep=1pt, below=0.2cm of a7] (a8) {10};
        \foreach \from/\to in {a4/a5,a3/a5,a3/a6,a3/a7,a2/a7,a2/a8,a1/a7,a1/a8}
        \draw (\from) -- (\to);
    \end{tikzpicture}};
	
    \node[outer] (B8) at (9,0.5) {
    \begin{tikzpicture}
        \node [inner, circle, fill=white, draw=black, inner sep=1pt] (a1)  {~3};
        \node [inner, circle, fill=white, draw=black, inner sep=1pt, above=0.2cm of a1] (a2) {10};
        \node [inner, circle, fill=gray!50, draw=black, inner sep=1pt, above=0.2cm of a2] (a3) {10};
        \node [inner, circle, fill=gray!50, draw=black, inner sep=1pt, above=0.2cm of a3] (a4) {~8};
        \node [inner, circle, fill=gray!50, draw=black, inner sep=1pt, right=0.2cm of a4] (a5) {14};
		\node [inner, circle, fill=white, draw=black, inner sep=1pt, below=0.2cm of a5] (a6) {~5};
		\node [inner, circle, fill=gray!50, draw=black, inner sep=1pt, below=0.2cm of a6] (a7) {~2};
		\node [inner, circle, fill=gray!50, draw=black, inner sep=1pt, below=0.2cm of a7] (a8) {10};
        \foreach \from/\to in {a4/a5,a3/a5,a3/a6,a3/a7,a2/a7,a2/a8,a1/a7,a1/a8}
        \draw (\from) -- (\to);
    \end{tikzpicture}};
	
    \node[outer] (B9) at (12,0.5) {
    \begin{tikzpicture}
        \node [inner, circle, fill=white, draw=black, inner sep=1pt] (a1)  {~3};
        \node [inner, circle, fill=white, draw=black, inner sep=1pt, above=0.2cm of a1] (a2) {10};
        \node [inner, circle, fill=gray!50, draw=black, inner sep=1pt, above=0.2cm of a2] (a3) {10};
        \node [inner, circle, fill=gray!50, draw=black, inner sep=1pt, above=0.2cm of a3] (a4) {~8};
        \node [inner, circle, fill=white, draw=black, inner sep=1pt, right=0.2cm of a4] (a5) {14};
		\node [inner, circle, fill=white, draw=black, inner sep=1pt, below=0.2cm of a5] (a6) {~5};
		\node [inner, circle, fill=gray!50, draw=black, inner sep=1pt, below=0.2cm of a6] (a7) {~2};
		\node [inner, circle, fill=gray!50, draw=black, inner sep=1pt, below=0.2cm of a7] (a8) {10};
        \foreach \from/\to in {a4/a5,a3/a5,a3/a6,a3/a7,a2/a7,a2/a8,a1/a7,a1/a8}
        \draw (\from) -- (\to);
    \end{tikzpicture}};

	\draw[thick,->] (2.25,5) -- (2.75,5);
	\draw[thick,->] (5.25,5) -- (5.75,5);
	\draw[thick,->] (8.25,5) -- (8.75,5);
	\draw[thick,->] (11.25,5) -- (11.75,5);
	\draw[thick,->] (14.25,5) -- (14.75,5);
	\draw[thick,->] (17.25,5) -- (17.75,5);

	\draw[thick,->] (3.5,0.5) -- (4.5,0.5);
	\draw[thick,->] (7.25,0.5) -- (7.75,0.5);
	\draw[thick,->] (10.25,0.5) -- (10.75,0.5);
	
  \end{tikzpicture}
\caption{Example of a non-monotone sequence going to a local minima ($k = 44$). Each edge $xy$ corresponds to a biclique with $x$ vertices on one side 
and $y$ vertices on the other. Vertices colored gray belong to the vertex cover. Note that vertices that are added or removed more than once 
do not belong to $S \Delta T$.}
\label{fig-example}
\end{figure}
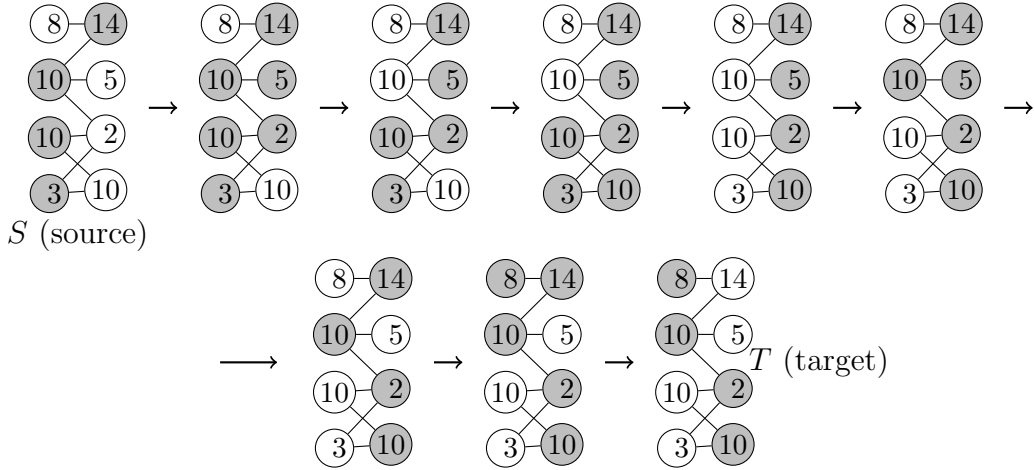

To prove the \NP-hardness of VCR (Section~\ref{sec-hardness}), we consider instances $(G,S,T,k)$ of the problem where $G$ is an $n$-vertex 
bipartite graph with bipartition $(L,R)$, $S = L$, and $T = R$. Informally, we call such instances the ``left-to-right instances''. 
We show that there is a reconfiguration sequence from $L$ to $R$ in $\mc{R}_k(G)$ if and only if 
the treewidth of the cobipartite graph $\overline{G}$ is at most $k$, where $\overline{G}$ is 
obtained from $G$ by adding all edges between vertices in $L$ and adding all edges between vertices in $R$.   
We obtain the aforementioned equivalence by relating left-to-right instances of VCR to the cops-and-robber game played on $\overline{G}$. 
The seminal result of Seymour and Thomas~\cite{SeymourT93} establishes the equivalence 
between the cops-and-robber game and computing the treewidth of the underlying graph $\overline{G}$.  
That is, the number of cops needed to catch a robber in $\overline{G}$ is exactly equal to the treewidth of $\overline{G}$ plus one. 
Computing treewidth is known to be \NP-complete, even for cobipartite graphs~\cite{Arnborg:1987:CFE:37170.37183,DBLP:journals/dam/BodlaenderT97,doi:10.1137/0602010}. 
This implies that VCR is \NP-complete when restricted to left-to-right instances, or more generally to instances where $S \cap T = \emptyset$. 
To show membership in \NP\ for instances that are not necessarily left-to-right (Section~\ref{sec-membership}), we prove that the diameter of 
every connected component of $\mathcal{R}_k(G)$, for $G$ bipartite, is at most $\Oh(n^4)$.  
While we believe that $\Oh(n^4)$ is an overestimation, we know that there are instances of VCR where vertices have to be ``touched'' (added or removed) more than once. 
An example of such an instance (where vertices need to be added and removed twice) is shown in Figure~\ref{fig-example}. 

Finally, we consider ISR under the token sliding model in Section~\ref{sec-pspace}. This problem is commonly known in the literature 
as the {\sc Token Sliding} problem. We prove \PSPACE-completeness of {\sc Token Sliding} in bipartite graphs by a reduction from 
a variant of the {\sc Word Reconfiguration} problem, first introduced by Wrochna~\cite{DBLP:journals/corr/Wrochna14}.

\section{Preliminaries}\label{sec-prelim}
We denote the set of natural numbers by $\mathbb{N}$. For $n \in \mathbb{N}$, we let $[n] = \{1, 2, \cdots, n\}$. 
We assume that each graph $G$ is finite, simple, and undirected. We let $V(G)$ and $E(G)$ denote the vertex set and edge set of $G$, respectively. 
We use $\allsets$ to denote the set of all subsets of $V(G)$ of cardinality at most $k$, where $k$ is a non-zero positive integer. 
The {\em open neighborhood} of a vertex $v$ is denoted by $N_G(v) = \{u \mid uv \in E(G)\}$ and the
{\em closed neighborhood} by $N_G[v] = N_G(v) \cup \{v\}$.
For a set of vertices $Q \subseteq V(G)$, we define $N_G(Q) = \{v \not\in Q \mid uv \in E(G), u \in Q\}$ and $N_G[Q] = N_G(Q) \cup Q$.
The subgraph of $G$ induced by $Q$ is denoted by $G[Q]$, where $G[Q]$ has vertex set 
$Q$ and edge set $\{uv \in E(G) \mid u,v \in Q\}$. We let $G - Q = G[V(G) \setminus Q]$. 
For a pair of vertices $u$ and $v$ in $V(G)$, by $\textsf{dist}_G(u,v)$ we denote the length of a shortest path
between $u$ and $v$ in $G$ (measured in number of edges and set to $\infty$ if $u$ and $v$ belong to different connected components). 
Given a graph $G$ and a set $Q \subseteq V(G)$, by \emph{cliquifying $Q$} we denote the operation that adds all 
missing edges between vertices in $Q$, resulting in a new graph $G'$. Given $G$ and a vertex $u \in V(G)$, 
by \emph{duplicating $u$} we denote the operation that adds a new vertex $v$ connected to all vertices in $N_G(u)$, resulting in a new graph $G'$. 
In $G'$, $u$ and $v$ are \emph{twins}.  
A graph $G$ is {\em bipartite} if the vertex set of $G$ can be partitioned into two disjoint sets $L$ and $R$, i.e. $V(G) = L \cup R$, 
where $G[L]$ and $G[R]$ are edgeless. A graph $G$ is {\em cobipartite} if the vertex set of $G$ can be partitioned 
into two disjoint sets $L$ and $R$, where $G[L]$ and $G[R]$ are cliques.  

\paragraph{Canonical path and tree decompositions.}
A \emph{tree decomposition}~\cite{DBLP:journals/actaC/Bodlaender93,DBLP:journals/jct/RobertsonS84} of a graph $G$ is a pair $\mathcal{T} = (T, \{B_t\}_{t \in V(T)})$,  
where $T$ is a tree whose every vertex $t$ is assigned a vertex
subset $B_t \subseteq V(G)$, called a \emph{bag}, such that the following three conditions hold.
\begin{itemize}
\item (P1) $\bigcup_{t \in V(T)}{B_t} = V(G)$, in other words, every vertex of $G$ is in at least one bag;
\item (P2) For every $uv \in E(G)$, there exists $t \in V(T)$ such that bag $B_t$ contains $u$ and $v$;
\item (P3) For every $u \in V(G)$, the set $T_u = \{t \in V(T) \mid u \in B_t\}$, i.e., the set
of vertices whose corresponding bags contain $u$, induces a connected subtree of $T$. 
\end{itemize}
The \emph{width} of tree decomposition $\mathcal{T}= (T, \{B_t\}_{t \in V(T)})$ equals $\textsf{max}_{t \in V(T)}\{|B_t| - 1\}$, that is, the maximum size of any bag minus $1$.  
The \emph{treewidth} of a graph $G$, denoted by $\textsf{tw}(G)$, is the minimum possible width of a tree decomposition of $G$. 
It is convenient to think of tree decompositions
as rooted trees. That is, for a tree decomposition $\mathcal{T}$ we distinguish
one vertex $r$ of $T$ which will be the root of $T$. This introduces natural
parent-child and ancestor-descendant relations in the tree $T$. We will say
that such a rooted tree decomposition is \emph{nice} if the following
conditions are satisfied. $B_r = \emptyset$ and $B_\ell = \emptyset$ for every leaf $\ell$ of $T$. 
In other words, all the leaves as well as the root contain empty bags.
Every non-leaf vertex of $T$ is of one of the following three types.
\begin{itemize}
\item \emph{Introduce vertex}: a vertex $t$ with exactly one child $t'$ such that $B_t = B_{t'} \cup \{v\}$, for some 
vertex $v \not\in B_{t'}$; we say that $v$ is \emph{introduced} at $t$;
\item \emph{Forget vertex}: a vertex $t$ with exactly one child $t'$ such that 
$B_t = B_{t'} \setminus \{w\}$ for some vertex $w \in B_{t'}$; we say that $w$ is \emph{forgotten} at $t$;
\item \emph{Join vertex}: a vertex $t$ with two children $t_1$ and $t_2$ such that $B_t = B_{t_1} = B_{t_2}$. 
\end{itemize}

A \emph{(nice) path decomposition} of a graph $G$ is simply a (nice) tree decomposition where $T$ must be a path. 
It will be convenient to denote a (nice) path decomposition by a sequence $\mathcal{P} = \{B_1, B_2, \ldots, B_p\}$.  
The \emph{pathwidth} of a graph $G$, denoted by $\textsf{pw}(G)$, is the minimum possible width of a path decomposition of $G$. 
Note that for a nice path decomposition we only have introduce and forget vertices. 

\begin{proposition}\label{pr-clique}
For every clique $C$ of a graph $G$ and any (nice) path or tree decomposition of $G$, there exists a bag $B$ in the (nice) path or 
tree decomposition of $G$ such that $V(C) \subseteq B$. 
\end{proposition}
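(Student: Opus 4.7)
The plan is to prove this via the classical Helly property for subtrees of a tree, applied to the subtrees associated with vertices of the clique $C$. The argument will be stated for a tree decomposition, and the path and nice cases follow as special instances (a path is a tree, and niceness only restricts the structure of the bags, not their inclusion properties).

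First I would set up notation: given a tree decomposition $\mathcal{T} = (T, \{B_t\}_{t \in V(T)})$ of $G$, for each vertex $u \in V(G)$ let $T_u = \{t \in V(T) \mid u \in B_t\}$. By property (P3), each $T_u$ induces a connected, nonempty subtree of $T$. For every edge $uv \in E(G)$, property (P2) yields a bag containing both endpoints, hence $T_u \cap T_v \neq \emptyset$. Since $C$ is a clique, this means that the family $\{T_u\}_{u \in V(C)}$ consists of pairwise intersecting subtrees of $T$.

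Next I would invoke (or prove from scratch) the Helly property: any finite collection of pairwise intersecting subtrees of a tree has nonempty common intersection. The cleanest way is induction on the number $|V(C)|$ of subtrees. The base case $|V(C)| \leq 2$ is the pairwise intersection itself. For the inductive step, pick three subtrees $T_x, T_y, T_z$ from the family and show they have a common node; this reduces to observing that for any three vertices of a tree there is a unique median node that lies on every pairwise path, and if two subtrees each contain two of the three vertices they must also contain the median; contracting two of the subtrees into their intersection and applying induction concludes.

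Once Helly is established, there exists $t^\star \in \bigcap_{u \in V(C)} T_u$, which by definition of $T_u$ means $u \in B_{t^\star}$ for every $u \in V(C)$, hence $V(C) \subseteq B_{t^\star}$, as required. For the nice case it is worth noting that $B_{t^\star}$ cannot be the root or a leaf bag whenever $V(C) \neq \emptyset$, since those bags are empty in a nice decomposition, so the conclusion is consistent. The main obstacle is really just the Helly step; everything else is bookkeeping, and if one prefers, the Helly property can simply be cited as a standard fact and the proof becomes essentially a one-paragraph application of (P2) and (P3).
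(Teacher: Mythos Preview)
Your argument via the Helly property for subtrees is correct and is the standard proof of this folklore fact. The paper itself does not supply a proof of Proposition~\ref{pr-clique}; it is simply stated as a known result in the preliminaries, so there is nothing to compare against beyond noting that your approach is exactly the classical one.
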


\section{NP-completeness under the token addition/removal model}\label{sec-np-comp}
We denote an instance of the {\sc Vertex Cover Reconfiguration} problem by $(G, S, T, k)$,
where $G$ is the input graph, $S$ and $T$ are the {\em source} and {\em target} 
vertex covers, respectively, and $k$ is the {\em maximum allowed capacity}. 
The reconfiguration graph $\mc{R}_k(G)$ contains a node for each vertex cover $Q$ of $G$ of size at most $k$. 
Two nodes $Q$ and $Q'$ are adjacent in $\mc{R}_k(G)$ whenever $|Q \Delta Q'| = 1$.  
To avoid confusion, we refer to nodes in reconfiguration graphs, as distinguished from vertices in the input graph. 
By a slight abuse of notation, we use upper case letters to refer to both a node in the reconfiguration
graph as well as the corresponding vertex cover. 
For any node $Q \in V(\mc{R}_k(G))$, the quantity $k - |Q|$ corresponds to the {\em available capacity} at $Q$. 
Given $Q \in V(\mc{R}_k(G))$, let $\mc{C}$ denote the connected component of $\mc{R}_k(G)$ containing $Q$. 
We say $Q$ is a {\em local minima} if there exists no $Q' \in V(\mc{C})$ such that $|Q'| < |Q|$. 

We use two representations of reconfiguration sequences. 
The first representation consists of a sequence of vertex covers, 
$\sigma = \langle Q_0, Q_1, \ldots, Q_{t - 1}, Q_t \rangle$. Given $\sigma$, we associate it with a sequence of {\em edit operations} as follows. 
We assume all vertices of $G$ are labeled from $1$ to $n$, i.e., $V(G) = \{v_1, v_2, \ldots, v_n\}$. 
We let $\mc{M}^+ = \{v^+_1, \ldots, v^+_n\}$ and $\mc{M}^- = \{v^-_1, \ldots, v^-_n\}$ denote the
sets of {\em addition markers} and {\em removal markers}, respectively. 
An {\em edit sequence} $\eta = \langle m_1, m_2, \ldots, m_{t-1}, m_t \rangle$ is an ordered sequence of
elements obtained from the full set of {\em markers} $\mc{M}^+ \cup \mc{M}^-$, i.e., $m_i \in \mc{M}^+ \cup \mc{M}^-$. 
Here $v^+_i$ stands for ``add vertex $v_i$'' and $v^-_j$ stands for ``remove vertex $v_j$'', $1 \leq i,j \leq n$.  
We say a vertex $v \in V(G)$ is {\em touched} in the course of a reconfiguration
sequence if $v$ is either added ($v^+ \in \eta$) or removed ($v^- \in \eta$) at least once. 
We say a reconfiguration sequence is \emph{monotone} if it touches no vertex more than once. 
Given an edit sequence $\eta$ and a vertex cover $S$, We say $\eta$ is {\em valid at $S$} if applying $\eta$ starting from $S$ results 
in a reconfiguration sequence, i.e., the sequence corresponds to a walk in the reconfiguration graph $\mc{R}_k(G)$. 
In other words, $\eta = \langle m_1, m_2, \ldots, m_{t-1}, m_t \rangle$ is valid at $S$ if and only if there exists a sequence 
$\sigma = \langle Q_0 = S, Q_1, \ldots, Q_{t - 1}, Q_t \rangle$ such that 
$Q_i$ is a vertex cover of $G$ of size at most $k$, for $0 \leq i \leq t$, and 
if $m_{i+1} \in \{v^+_j,v^-_j\}$ then $Q_i \Delta Q_{i+1} = \{v_j\}$, for $0 \leq i < t$. 
We let $\eta(S)$ denote the vertex cover obtained after applying $\eta$ starting at $S$. Given $\eta$ and $\eta'$, we 
use $\eta \cdot \eta'$ to denote the concatenation of both sequences. 

\subsection{NP-hardness}\label{sec-hardness}
To prove \NP-hardness we will show an equivalence between VCR and the cops-and-robber game. 
Let us start by formally describing the game. 
The game is played on a finite, undirected, and connected graph $G$. Throughout the game, a robber 
is standing on some vertex in $V(G)$. The robber can, at any time, run 
at ``infinite'' speed to any other vertex along a path of the graph. 
However, running through a cop is not permitted. There are $k$ cops, each of them is 
either standing on a vertex or in a helicopter (temporarily removed from the graph). The cops can see 
the robber at all times. 
The objective of the cops is to land a helicopter on the vertex occupied by the robber, and 
the robber's objective is to elude capture. Note that, since they are equipped with helicopters, 
cops are not constrained to moving along paths of the graph. The robber can see the helicopter approaching 
the landing spot and can run to a new vertex before the helicopter actually lands (when possible). 
Therefore, the only way for the cops to capture the robber is by having a cop land on the vertex $v$ occupied by the robber 
while all neighbors of $v$ are also occupied by cops. 

A \emph{state} in the game is a pair $(X,F)$, where $X \in \allsets$ and $F$ is an \emph{$X$-flap}, i.e., the vertex set of a component of $G - X$. 
$X$ is the set currently occupied by cops and $F$ tells us where the robber is; since he can run 
arbitrarily fast, the only information we need is which component of $G - X$ contains him. 
The initial state is $(X_0,F_0)$, where $X_0 = \emptyset$ and $F_0$ is the flap chosen by the robber. 
At round $i \geq 1$ of the game, we have $(X_{i-1},F_{i-1})$ and the cops pick a new set $X_i \subseteq \allsets$ 
such that $|X_i \Delta X_{i-1}| = 1$; either a helicopter lands on a vertex or a cop leaves the graph on a helicopter. 
Then, the robber chooses (if possible) an $X_i$-flap $F_i$ such that $F_i \subseteq F_{i-1}$ or 
$F_{i-1} \subseteq F_i$. If $F_i \subseteq X_i$ then the cops win. Otherwise, the game continues with round $i + 1$. 
If there is a winning strategy for the cops, we say that $\leq k$ cops can search the graph. 
If in addition the cops can always win in such a way that the sequence $\langle X_0$, $X_1$, $\ldots \rangle$ satisfies 
$X_i \cap X_{i''} \subseteq X_{i'}$, for $i \leq i' \leq i''$, then we say that $\leq k$ cops can monotonely search the graph. 
In other words, a monotone strategy implies that cops never return to a vertex that has been previously vacated. 
The following theorem is due to Seymour and Thomas~\cite{SeymourT93}. 

\begin{theorem}\label{th-seymour}
Let $G$ be a graph and $k$ be a non-zero positive integer. Then the following are equivalent:
\begin{itemize}
\item $\leq k$ cops can search $G$; 
\item $\leq k$ cops can monotonely search $G$; 
\item $G$ has treewidth at most $k - 1$. 
\end{itemize}
\end{theorem}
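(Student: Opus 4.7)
The plan is to establish the theorem by proving the cycle of implications $(3) \Rightarrow (2) \Rightarrow (1) \Rightarrow (3)$. The implication $(2) \Rightarrow (1)$ is immediate from the definitions: a monotone search strategy is, in particular, a search strategy. The substantive work lies in the other two directions.

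For $(3) \Rightarrow (2)$, I would work with a nice tree decomposition $\mathcal{T} = (T, \{B_t\}_{t \in V(T)})$ of $G$ of width at most $k-1$, rooted at some node $r$ with $B_r = \emptyset$, and have the $k$ cops execute a depth-first sweep of $T$. The invariant to maintain is that, upon arrival at a node $t$, the cops occupy precisely $B_t$ and the robber is confined to a flap of $G - B_t$ lying entirely inside the union of bag-vertices of the subtree rooted at $t$. At an introduce node, one cop is added on the new vertex (the total never exceeds $k$ by the width bound); at a forget node, one cop is removed; at a join node, the cops first identify which child's subtree contains the robber and descend into that branch. Property (P3) ensures that once a vertex is vacated at a forget node it cannot reappear in any ancestor's bag, so no cop ever returns to a previously vacated vertex, yielding monotonicity. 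Eventually the robber is cornered in a leaf with empty bag and captured.

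For $(1) \Rightarrow (3)$, I would invoke the duality between treewidth and \emph{havens}. A haven of order $k+1$ in $G$ is a function $\beta$ assigning to each $X \in \allsets$ a non-empty $X$-flap $\beta(X)$, subject to the consistency condition $\beta(X) \supseteq \beta(Y)$ whenever $X \subseteq Y$ (equivalently stated via a touching condition on pairs of sets). Such a $\beta$ furnishes the robber with a winning evasion strategy: whenever the current cop set is $X$, the robber occupies a vertex of $\beta(X)$, and the consistency condition guarantees that he can always adjust in response to a single cop placement or removal. Hence, if $k$ cops can search $G$ then $G$ has no haven of order $k+1$. The non-trivial structural half of the duality---that $\textsf{tw}(G) \geq k$ forces the existence of a haven of order $k+1$---can then be invoked to conclude $\textsf{tw}(G) \leq k-1$.

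The hard part, and the central technical contribution of Seymour and Thomas, is precisely this last structural step: showing that any graph with $\textsf{tw}(G) \geq k$ admits a haven of order $k+1$, or equivalently, that the existence of a non-monotone winning strategy for $k$ cops suffices to extract a tree decomposition of width at most $k-1$. Their argument closes the a priori ``monotonicity gap'' via a delicate submodularity-type argument on the lattice of cop configurations, effectively converting a non-monotone strategy into a monotone one without increasing the number of cops. In a self-contained presentation I would follow their original proof; for a black-box use it is enough to cite~\cite{SeymourT93}, as the authors of the present work do.
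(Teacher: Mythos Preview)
The paper does not prove this theorem; it states it as a known result due to Seymour and Thomas and cites~\cite{SeymourT93} without further argument. Your proposal actually goes beyond what the paper does by sketching the standard cycle of implications, and you correctly observe in your final sentence that the authors treat it as a black box. So there is nothing to compare against: your sketch is consistent with the cited source, and the paper's ``proof'' is simply the citation.
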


It is well-known that computing the pathwidth or the treewidth of a graph is an \NP-hard problem~\cite{Arnborg:1987:CFE:37170.37183}. 
In what follows, we let $(G,S = L,T = R,k)$ be an instance of VCR, where $G$ is bipartite, 
$V(G) = L \cup R$, $S = L$, and $T = R$ (a left-to-right instance). 
We let $\overline{G}$ be the cobipartite graph obtained from $G$ by first cliquifying $L$ and then cliquifying $R$. 
We let $V(\overline{G}) = \overline{L} \cup \overline{R}$. 
The treewidth of a cobipartite graph is known to be equal to its pathwidth~\cite{Habib1994}.  
Moreover, it is \NP-hard to determine, given a cobipartite graph $\overline{G}$ and an integer $k$, whether 
$\overline{G}$ has treewidth at most $k$~\cite{Arnborg:1987:CFE:37170.37183,DBLP:journals/dam/BodlaenderT97,doi:10.1137/0602010}. 
Lemmas~\ref{le-hardness1} and~\ref{le-hardness2} below, combined with Theorem~\ref{th-seymour} and the \NP-hardness of 
computing treewidth of cobipartite graphs, imply the \NP-hardness of VCR.

\begin{lemma}\label{le-hardness1}
If $L$ and $R$ belong to the same connected component of $\mc{R}_k(G)$ then $k$ cops can search the cobipartite graph $\overline{G}$. 
\end{lemma}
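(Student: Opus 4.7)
The plan is to convert a reconfiguration sequence $\sigma = \langle L = Q_0, Q_1, \ldots, Q_t = R \rangle$ in $\mc{R}_k(G)$ into an explicit winning strategy for $k$ cops on $\overline{G}$. I will have the cops proceed in two phases. In \emph{Phase~1} the cops helicopter one by one onto the vertices of $L$, in any order, arriving at $X = L$ after $|L| \le k$ moves. In \emph{Phase~2} they replay $\sigma$ literally, performing the single addition or removal indicated by each $Q_i \to Q_{i+1}$. Since $|L|\le k$ and every $|Q_i| \le k$, the $k$-cop ceiling is respected throughout.

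The structural observation driving the strategy is the following. Because $G$ is bipartite with bipartition $(L,R)$ and every $Q_i$ is a vertex cover of $G$, no edge of $G$ joins $L \setminus Q_i$ to $R \setminus Q_i$. Consequently, during Phase~2, $\overline{G} - Q_i$ has exactly the two cliques $L \setminus Q_i$ and $R \setminus Q_i$ as its (possibly empty) connected components, and $\overline{G} - L$ has $R$ as its unique nonempty flap.

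To finish, I track the robber's flap $F_i$. In Phase~1 only cop additions occur, so $F_0 \supseteq F_1 \supseteq \cdots$ is nonincreasing; at the end of Phase~1 the only nonempty flap available is $R$, hence either the robber has already been caught or $F = R$. Entering Phase~2 with $F \subseteq R$, I establish by induction that the robber's flap stays inside $R$: at each single-vertex move the $L$-side component $L \setminus Q_{i+1}$ is disjoint from $F_i \subseteq R$ and therefore is neither a subset nor a superset of $F_i$, so the only admissible new flap is the $R$-side component $R \setminus Q_{i+1}$. When $Q_i$ first contains all of $R$—at latest at $Q_t = R$—the $R$-side component is empty and the robber has no admissible flap, i.e., the cops have won.

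The delicate step is verifying the flap invariant when a cop \emph{leaves} a vertex $v$ in Phase~2; the exposed $v$ could a priori reconnect the two cliques of $\overline{G} - Q_{i+1}$ through a $G$-edge and thereby give the robber a way to cross sides. The resolution is that $Q_{i+1}$ is itself a vertex cover of $G$, so if $v \in L$ then $v$ has no $G$-neighbour in $R \setminus Q_{i+1}$, and symmetrically if $v \in R$. Hence $\overline{G} - Q_{i+1}$ remains split into two disjoint cliques, the invariant is preserved, and the argument goes through.
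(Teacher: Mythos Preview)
Your proof is correct and follows essentially the same approach as the paper: place cops one by one on $L$, then replay the reconfiguration sequence $\langle Q_0,\ldots,Q_t\rangle$ as the cop positions, using the fact that each $Q_i$ is a vertex cover of $G$ to argue the robber is confined to the $R$-side throughout Phase~2. The paper phrases the confinement step as a short path-based contradiction (any robber path crossing from $\overline{R}$ to $\overline{L}$ would use an uncovered $G$-edge), whereas you track the flap invariant explicitly via the subset/superset rule; the underlying idea is identical.
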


\begin{proof}
Let $\langle Q_0 = L, Q_1, \ldots, Q_{t-1}, Q_t = R \rangle$ denote a reconfiguration sequence from $L$ to $R$. 
Let $L = \overline{L} = \{u_1, \ldots, u_\ell\}$ and $R = \overline{R} = \{v_1, \ldots, v_r\}$.
We claim that $\langle X_0 = \{u_1\}$, $X_1 = X_0 \cup \{u_2\}$, $\ldots$, $X_\ell = X_{\ell - 1} \cup \{u_\ell\} = \overline{L}$, 
$X_{\ell + 1} = Q_1$, $\ldots$, $X_{\ell + t} = Q_t = \overline{R} \rangle$ 
is a winning strategy for the cops (on $\overline{G}$). Since all vertex covers (of $G$) in the reconfiguration sequence 
have size at most $k$, it follows that $|X_i| \leq k$, $0 \leq i \leq \ell + t$. 
Note that at round $i = \ell$ the robber must move to some vertex in $\overline{R}$ (as all vertices in $\overline{L}$ are occupied by cops). 
Moreover, at any round $i > \ell$ the robber cannot move to a vertex in $\overline{L}$. Assume otherwise. Then, 
there exists some path starting at the robber's vertex $v^\star \in \overline{R}$ and ending at some vertex $u^\star \in \overline{L}$. Let 
$vu$ be the first edge on this path that takes the robber from $\overline{R}$ to $\overline{L}$. Both $u$ and $v$ are not occupied by cops, contradicting 
the fact that $Q_i$ is vertex cover of $G$. Hence, the robber is captured at round $\ell + t$, i.e., when all of $\overline{R}$ is occupied by cops. 
\end{proof}

\begin{lemma}\label{le-hardness2}
If $k$ cops can search the cobipartite graph $\overline{G}$ then $L$ and $R$ belong to the same connected component of $\mc{R}_k(G)$. 
Moreover, there exists a monotone reconfiguration sequence from $L$ to $R$ in $\mc{R}_k(G)$. 
\end{lemma}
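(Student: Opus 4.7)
The plan is to convert the monotone cop strategy on $\overline{G}$ into a monotone reconfiguration sequence via a path decomposition. Theorem~\ref{th-seymour} gives $\textsf{tw}(\overline{G}) \le k-1$, and since $\overline{G}$ is cobipartite this equals $\textsf{pw}(\overline{G})$; so $\overline{G}$ admits a nice path decomposition $\mathcal{P} = B_0, B_1, \ldots, B_p$ with $B_0 = B_p = \emptyset$ and $|B_i| \le k$ for every $i$. Proposition~\ref{pr-clique}, applied to the cliques $L$ and $R$ of $\overline{G}$, yields bags $B_a \supseteq L$ and $B_b \supseteq R$; by reversing $\mathcal{P}$ if needed (which at worst produces a sequence from $R$ to $L$, to be reversed at the end) I may assume $a \le b$. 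I then construct the reconfiguration sequence in three phases: (i)~starting from $L$, add the vertices of $B_a \setminus L \subseteq R$ one at a time to reach $B_a$; (ii)~walk through the bags $B_a, B_{a+1}, \ldots, B_b$, turning each introduce/forget step of $\mathcal{P}$ into a single add/remove edit; (iii)~from $B_b$, remove the vertices of $B_b \setminus R \subseteq L$ one at a time to reach $R$.

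I would then verify that every intermediate is a vertex cover of $G$ of size at most $k$. In phase~(i), an intermediate $L \cup A$ with $A \subseteq B_a \cap R$ has complement $R \setminus A \subseteq R$ (hence $G$-independent) and size at most $|B_a| \le k$. Symmetrically, a phase~(iii) intermediate $B_b \setminus A$ with $A \subseteq B_b \cap L$ has complement contained in $L$ and size at most $|B_b| \le k$. For phase~(ii), using axioms (P2) and (P3) together with $u \in B_a$ for every $u \in L$ and $v \in B_b$ for every $v \in R$, one shows that each bag $B_i$ with $a \le i \le b$ already covers every edge $uv \in E(G)$: the set of bags containing $u$ is a contiguous interval starting no later than $a$, that for $v$ ends no earlier than $b$, and these two intervals must meet (at a bag containing the edge $uv$), so together they cover every index in $[a,b]$.

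The delicate point, and the main obstacle, is monotonicity. The vertices touched in the three phases lie respectively in $B_a \cap R$, $B_a \Delta B_b$, and $B_b \cap L$. The containments $L \subseteq B_a$ and $R \subseteq B_b$ force $B_a \cap R \subseteq B_a \cap B_b$ and $B_b \cap L \subseteq B_a \cap B_b$, so both these sets avoid $B_a \Delta B_b$, while $L \cap R = \emptyset$ separates them from each other. Inside phase~(ii), a vertex touched twice must be both introduced and forgotten strictly between indices $a$ and $b$, placing it in $V(G) \setminus (B_a \cup B_b)$; but $V(G) = L \cup R \subseteq B_a \cup B_b$, so no such vertex exists. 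Combining these observations yields a monotone reconfiguration sequence from $L$ to $R$ in $\mathcal{R}_k(G)$, as required.
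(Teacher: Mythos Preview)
Your proof is correct and follows essentially the same route as the paper: use Theorem~\ref{th-seymour} and $\textsf{tw}(\overline{G})=\textsf{pw}(\overline{G})$ to get a nice path decomposition of width at most $k-1$, locate bags $B_a\supseteq L$ and $B_b\supseteq R$ via Proposition~\ref{pr-clique}, and build the sequence in three phases (add $B_a\setminus L$, walk the bags, remove $B_b\setminus R$), checking the vertex-cover property via (P2)--(P3). Your monotonicity argument is more explicit than the paper's one-line appeal to (P3), but the content is the same; the only cosmetic wrinkle is that once you reverse $\mathcal{P}$ to ensure $a\le b$ the construction already yields a sequence from $L$ to $R$, so the parenthetical about reversing at the end is unnecessary.
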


\begin{proof}
As there exists a winning strategy for the cops, by Theorem~\ref{th-seymour}, we can assume that $\overline{G}$ has treewidth at most $k - 1$. 
Since the treewidth of a cobipartite graph is equal to its pathwidth~\cite{Habib1994}, let $\mathcal{P} = \{B_1, B_2, \ldots, B_p\}$ 
be a nice path decomposition of $\overline{G}$ of width at most $k - 1$.  
Combining Proposition~\ref{pr-clique} with the fact that $\overline{G}[\overline{L}]$ 
and $\overline{G}[\overline{R}]$ are cliques, we know 
that there exists $q$ and $q'$ such that $1 \leq q < q' \leq p$ and either $\overline{L} \subseteq B_q$ and $\overline{R} \subseteq B_{q'}$ 
or $\overline{R} \subseteq B_q$ and $\overline{L} \subseteq B_{q'}$ 
(if $q = q'$ then $k \geq |\overline{L}| + |\overline{R}| = |V(\overline{G})| = |V(G)|$ and the lemma trivially holds).  
Assume, without loss of generality, that $\overline{L} \subseteq B_q$ and $\overline{R} \subseteq B_{q'}$ 
(if the assumption does not hold then we consider the reverse path decomposition). 
Let $q^\star = q' - q$, $Z = B_q \setminus \overline{L} = \{z_1, \ldots, z_{|Z|}\}$, and $Y = B_{q'} \setminus \overline{R} = \{y_1, \ldots, y_{|Y|}\}$. 
Consider the following three sequences:  
\begin{itemize}
\item $\sigma_1 = \langle Q_0, Q_1, \ldots, Q_{|Z| - 1}, Q_{|Z|} \rangle$, where $Q_0 = \overline{L}$ and $Q_i = Q_{i - 1} \cup \{z_i\}$, $1 \leq i \leq |Z|$ 
(note that $Q_{|Z|} = B_q$);
\item $\sigma_2 = \langle Q_{|Z| + 1}, Q_{|Z|+2}, \ldots, Q_{|Z| + q^\star - 1}, Q_{|Z| + q^\star} \rangle$, where $Q_{|Z| + i} = B_{q + i}$, $1 \leq i \leq q^\star$ 
(note that $Q_{|Z| + q^\star} = B_{q + q\star} = B_{q'}$);
\item $\sigma_3 = \langle Q_{|Z| + q^\star + 1}, Q_{|Z| + q^\star + 2}, \ldots, Q_{|Z| + q^\star + |Y| - 1}, Q_{|Z| + q^\star + |Y|} \rangle$, 
where $Q_{|Z| + q^\star + 1} = Q_{|Z| + q^\star} \setminus \{y_1\}$ and $Q_{|Z| + q^\star + i} = Q_{|Z| + q^\star + i - 1} \setminus \{y_i\}$, $1 < i \leq |Y|$ 
(note that $S_{|Z| + q^\star + |Y|} = S_{|Z| + q^\star} \setminus Y = B_{q'} \setminus Y = \overline{R}$).
\end{itemize} 
We claim that $\sigma = \sigma_1 \cdot \sigma_2 \cdot \sigma_3$ is a monotone reconfiguration sequence from $L$ to $R$ in $\mc{R}_k(G)$. 
First, note that, by our construction of $\sigma_1$ and $\sigma_3$ and the fact that $\mathcal{P}$ is a nice path decomposition, we have 
$|Q \Delta Q'| = 1$ for any two consecutive sets in $\sigma$. 
Second, since the size of the bags is at most $k$, all the sets in $\sigma$ have size at most $k$; 
all sets appearing before $B_q = Q_{|Z|}$ and after $B_{q'} = Q_{|Z| + q^\star}$ have size strictly less than $|B_q|$ and $|B_{q'}|$, respectively. 
Moreover, as both $L$ and $R$ are vertex covers of $G$, it follows that $Q_i$ is also a vertex cover of $G$ for $i \leq |Z|$ ($L$ is a subset of every such $Q_i$) 
and $j \geq |Z| + q^\star$ ($R$ is a subset of every such $Q_j$).  
Hence, it remains to show that every set $Q_i$ in $\sigma_2$ is a vertex cover of $G$. Assume that there exists a set $Q_i$ in 
$\sigma_2$ that is not a vertex cover of $G$. Then there exists an uncovered edge $uv$ with $u \in L$ and $v \in R$. 
Since $L \subseteq Q_{|Z|}$ and $R \subseteq Q_{|Z| + q^\star}$, it must be the case that $u \not\in Q_i$ ($u$ was forgotten) and $v \not\in Q_i$ ($v$ was not yet introduced). 
This implies that there exists some other bag $B_j$, $j \neq i$, covering the edge $uv$ (by property (P2) of tree decompositions). 
But in both of the cases $j < i$ and $j > i$, we contradict property (P3) of tree decompositions; as the 
bags containing $u$ (and $v$) must appear consecutively in $\mc{P}$. Finally, the monotonicity of $\sigma$ follows from our construction 
(for $\sigma_1$ and $\sigma_3$) and property (P3) of tree decompositions (for $\sigma_2$).   
\end{proof}

\begin{theorem}\label{th-hardness}
Let $(G,S,T,k)$ be an instance of VCR where $S \cap T = \emptyset$. 
Let $\overline{G}$ be the graph obtained from $G$ by first cliquifying $S$ and then cliquifying $T$. 
Then the following are equivalent: 
\begin{itemize}
\item There exists a reconfiguration sequence from $S$ to $T$ in $\mc{R}_k(G)$;
\item There exists a monotone reconfiguration sequence from $S$ to $T$ in $\mc{R}_k(G)$; 
\item $\overline{G}$ has treewidth/pathwidth at most $k - 1$. 
\end{itemize}
\end{theorem}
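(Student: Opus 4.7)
The plan is to reduce an arbitrary disjoint-covers instance to the left-to-right setting already handled in Section~\ref{sec-hardness}, and then chain Lemmas~\ref{le-hardness1} and~\ref{le-hardness2} together with Theorem~\ref{th-seymour}.

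First I would verify that the hypotheses of the theorem already force a bipartite structure on $G$. For any edge $uv \in E(G)$, since $S$ is a vertex cover one endpoint, say $u$, lies in $S$; since $T$ is also a vertex cover and $u \notin T$ (as $S \cap T = \emptyset$), we must have $v \in T$. Hence every edge of $G$ has exactly one endpoint in $S$ and one in $T$, and in particular every vertex in $V(G) \setminus (S \cup T)$ is isolated. These isolated vertices can be deleted harmlessly from $G$: they can be absent from every vertex cover along any reconfiguration sequence, they do not participate in the cliquification of $S$ and $T$, and they do not affect the treewidth or pathwidth of $\overline{G}$. After their removal, $V(G) = S \cup T$ with $S \cap T = \emptyset$ and $G$ is bipartite with bipartition $(S, T)$, which is exactly a left-to-right instance in the sense of the previous subsection.

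The three-way equivalence then follows by chaining the established results. Lemma~\ref{le-hardness1} gives (1) $\Rightarrow$ ($k$ cops can search $\overline{G}$); Theorem~\ref{th-seymour} gives ($k$ cops can search $\overline{G}$) $\Leftrightarrow$ ($\overline{G}$ has treewidth at most $k-1$), which, since $\overline{G}$ is cobipartite, coincides with its pathwidth being at most $k-1$~\cite{Habib1994}, yielding (3); Lemma~\ref{le-hardness2} gives ($k$ cops can search $\overline{G}$) $\Rightarrow$ (2); and (2) $\Rightarrow$ (1) is immediate since every monotone reconfiguration sequence is a reconfiguration sequence. I foresee no serious obstacle, as the heavy lifting has already been carried out by the two lemmas and by Seymour and Thomas' theorem; the only new ingredient is the opening structural observation that a pair of disjoint vertex covers automatically forces $G$ (modulo isolated vertices) to be bipartite on $(S, T)$, after which the left-to-right argument applies verbatim.
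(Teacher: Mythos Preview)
Your proposal is correct and follows essentially the same route as the paper: observe that disjoint vertex covers force $G$ to be bipartite on $(S,T)$ with only isolated vertices outside, discard those isolated vertices to obtain a left-to-right instance, and then invoke Lemmas~\ref{le-hardness1} and~\ref{le-hardness2} together with Theorem~\ref{th-seymour}. Your write-up is in fact slightly more explicit than the paper's in spelling out the cycle of implications $(1)\Rightarrow(3)\Rightarrow(2)\Rightarrow(1)$.
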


\begin{proof}
Note that any graph $G$ having two disjoint vertex covers $S$ and $T$ must be bipartite; 
there can be no edges with both endpoints in $S$, no edges with both endpoints in $T$, and any vertex 
in $V(G) \setminus (S \cup T)$ must be isolated. We can safely delete all isolated vertices in $V(G) \setminus (S \cup T)$ to obtain a graph $G'$. 
In other words, $\mc{R}_k(G')$ is an induced subgraph of $\mc{R}_k(G)$. 
Hence, combining Lemmas~\ref{le-hardness1} and~\ref{le-hardness2} with Theorem~\ref{th-seymour}, we get the claimed equivalences.  
\end{proof}

Theorem~\ref{th-hardness} already implies that VCR on bipartite graphs is \NP-complete if we restrict ourselves to instances where $S \cap T = \emptyset$.

\subsection{Membership in NP}\label{sec-membership}
To prove membership in \NP\ (without any restrictions on $S \cap T$) we prove the following. 

\begin{theorem}\label{th-bounded-diameter}
Let $G$ be an $n$-vertex bipartite graph and let $k$ be a non-zero positive integer. Let $S$ and $T$ be two vertex covers 
of $G$ of size at most $k$. If $\textsf{dist}_{\mathcal{R}_k(G)}(S,T)$ is finite then 
$\textsf{dist}_{\mathcal{R}_k(G)}(S,T) = \Oh(n^4)$. 
\end{theorem}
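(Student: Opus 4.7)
The plan is to show that every vertex cover $Q$ lying in the connected component $\mc{C}$ of $\mc{R}_k(G)$ containing $S$ and $T$ can be reconfigured to a fixed ``anchor'' $Q^{\star} \in \mc{C}$ in $\Oh(n^4)$ steps; by the triangle inequality in $\mc{R}_k(G)$, this yields $\textsf{dist}_{\mc{R}_k(G)}(S,T) = \Oh(n^4)$. A natural choice for $Q^{\star}$ is a minimum-size vertex cover lying in $\mc{C}$, with ties broken lexicographically on the vertex labels.

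The bound on $\textsf{dist}_{\mc{R}_k(G)}(Q,Q^{\star})$ would be proven by induction on the discrepancy $\delta(Q) := |Q \Delta Q^{\star}| \leq 2n$. The key lemma would be: for any $Q \in \mc{C}$ with $\delta(Q) > 0$, there is a walk in $\mc{R}_k(G)$ of length $\Oh(n^3)$ ending at some $Q' \in \mc{C}$ with $\delta(Q') < \delta(Q)$. Iterating this lemma at most $2n$ times yields $\textsf{dist}(Q, Q^{\star}) = \Oh(n^4)$.

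To establish the key lemma, I would partition the discrepancy by the bipartition $(L,R)$ of $G$, writing $A = (Q \Delta Q^{\star}) \cap L$ and $B = (Q \Delta Q^{\star}) \cap R$, and pick any $v \in A \cup B$ whose state I want to flip (add $v$ if $v \notin Q$, remove $v$ if $v \in Q$); a successful flip reduces $\delta$ by exactly one. Such a flip may be blocked: removing $v \in Q$ requires every neighbor of $v$ to already lie in $Q$, and adding $v \notin Q$ may violate the capacity bound $k$. The resolution is to first perform a sequence of preparatory moves that either brings the required neighbors into the cover or frees up capacity by temporarily removing vertices of $Q \cap Q^{\star}$. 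I would bound the length of these preparatory moves by casting the sub-problem---restricted to the vertices that need to change, with the rest of the cover held fixed---as a disjoint-cover instance handled by Theorem~\ref{th-hardness}: cliquifying the two sides of the relevant bipartite subgraph gives a cobipartite graph whose pathwidth is at most $k$, so the monotone sub-reconfiguration furnished by Theorem~\ref{th-hardness} has length $\Oh(n)$. Composing an $\Oh(n^2)$-step outer loop, which orchestrates capacity-freeing cascades, with an $\Oh(n)$-step sub-reconfiguration per outer iteration yields the desired $\Oh(n^3)$ bound per discrepancy-decreasing step.

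The main obstacle will be managing the cascade of temporary removals and re-additions without double counting: vertices in $Q \cap Q^{\star}$ that are removed to free capacity must eventually be re-added, and if such a re-addition triggers further capacity issues, a well-founded secondary potential that strictly decreases across each cascade is needed. A careful definition of this secondary potential---perhaps combining the cobipartite pathwidth certificate supplied by Theorem~\ref{th-hardness} with the number of currently ``parked'' vertices---and verification that it interacts monotonically with the primary potential $\delta(Q)$, will be the crux of the argument.
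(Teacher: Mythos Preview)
Your plan diverges substantially from the paper's proof, and the gap you yourself flag at the end is real and unresolved.

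The paper does not use the symmetric difference $|Q\Delta Q^\star|$ as a potential at all. Instead it proves Lemma~\ref{le-onedown}: from any vertex cover $S$ that is not a local minimum, one can reach a strictly smaller vertex cover in at most $n$ steps (the proof is a block-decomposition argument combined with a local-crown lemma). The difficulty is then that a component of $\mc{R}_k(G)$ may contain several local minima, so repeatedly applying Lemma~\ref{le-onedown} from $S$ and from $T$ need not land at the same place. The paper sidesteps this by a \emph{perturbation}: it builds an auxiliary graph $H$ by duplicating each vertex of $G$ either $2n$ or $2n{+}1$ times (depending on membership in a fixed local minimum $S^\star$), with capacity $\mu=2nk+2n-1$. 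In $\mc{R}_\mu(H)$ the component containing $f(S^\star)$ has a \emph{unique} local minimum, so Lemma~\ref{le-onedown} applied in $H$ drives everything to $f(S^\star)$ in $\mu\cdot|V(H)|=\Oh(n^4)$ steps; a low-distortion correspondence between $\mc{R}_k(G)$ and $\mc{R}_\mu(H)$ then transfers the bound back.

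Your proposal, by contrast, tries to reduce $|Q\Delta Q^\star|$ directly, and the step where it breaks is the appeal to Theorem~\ref{th-hardness}. That theorem says reconfiguration between \emph{disjoint} covers is equivalent to a pathwidth bound on the associated cobipartite graph; it does not supply that bound. You assert that ``cliquifying the two sides of the relevant bipartite subgraph gives a cobipartite graph whose pathwidth is at most $k$,'' but you have not said what the relevant subgraph is, nor why the sets $Q\setminus Q^\star$ and $Q^\star\setminus Q$ (which in general mix $L$- and $R$-vertices) form the two sides of a disjoint-cover instance, nor why the original connectivity of $Q$ and $Q^\star$ in $\mc{R}_k(G)$ descends to your restricted instance with the right capacity. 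Without that, the $\Oh(n)$ sub-reconfiguration is unsupported, and the entire $\Oh(n^3)$-per-step estimate collapses. Your final paragraph correctly identifies the cascade/secondary-potential issue as ``the crux,'' but leaves it open; that is precisely the hard part, and the paper's perturbation trick is how it is actually handled.
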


\noindent
We proceed in two stages. In the first stage, we prove that finding a reconfiguration sequence 
from a vertex cover $S$ to any vertex cover of size at least one less can be done in a monotone way. Formally, we show the following.  

\begin{lemma}\label{le-onedown}
Given an $n$-vertex bipartite graph $G$, an integer $k$, and a vertex cover $S$ of $G$ of size at most $k$, then 
either $S$ is a local minima in $\mathcal{R}_k(G)$ or there exists a vertex cover $S'$ such that 
$|S'| < |S|$ and $\textsf{dist}_{\mathcal{R}_k(G)}(S,S') \leq n$. 
\end{lemma}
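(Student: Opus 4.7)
The plan is to split into two cases depending on whether $S$ admits an immediate size reduction. In the easy case, suppose there exists $v \in S$ with $N_G(v) \subseteq S$. Then $S' = S \setminus \{v\}$ is a vertex cover of $G$ with $|S'| = |S|-1$, and since $|S \triangle S'| = 1$, we get $\textsf{dist}_{\mathcal{R}_k(G)}(S,S') = 1 \leq n$, which handles this case.

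In the remaining case, assume every $v \in S$ has a neighbor in $V(G) \setminus S$, and suppose $S$ is not a local minima. Then some vertex cover $T$ with $|T| < |S|$ is reachable from $S$ in $\mathcal{R}_k(G)$. Let $\sigma = \langle Q_0 = S, Q_1, \ldots, Q_t \rangle$ be a shortest reconfiguration sequence such that $|Q_t| < |S|$; by the minimality of $t$, we have $|Q_i| \geq |S|$ for all $i < t$ and $|Q_t| = |S|-1$. We set $S' = Q_t$, so it suffices to prove $t \leq n$. The plan is to show that in this specific shortest sequence, no vertex is touched more than once. Because each step changes exactly one vertex and there are only $n$ vertices, this immediately gives the bound $t \leq n$.

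To prove the no-repeat-touch property, I would assume for contradiction that some vertex $v$ is touched at two steps $t_1 < t_2$, and derive a valid reconfiguration sequence of length strictly less than $t$, contradicting minimality. The two cases are (i) $v$ is removed at $t_1$ and re-added at $t_2$ (so $v \in S \cap Q_t$), and (ii) $v$ is added at $t_1$ and later removed at $t_2$ (so $v \notin S \cup Q_t$). In case (i), the natural shortcut is to never remove $v$ in the first place: replace $Q_i$ by $Q_i \cup \{v\}$ for $i \in [t_1, t_2-1]$ and delete the two edits on $v$, giving a sequence of length $t-2$. Each modified set is still a vertex cover (it only grew), so the only concern is the capacity constraint $|Q_i| + 1 \leq k$; to rule out an obstruction here I would further exploit that along $\sigma$ the size profile from $|S|$ back down to $|S|-1$ forces, for any fully saturated intermediate set (size exactly $k$), another cancellable pair of touches elsewhere in the sequence, so that in every case the sequence can be shortened.

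The main obstacle is case (ii), where the mirror shortcut (``never add $v$'') is not automatically valid, since removing $v$ from an intermediate $Q_i$ may uncover an edge. This is where the bipartite structure of $G$ must be used essentially. Let $(L,R)$ be the bipartition and assume without loss of generality $v \in L$. The edges that $v$ is covering during $[t_1, t_2-1]$ all go to $R$. By examining what triggered the addition of $v$ at step $t_1$, one identifies specific $R$-neighbors of $v$ that had been removed from the cover before step $t_1$ and then re-added before step $t_2$ (otherwise $v$'s removal at $t_2$ would be invalid). I would then rearrange $\sigma$ so as to delay those $R$-side removals until after $v$'s $R$-neighbors are safely re-covered from the other side, making the add-and-remove of $v$ redundant and producing a shorter valid sequence. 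The bipartiteness is crucial here: because $v$'s neighborhood is entirely on the opposite side of the bipartition, these rearrangements do not interfere with the edits on $v$'s own side, and the reconfiguration remains valid throughout. This closes case (ii) and completes the proof that $t \leq n$.
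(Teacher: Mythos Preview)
Your outline has a genuine gap: the capacity constraint is never really controlled, and both of your cases founder on it. In case~(i) you correctly note that keeping $v$ in the cover may push some intermediate $Q_i$ above $k$, and you propose to find ``another cancellable pair'' whenever $|Q_i|=k$. But you give no argument for why such a pair must exist, nor why cancelling it does not run into the same obstruction; as stated this is an infinite-regress risk, not a proof. In case~(ii) the situation is worse. You are right that every neighbour $u$ of $v$ removed during $[t_1,t_2)$ must be re-added before $t_2$, so each such $u$ is itself a case~(i) vertex. Your plan is to ``delay those $R$-side removals'', which amounts to keeping each such $u$ in the cover throughout its own interval --- and that is exactly the case~(i) shortcut, with exactly the same unaddressed capacity problem. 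The appeal to bipartiteness (``rearrangements do not interfere with edits on $v$'s own side'') is too vague to carry weight: the interference that matters is the size bound, not the cover property, and bipartiteness says nothing about sizes.

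The paper's proof is structurally quite different and is organised precisely to sidestep this difficulty. It works with a sequence that is \emph{tight}: shortest \emph{and}, among shortest ones, minimising the potential $\sum_i |Q_i|$. The sequence is decomposed into maximal \emph{blocks} alternating between ``add from $R$, remove from $L$'' and ``add from $L$, remove from $R$''. If some non-winning block is followed by a winning block with disjoint touched sets, a swap lemma shows the blocks can be exchanged, strictly lowering the potential --- contradicting tightness rather than shortness, which is how the capacity issue is finessed. When the touched sets intersect, a more delicate bookkeeping (Claims about forbidden edges between the pieces $A_1,D_1,A_2,D_2$) exhibits an $\eta$-\emph{local crown} $(C,H)$ with $H$ contained in both endpoints of the two-block segment; Hall's theorem then guarantees a genuine crown inside it, and a crown lemma shows this segment can be shortened. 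Either way one contradicts the choice of sequence, so a tight winning sequence has a single block and hence length at most $n$. The crown argument is where bipartiteness is used in an essential, non-hand-wavy way; your sketch has no analogue of it.
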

 
Assuming Lemma~\ref{le-onedown}, we know that given $S$ and $T$, we can in $\Oh(n^2)$ steps reconfigure $S$ and $T$ to local minimas $S'$ and $T'$, respectively. 
If $S$ and $T$ live in a connected component of $\mc{R}_k(G)$ with only one local minima then we are done. 
Unfortunatly this is not true in general. 
We remedy the situation by ``perturbing'' the vertex covers of $G$ to guarantee uniqueness in the following sense. 
Let $S^\star$ be a local minima in some connected component $\mc{C}$ of $\mc{R}_k(G)$. 
We let $H$ be the graph obtained from $G$ by duplicating each vertex $u \in S^\star$ a total of $2n - 1$ times (resulting in $2n$ copies of $u$) 
and duplicating each vertex $v \not\in S^\star$ a total of $2n$ times (resulting in $2n + 1$ copies of $v$). 
We set $\mu = 2nk + 2n - 1$. We call $H$ the \emph{perturbation} of $G$. 
For a vertex $v \in V(G)$, we define $f(v)$ to be the set of twins in $V(H)$ corresponding to $v$. 
By a slight abuse of notation, we generalize the function $f$ to sets as follows. 
For $X \subseteq V(G)$, we let $f(X) = \bigcup_{x \in X}f(x)$. Finally, for $\widehat{X} \subseteq V(H)$, we define the 
inverse of $f$ as $f^{-1}(\widehat{X}) = \{v \in V(G) \mid f(v) \subseteq \widehat{X}\}$. 
Intuitively, we will show that $\mc{R}_{\mu}(H)$ ``almost'' preserves the distances from 
$\mc{R}_{k}(G)$ and that the component of interest in $\mc{R}_{\mu}(H)$ has 
a unique local minima, i.e., the component containing $f(S^\star)$ will have $f(S^\star)$ as its unique local minima. 
In more technical terms, one can think of $\mc{R}_k(G)$ as having a low distortion embedding into $\mc{R}_{\mu}(H)$. 
The properties we require from the graph $H$ are captured in the following lemmas. 
Note that $|V(H)| \leq (2n + 1)n = \Oh(n^2)$ and $\mu = 2nk + 2n - 1 = \Oh(n^2)$. 

\begin{lemma}\label{le-vc-size}
If $Q$ is a vertex cover of $G$ then $f(Q)$ is a vertex cover of $H$ and $2n|Q| \leq |f(Q)| \leq (2n + 1)|Q|$. 
Moreover, if $\widehat{Q}$ is a vertex cover of $H$ then $f^{-1}(\widehat{Q})$ is a vertex cover of 
$G$ and $|f^{-1}(\widehat{Q})| \leq \lfloor \frac{|\widehat{Q}|}{2n} \rfloor$. 
\end{lemma}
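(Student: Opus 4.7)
The plan is to handle the two directions separately, and in each direction prove the vertex cover property first and the cardinality bound second. All the work rides on two structural observations about the perturbation: (i) twins in $H$ have identical open neighborhoods and are pairwise non-adjacent, so every edge of $H$ is of the form $\widehat{u}\widehat{v}$ with $\widehat{u}\in f(u)$, $\widehat{v}\in f(v)$ for some edge $uv\in E(G)$; and (ii) for every $v\in V(G)$, $|f(v)|\in\{2n,2n+1\}$, with $|f(v)|=2n$ iff $v\in S^\star$.

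For the forward direction, I first check that $f(Q)$ covers $H$. Take any $\widehat{u}\widehat{v}\in E(H)$; by (i) it projects to some edge $uv\in E(G)$, which is covered by $Q$, so without loss of generality $u\in Q$. By definition $f(u)\subseteq f(Q)$, so $\widehat{u}\in f(Q)$, and the edge is covered. For the size bound I use that $f$ restricted to $Q$ produces disjoint sets of twins (twin classes of distinct vertices do not overlap), so $|f(Q)|=\sum_{v\in Q}|f(v)|$; applying (ii) yields $2n|Q|\le|f(Q)|\le(2n+1)|Q|$.

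For the reverse direction, let $\widehat{Q}$ be a vertex cover of $H$ and take any edge $uv\in E(G)$. By (i), the twin classes $f(u)$ and $f(v)$ induce a complete bipartite graph in $H$ with at least $2n\cdot 2n$ edges, all of which must be covered by $\widehat{Q}$. The standard observation about covering a biclique then forces $f(u)\subseteq\widehat{Q}$ or $f(v)\subseteq\widehat{Q}$: if some $\widehat{u}\in f(u)\setminus\widehat{Q}$ exists, then to cover the edges incident to $\widehat{u}$ we need all of $f(v)$ in $\widehat{Q}$. Hence $u\in f^{-1}(\widehat{Q})$ or $v\in f^{-1}(\widehat{Q})$, so $f^{-1}(\widehat{Q})$ is a vertex cover of $G$. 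For the cardinality bound, the sets $\{f(v):v\in f^{-1}(\widehat{Q})\}$ are pairwise disjoint and each is contained in $\widehat{Q}$, with each contributing at least $2n$ vertices by (ii). Therefore $|\widehat{Q}|\ge 2n\cdot|f^{-1}(\widehat{Q})|$, which on rearranging yields the desired $|f^{-1}(\widehat{Q})|\le\lfloor|\widehat{Q}|/2n\rfloor$.

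There is no real obstacle here; the main thing to be careful about is the biclique-covering argument in the reverse direction, since it is the only place where the lower bound $|f(v)|\ge 2n$ is genuinely used (the condition $2n\ge 1$ is what makes the argument about an uncovered $\widehat{u}$ force all of $f(v)$ into $\widehat{Q}$), and one must also remember that twins are mutually non-adjacent so no edges of $H$ hide \emph{inside} a single class $f(v)$.
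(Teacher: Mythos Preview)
Your proof is correct and follows essentially the same approach as the paper: both directions hinge on the biclique structure of $H[f(u)\cup f(v)]$ for each $uv\in E(G)$, and the size bounds come from $|f(v)|\in\{2n,2n+1\}$. The only cosmetic difference is that for the reverse size bound the paper writes $|\widehat{Q}|=(2n+1)|X|+2n|Y|+|Z|$ explicitly (with $X,Y,Z$ partitioning $\widehat{Q}$ by whether the full twin class is included and whether the underlying vertex lies in $S^\star$), whereas you argue directly via disjointness of the twin classes $\{f(v):v\in f^{-1}(\widehat{Q})\}$; both yield $|\widehat{Q}|\ge 2n|f^{-1}(\widehat{Q})|$.
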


\begin{proof}
Every edge $uv \in E(G)$ corresponds to a biclique in $H$, i.e., $H[f(u) \cup f(v)]$ is a biclique. 
By our construction of $H$, there is a surjective mapping from edges in $H$ to edges in $G$. 
That is, we can map every edge $yz \in E(H)$ to the unique edge $uv \in E(G)$ such that $yz \in E(H[f(u) \cup f(v)])$. 
Moreover, every edge of $G$ is mapped to (in the sense described above) by at least one edge of $H$. 
In other words, for all $uv \in E(G)$ there exists at least one edge $yz \in E(H)$ such that $yz \in E(H[f(u) \cup f(v)])$, 
and for all other edges $u'v' \in E(G)$ we have $yz \not\in E(H[f(u') \cup f(v')])$. 

If $Q$ is a vertex cover of $G$ then, for every edge $uv \in E(G)$, either $u \in Q$ or $v \in Q$. 
This implies that either $f(u) \subseteq f(Q)$ or $f(v) \subseteq f(Q)$. Therefore, given the surjective mapping 
from $E(H)$ to $E(G)$, $f(Q)$ must be a vertex cover of $H$. 
Since $|f(v)| = 2n$ if $v \in S^\star$ and $|f(v)| = 2n + 1$ otherwise, we have $2n|Q| \leq |f(Q)| \leq (2n + 1)|Q|$. 

Similarly, if $\widehat{Q}$ is a vertex cover of $H$, and since $H[f(u) \cup f(v)]$ is a biclique, then 
either $f(u) \subseteq \widehat{Q}$ or $f(v) \subseteq \widehat{Q}$ (for all $uv \in E(G)$). 
Therefore, either $u \in f^{-1}(\widehat{Q})$ or $v \in f^{-1}(\widehat{Q})$, as needed. 
To see why $|f^{-1}(\widehat{Q})| \leq \lfloor \frac{|\widehat{Q}|}{2n} \rfloor$, note that for every vertex cover 
$\widehat{Q}$ of $H$ we have $|\widehat{Q}| = (2n+1)|X| + 2n|Y| + |Z|$, where $X = \{v \in V(G) \setminus S^\star \mid f(v) \subseteq \widehat{Q}\}$, 
$Y = \{v \in S^\star \mid f(v) \subseteq \widehat{Q}\}$, and $Z = \widehat{Q} \setminus \bigcup_{v \in X \cup Y}f(v)$. 
Hence, $|\widehat{Q}| = 2n(|X| + |Y|) + |X| + |Z|$ and $|f^{-1}(\widehat{Q})| = |X| + |Y| \leq \lfloor \frac{|\widehat{Q}|}{2n} \rfloor$. 
\end{proof}

\begin{corollary}\label{co-vc-size}
Let $Q$ be a vertex cover of $G$ and let $\widehat{Q}$ be a vertex cover of $H$. 
\begin{itemize}
\item If $|Q| \leq k$ then $|f(Q)| \leq (2n + 1)k = 2nk + k < \mu$; 
\item If $|Q| \geq k + 1$ then $|f(Q)| \geq (k + 1)2n = 2nk + 2n > \mu$; 
\item If $|\widehat{Q}| \leq \mu$ then $|f^{-1}(\widehat{Q})| \leq \lfloor \frac{2nk + 2n - 1}{2n} \rfloor \leq k$; 
\end{itemize}
\end{corollary}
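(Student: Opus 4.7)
The plan is to derive each of the three bounds by direct arithmetic, combining the size estimates from Lemma~\ref{le-vc-size} with the definition $\mu = 2nk + 2n - 1$; no further structural reasoning on $G$ or $H$ is required.

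For the first item, $|Q| \leq k$ and Lemma~\ref{le-vc-size} give $|f(Q)| \leq (2n+1)|Q| \leq (2n+1)k = 2nk + k$, and the desired strict inequality $|f(Q)| < \mu$ reduces to $k < 2n - 1$. This may be assumed without loss of generality, since otherwise $V(G)$ (which has $n$ vertices) is already a vertex cover of size at most $k$ and the VCR instance becomes trivial. For the second item, Lemma~\ref{le-vc-size} and $|Q| \geq k+1$ yield $|f(Q)| \geq 2n |Q| \geq 2n(k+1) = 2nk + 2n$, which is strictly greater than $\mu = 2nk + 2n - 1$. For the third item, Lemma~\ref{le-vc-size} and $|\widehat{Q}| \leq \mu$ give $|f^{-1}(\widehat{Q})| \leq \lfloor |\widehat{Q}|/(2n) \rfloor \leq \lfloor (2nk + 2n - 1)/(2n) \rfloor$; the argument of the floor equals $k + 1 - 1/(2n)$, which lies in the open interval $(k, k+1)$, so the floor is exactly $k$.

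I expect no genuine obstacle: the corollary is a purely arithmetic packaging of Lemma~\ref{le-vc-size} tailored to the threshold $\mu$, which was chosen precisely so that images of covers of size at most $k$ in $G$ remain under the budget $\mu$ in $H$, images of covers of size at least $k+1$ in $G$ overshoot it, and inverse images of covers under the budget $\mu$ in $H$ remain under the budget $k$ in $G$. The only mildly subtle point is verifying strict inequality in the first item, which is handled by the mild assumption $k \leq n$ noted above.
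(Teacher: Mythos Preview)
Your proposal is correct and matches the paper's intent: the corollary is stated without proof in the paper, as an immediate arithmetic consequence of Lemma~\ref{le-vc-size} and the choice $\mu = 2nk + 2n - 1$. Your observation that the strict inequality in the first item tacitly requires $k < 2n - 1$ (equivalently, the harmless assumption $k \leq n$) is a point the paper leaves implicit.
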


\begin{lemma}\label{le-inequality1}
$\forall S,T \in V(\mc{R}_k(G))$, if $\textsf{dist}_{\mathcal{R}_k(G)}(S,T)$ is finite then $\textsf{dist}_{\mathcal{R}_{\mu}(H)}(f(S),f(T))$ is finite 
and $\textsf{dist}_{\mathcal{R}_k(G)}(S,T) \leq \textsf{dist}_{\mathcal{R}_{\mu}(H)}(f(S),f(T)) \leq (2n + 1) \cdot \textsf{dist}_{\mathcal{R}_k(G)}(S,T)$. 
\end{lemma}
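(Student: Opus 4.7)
The plan is to prove the two inequalities separately by exhibiting explicit reconfiguration sequences in each direction. The upper bound will be obtained by simulating each edit in $G$ with up to $2n+1$ edits in $H$ via the blow-up map $f$, and the lower bound by projecting each edit in $H$ to at most one edit in $G$ via $f^{-1}$. Both arguments rest on Lemma~\ref{le-vc-size} and Corollary~\ref{co-vc-size} to guarantee that intermediate sets remain valid vertex covers of the right size.

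For the upper bound, start from a shortest reconfiguration sequence $\sigma = \langle Q_0 = S, Q_1, \ldots, Q_t = T \rangle$ in $\mc{R}_k(G)$. For each step that introduces a vertex $v$ (the removal case is symmetric), I would insert at most $|f(v)| \leq 2n+1$ intermediate sets into the $H$-sequence, each obtained by adding one new twin in $f(v)$ to the previous set. Every such intermediate set is sandwiched between $f(Q_i)$ and $f(Q_{i+1})$; since both of these are vertex covers of $H$ by Lemma~\ref{le-vc-size}, and any superset of a vertex cover is a vertex cover, all intermediate sets are vertex covers. Their sizes are at most $|f(Q_{i+1})| \leq (2n+1)k < \mu$ by Corollary~\ref{co-vc-size}, so they lie in $\mc{R}_\mu(H)$. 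Concatenating across all $t$ original steps yields a walk from $f(S)$ to $f(T)$ in $\mc{R}_\mu(H)$ of length at most $(2n+1)t$, giving the right inequality and in particular finiteness.

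For the lower bound, take any reconfiguration sequence $\widehat{\sigma} = \langle \widehat{Q}_0 = f(S), \ldots, \widehat{Q}_s = f(T) \rangle$ in $\mc{R}_\mu(H)$ and define $Q_i = f^{-1}(\widehat{Q}_i)$. Lemma~\ref{le-vc-size} says $Q_i$ is a vertex cover of $G$, and Corollary~\ref{co-vc-size} gives $|Q_i| \leq k$; direct computation shows $Q_0 = S$ and $Q_s = T$ because $f(v) \subseteq f(Q)$ holds precisely when $v \in Q$. A single edit $\widehat{Q}_{i+1} = \widehat{Q}_i \pm \{\widehat{v}\}$ with $\widehat{v} \in f(v)$ changes the status of $v$ in the projection only when $\widehat{v}$ is the sole twin of $v$ whose presence/absence decides whether $f(v) \subseteq \widehat{Q}_i$; in every other case $Q_{i+1} = Q_i$. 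So consecutive projected sets differ by at most one vertex. Collapsing the stuttering steps produces a genuine reconfiguration sequence from $S$ to $T$ in $\mc{R}_k(G)$ of length at most $s$, yielding the left inequality.

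The main obstacle is purely bookkeeping: confirming that every intermediate set inserted in the upper-bound simulation is simultaneously a vertex cover of $H$ and of size at most $\mu$. These two requirements are exactly what the twin multiplicities $2n$ and $2n+1$ and the capacity $\mu = 2nk + 2n - 1$ were tailored for, so once the above case analysis is set up, the inequalities follow without any further graph-theoretic input.
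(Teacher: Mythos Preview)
Your proposal is correct and follows essentially the same approach as the paper: simulate each $G$-edit by up to $2n+1$ twin-edits in $H$ for the upper bound, and project an $H$-sequence through $f^{-1}$ (then delete duplicate sets) for the lower bound, invoking Lemma~\ref{le-vc-size} and Corollary~\ref{co-vc-size} at exactly the same places. The only cosmetic difference is that the paper phrases the lower bound as a contradiction (assume a shorter $H$-sequence and derive a shorter $G$-sequence), whereas you state it directly; the underlying projection argument is identical.
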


\begin{proof}
Let $\sigma = \langle Q_0 = S, Q_1 ,\ldots, Q_{t-1}, Q_t = T \rangle$ denote a shortest reconfiguration sequence from $S$ to $T$ in $\mathcal{R}_{k}(G)$. 
By Corollary~\ref{co-vc-size}, we know that $|f(Q_i)| \leq \mu$ and therefore $f(Q_i) \in V(\mathcal{R}_{\mu}(H))$, for all $i$. 
Hence, to obtain a corresponding reconfiguration sequence $\widehat{\sigma}$ in $\mathcal{R}_{\mu}(H)$ (from $f(S)$ to $f(T)$) it is 
enough to replace the addition or removal of any vertex $v$ by the addition or removal of $f(v)$ (one vertex at a time). 
It is not hard to see that every set in $\widehat{\sigma}$ is a vertex cover of $\widehat{G}$ of size at most $(2n + 1)k < \mu$ 
(and the size of the symmetric difference of consecutive sets is exactly one). 

Assume that there exists a reconfiguration sequence 
$\tilde{\sigma} = \langle \tilde{Q}_0 = f(S), \tilde{Q}_1 ,\ldots, \tilde{Q}_{\ell-1}, \tilde{Q}_\ell = f(T) \rangle$ 
from $f(S)$ to $f(T)$ in $\mathcal{R}_{\mu}(H)$ such that $|\tilde{\sigma}| < |\sigma|$. Consider the sequence $\sigma'$ obtained from 
$\tilde{\sigma}$ by replacing every set $\tilde{Q} \in \tilde{\sigma}$ by $f^{-1}(\tilde{Q})$ and then deleting duplicate sets. 
We claim that $\sigma'$ is a reconfiguration sequence from $S$ to $T$ in $\mathcal{R}_{k}(G)$. However, $|\sigma'| \leq \tilde{\sigma} < |\sigma|$, 
contradicting our assumption that $\sigma$ is a shortest reconfiguration sequence from $S$ to $T$ in $\mc{R}_k(G)$. 
Therefore, $\textsf{dist}_{\mathcal{R}_k(G)}(S,T) \leq \textsf{dist}_{\mathcal{R}_{\mu}(H)}(f(S),f(T)) \leq (2n + 1) \cdot \textsf{dist}_{\mathcal{R}_k(G)}(S,T)$ 
(where the second inequality follows from our construction of $\widehat{\sigma}$ and the fact that $f(v) \leq 2n + 1$, for all $v \in V(G)$).  
To show that $\sigma'$ is a reconfiguration sequence from $S$ to $T$ we invoke Corollary~\ref{co-vc-size}, i.e., 
for every $\tilde{Q} \in \tilde{\sigma}$ we have $|f^{-1}(\tilde{Q})| \leq k$ and therefore $f^{-1}(\tilde{Q}) \in V(\mc{R}_k(G))$. 
Note that, after deleting duplicate sets to obtain $\sigma'$, the size of the symmetric difference of consecutive sets is again exactly one.  
\end{proof}

\begin{lemma}\label{le-inequality2}
$\forall \widehat{S},\widehat{T} \in V(\mc{R}_\mu(H))$, if $\textsf{dist}_{\mathcal{R}_\mu(H)}(\widehat{S},\widehat{T})$ is 
finite then $\textsf{dist}_{\mathcal{R}_{k}(G)}(f^{-1}(\widehat{S}),f^{-1}(\widehat{T}))$ is finite 
and $\textsf{dist}_{\mathcal{R}_\mu(H)}(\widehat{S},\widehat{T}) \leq (2n + 1) \cdot \textsf{dist}_{\mathcal{R}_{k}(G)}(f^{-1}(\widehat{S}),f^{-1}(\widehat{T})) + \Oh(n^2)$. 
\end{lemma}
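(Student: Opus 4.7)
The plan is to combine a projection argument (which provides finiteness and a sequence in $\mc{R}_k(G)$) with an explicit three-stage construction of a sequence in $\mc{R}_\mu(H)$ that realizes the claimed length bound.

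For finiteness, let $\widehat{\sigma} = \langle \widehat{Q}_0 = \widehat{S}, \ldots, \widehat{Q}_\ell = \widehat{T} \rangle$ be any reconfiguration sequence in $\mc{R}_\mu(H)$. I would apply $f^{-1}$ entrywise. By Lemma~\ref{le-vc-size} each $f^{-1}(\widehat{Q}_i)$ is a vertex cover of $G$, and by Corollary~\ref{co-vc-size} it has size at most $k$. A short case analysis on the single twin $y \in f(v)$ added or removed between $\widehat{Q}_i$ and $\widehat{Q}_{i+1}$ shows that $f^{-1}(\widehat{Q}_{i+1})$ either equals $f^{-1}(\widehat{Q}_i)$ (when the modification neither completes nor destroys the full fiber $f(v)$) or differs from it by exactly $\{v\}$ (when it does). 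Collapsing consecutive duplicates then yields a walk in $\mc{R}_k(G)$ from $f^{-1}(\widehat{S})$ to $f^{-1}(\widehat{T})$, which gives finiteness.

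For the distance inequality, I would build a concrete sequence in $\mc{R}_\mu(H)$ in three concatenated stages. Stage~1 transforms $\widehat{S}$ into the ``canonical'' cover $f(f^{-1}(\widehat{S}))$ by removing the orphan twins in $\widehat{S} \setminus f(f^{-1}(\widehat{S}))$ one at a time. Stage~2 invokes Lemma~\ref{le-inequality1} to reconfigure $f(f^{-1}(\widehat{S}))$ into $f(f^{-1}(\widehat{T}))$ in at most $(2n+1) \cdot \textsf{dist}_{\mc{R}_k(G)}(f^{-1}(\widehat{S}), f^{-1}(\widehat{T}))$ steps. Stage~3 is symmetric to Stage~1: add back the orphan twins in $\widehat{T} \setminus f(f^{-1}(\widehat{T}))$ one at a time. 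Since $|V(H)| = \Oh(n^2)$ and $\mu = \Oh(n^2)$, Stages~1 and~3 each contribute $\Oh(n^2)$ steps, yielding the claimed bound.

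The main point to verify, and where I expect most of the care to be needed, is that Stage~1 (and symmetrically Stage~3) consists of legal moves, i.e., that every intermediate set is a vertex cover of $H$ of size at most $\mu$. The size bound is immediate since we only remove vertices from $\widehat{S}$. For validity, suppose we are removing some $y \in f(v)$ with $v \not\in f^{-1}(\widehat{S})$ from a current cover $\widehat{Q}$ satisfying $f(f^{-1}(\widehat{S})) \subseteq \widehat{Q} \subseteq \widehat{S}$. For any $u \in N_G(v)$, since $H[f(u) \cup f(v)]$ is a biclique covered by $\widehat{S}$ and $f(v) \not\subseteq \widehat{S}$, we must have $f(u) \subseteq \widehat{S}$; hence $u \in f^{-1}(\widehat{S})$ and $f(u) \subseteq f(f^{-1}(\widehat{S})) \subseteq \widehat{Q} \setminus \{y\}$. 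Thus every $H$-edge incident to $y$ remains covered after the removal. Once this is established, the arithmetic combining the three stages is routine.
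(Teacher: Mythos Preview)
Your proposal is correct and follows essentially the same approach as the paper: project via $f^{-1}$ for finiteness, then bound the distance by the triangle inequality through $f(f^{-1}(\widehat{S}))$ and $f(f^{-1}(\widehat{T}))$, with the middle leg handled by Lemma~\ref{le-inequality1} and the two outer legs bounded by $|V(H)| = \Oh(n^2)$. Your biclique argument for the validity of Stage~1 (and Stage~3) actually supplies a detail the paper leaves implicit when it asserts $\textsf{dist}_{\mathcal{R}_\mu(H)}(\widehat{Q},f(f^{-1}(\widehat{Q}))) \leq |V(H)|$.
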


\begin{proof}
Let $\widehat{\sigma} = \langle \widehat{Q}_0 = \widehat{S}, \widehat{Q}_1 ,\ldots, \widehat{Q}_{t-1}, \widehat{Q}_t = \widehat{T} \rangle$ 
denote a shortest reconfiguration sequence from $\widehat{S}$ to $\widehat{T}$ in $\mathcal{R}_{\mu}(H)$. 
By Corollary~\ref{co-vc-size}, we know that $|f^{-1}(\widehat{Q}_i)| \leq k$ and therefore $f^{-1}(\widehat{Q}_i) \in V(\mathcal{R}_{k}(G))$, for all $i$. 
Consider the sequence $\sigma$ obtained from $\widehat{\sigma}$ by replacing every set 
$\widehat{Q} \in \widehat{\sigma}$ by $f^{-1}(\widehat{Q})$ and then deleting duplicate sets. Using the same arguments as in the proof of Lemma~\ref{le-inequality1}, 
we know that $\sigma$ is a reconfiguration sequence from $f^{-1}(\widehat{S})$ to $f^{-1}(\widehat{T})$ in $\mc{R}_k(G)$. 
Moreover, from Lemma~\ref{le-inequality1}, we know that 
\begin{equation}\label{eq1}
\begin{split}
\textsf{dist}_{\mathcal{R}_k(G)}(f^{-1}(\widehat{S}),f^{-1}(\widehat{T})) 
 & \leq \textsf{dist}_{\mathcal{R}_{\mu}(H)}(f(f^{-1}(\widehat{S})),f(f^{-1}(\widehat{T}))) \\
 & \leq (2n + 1) \cdot \textsf{dist}_{\mathcal{R}_k(G)}(f^{-1}(\widehat{S}),f^{-1}(\widehat{T})).
\end{split}
\end{equation}
In addition, we know that 
\begin{equation}\label{eq2}
\begin{split}
\textsf{dist}_{\mathcal{R}_\mu(H)}(\widehat{S},\widehat{T}) \leq \textsf{dist}_{\mathcal{R}_\mu(H)}(\widehat{S},f(f^{-1}(\widehat{S})))
 & + \textsf{dist}_{\mathcal{R}_\mu(H)}(f(f^{-1}(\widehat{S})),f(f^{-1}(\widehat{T}))) \\
 & + \textsf{dist}_{\mathcal{R}_\mu(H)}(f(f^{-1}(\widehat{T})),\widehat{T}).
\end{split}
\end{equation}
Combining (\ref{eq1}) and (\ref{eq2}) with the fact that $\textsf{dist}_{\mathcal{R}_\mu(H)}(\widehat{Q},f(f^{-1}(\widehat{Q})))$ is at most $|V(H)|$ 
(for any $\widehat{Q} \in \mc{R}_\mu(H)$), we get the claimed bound on $\textsf{dist}_{\mathcal{R}_\mu(H)}(\widehat{S},\widehat{T})$. 
\end{proof}

The next corollary follows from Lemma~\ref{le-inequality1} and~\ref{le-inequality2} above. 

\begin{corollary}\label{co-components}
$\forall S,T \in V(\mc{R}_k(G))$, $S$ and $T$ belong to the same connected component of $\mc{R}_k(G)$ if and only if  
$f(S)$ and $f(T)$ belong to the same connected component of $\mc{R}_\mu(H)$. 
$\forall \widehat{S},\widehat{T} \in V(\mc{R}_\mu(H))$, $\widehat{S}$ and $\widehat{T}$ belong to the same connected component 
of $\mc{R}_\mu(H)$ if and only if $f^{-1}(\widehat{S})$ and $f^{-1}(\widehat{T})$ belong to the same connected component of $\mc{R}_k(G)$. 
\end{corollary}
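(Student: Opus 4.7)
The plan is to derive both biconditionals of Corollary~\ref{co-components} from Lemmas~\ref{le-inequality1} and~\ref{le-inequality2}, using the basic fact that two nodes in a reconfiguration graph lie in the same connected component if and only if their mutual distance is finite.

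For the first biconditional, the forward direction is immediate: if $S$ and $T$ lie in the same component of $\mc{R}_k(G)$, then $\textsf{dist}_{\mc{R}_k(G)}(S,T)$ is finite, and Lemma~\ref{le-inequality1} guarantees $\textsf{dist}_{\mc{R}_\mu(H)}(f(S),f(T))$ is also finite, so $f(S)$ and $f(T)$ share a component of $\mc{R}_\mu(H)$. For the converse, assuming $f(S)$ and $f(T)$ share a component of $\mc{R}_\mu(H)$, Lemma~\ref{le-inequality2} yields a finite reconfiguration sequence between $f^{-1}(f(S))$ and $f^{-1}(f(T))$ in $\mc{R}_k(G)$. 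The remaining observation is that $f^{-1}(f(X)) = X$ for every $X \subseteq V(G)$: since the twin sets $f(v)$ for distinct $v$ are pairwise disjoint, $f(v) \subseteq f(X)$ holds exactly when $v \in X$. This places $S$ and $T$ in the same component.

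The second biconditional is handled similarly. Its forward direction follows by a direct application of Lemma~\ref{le-inequality2}. For the reverse direction, assuming $f^{-1}(\widehat{S})$ and $f^{-1}(\widehat{T})$ lie in the same component of $\mc{R}_k(G)$, Lemma~\ref{le-inequality1} places $f(f^{-1}(\widehat{S}))$ and $f(f^{-1}(\widehat{T}))$ in the same component of $\mc{R}_\mu(H)$, but we must still bridge from these ``core'' covers back to $\widehat{S}$ and $\widehat{T}$ themselves.

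The main obstacle, therefore, is the auxiliary claim that $\widehat{Q}$ and $f(f^{-1}(\widehat{Q}))$ always lie in the same component of $\mc{R}_\mu(H)$ for every $\widehat{Q} \in V(\mc{R}_\mu(H))$. This is precisely the fact already invoked at the end of the proof of Lemma~\ref{le-inequality2} (where the distance between them is bounded by $|V(H)|$), so it can be cited here. The underlying idea is that any twin $v' \in f(v) \cap \widehat{Q}$ with $v \notin f^{-1}(\widehat{Q})$ is an unnecessary ``extra'': since $f^{-1}(\widehat{Q})$ is a vertex cover of $G$, every neighbor $u$ of $v$ in $G$ satisfies $u \in f^{-1}(\widehat{Q})$, so $f(u) \subseteq \widehat{Q}$, meaning every $H$-neighbor of $v'$ already lies in $\widehat{Q}$. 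Removing these extras one at a time keeps a valid vertex cover of $H$ of size at most $|\widehat{Q}| \leq \mu$ at each step, and terminates at $f(f^{-1}(\widehat{Q}))$. Applying this bridging argument at both endpoints then completes the derivation.
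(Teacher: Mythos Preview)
Your proposal is correct and follows the same approach the paper intends: the paper states only that the corollary ``follows from Lemma~\ref{le-inequality1} and~\ref{le-inequality2} above'' without giving any further detail, and you have correctly unpacked exactly how those lemmas yield all four implications. In particular, you rightly noticed that the backward direction of the second biconditional does not fall out of Lemma~\ref{le-inequality2} as literally stated (since its inequality is asserted only under the hypothesis that $\textsf{dist}_{\mc{R}_\mu(H)}(\widehat{S},\widehat{T})$ is finite), and that one must instead invoke the bridging fact $\textsf{dist}_{\mc{R}_\mu(H)}(\widehat{Q},f(f^{-1}(\widehat{Q})))\le |V(H)|$ established inside the proof of Lemma~\ref{le-inequality2} together with Lemma~\ref{le-inequality1}; your justification of that bridging fact via removal of redundant twins is also correct.
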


\begin{lemma}\label{le-minima}
Let $\mc{C}$ denote the connected component of $\mc{R}_k(G)$ containing $S^\star$ and 
let $\widehat{\mc{C}}$ denote the connected component of $\mc{R}_\mu(H)$ containing $f(S^\star)$. 
Then, $f(S^\star)$ is the unique local minima of $\widehat{\mc{C}}$. 
Moreover, $\forall T \in V(\mc{C})$, $\textsf{dist}_{\mathcal{R}_k(G)}(S^\star,T) \leq \textsf{dist}_{\mathcal{R}_{\mu}(H)}(f(S^\star),f(T)) \leq \Oh(n^4)$.
\end{lemma}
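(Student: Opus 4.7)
The plan is to prove the two claims separately, handling uniqueness of the local minima first and then deducing the distance bound as a relatively short consequence.

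For uniqueness, I would take an arbitrary local minima $\widehat{M}$ of $\widehat{\mc{C}}$ and argue in two steps. \textbf{Step 1 (no partial twin classes).} Suppose $\widehat{M}$ contains some vertex $u \in f(v)$ while another twin $u' \in f(v) \setminus \widehat{M}$ exists. Every edge of $H$ incident to $u$ either goes to another twin in $f(v)$ (impossible, since $f(v)$ is independent in $H$) or goes to $f(w)$ for some $w \in N_G(v)$. Since $u'$ is also a twin in $f(v)$ and is not in $\widehat{M}$, all edges from $u'$ to $f(w)$ must be covered on the $f(w)$ side, forcing $f(w) \subseteq \widehat{M}$ for every $w \in N_G(v)$. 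Consequently $\widehat{M} \setminus \{u\}$ is still a vertex cover of $H$, hence a strictly smaller element of $\widehat{\mc{C}}$, contradicting local minimality. Therefore $\widehat{M} = f(Q)$ for $Q = f^{-1}(\widehat{M})$, and by Corollary~\ref{co-components} we have $Q \in V(\mc{C})$. \textbf{Step 2 (counting).} Because each vertex of $S^\star$ has a twin class of size $2n$ and each vertex outside $S^\star$ has a twin class of size $2n+1$, we obtain $|f(Q)| = 2n|Q| + |Q \setminus S^\star|$ and $|f(S^\star)| = 2n|S^\star|$. Since $f(S^\star) \in \widehat{\mc{C}}$, local minimality of $\widehat{M}$ yields $2n|Q| + |Q \setminus S^\star| \leq 2n|S^\star|$, while local minimality of $S^\star$ in $\mc{C}$ yields $|Q| \geq |S^\star|$. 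The only way to reconcile both inequalities is $|Q| = |S^\star|$ and $|Q \setminus S^\star| = 0$, i.e.\ $Q = S^\star$, so $\widehat{M} = f(S^\star)$.

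For the distance bound, the left inequality $\textsf{dist}_{\mc{R}_k(G)}(S^\star, T) \leq \textsf{dist}_{\mc{R}_\mu(H)}(f(S^\star), f(T))$ is immediate from Lemma~\ref{le-inequality1}. For the right inequality I would iteratively apply Lemma~\ref{le-onedown} to $H$, which is bipartite (duplicating vertices preserves bipartiteness) and has $|V(H)| = \Oh(n^2)$ vertices. Starting from $f(T)$, each invocation either lands us at a local minima of $\widehat{\mc{C}}$ or produces, within at most $|V(H)| = \Oh(n^2)$ reconfiguration steps of $\mc{R}_\mu(H)$, a vertex cover in $\widehat{\mc{C}}$ of strictly smaller size. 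Because $|f(T)| \leq \mu = \Oh(n^2)$, there can be at most $\Oh(n^2)$ size reductions before the process terminates, and by the uniqueness established above it must terminate at $f(S^\star)$. Multiplying the two $\Oh(n^2)$ factors gives the claimed $\Oh(n^4)$ bound.

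The main obstacle is the uniqueness argument: it rests crucially on the fine-grained asymmetry in twin-class sizes ($2n$ versus $2n+1$), and one must verify carefully that the ``delete a partial twin'' move is a legal single reconfiguration step that keeps us inside $\widehat{\mc{C}}$ with size $\leq \mu$. Once uniqueness is in place, the distance bound is essentially a mechanical combination of Lemma~\ref{le-onedown} with the sizes of $V(H)$ and $\mu$.
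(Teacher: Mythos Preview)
Your proposal is correct and the distance-bound half is identical to the paper's argument (Lemma~\ref{le-inequality1} for the lower bound, then iterate Lemma~\ref{le-onedown} on the bipartite graph $H$ for at most $\mu = \Oh(n^2)$ rounds of length $|V(H)| = \Oh(n^2)$ each).

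For uniqueness, your route differs slightly in organization from the paper's. The paper never isolates your Step~1; instead it works directly with an arbitrary $Q^\star \in V(\widehat{\mc{C}})$ satisfying $|Q^\star| \leq |f(S^\star)|$ and uses the inequality $|Q^\star| \geq |f(f^{-1}(Q^\star))| = (2n+1)|f^{-1}(Q^\star)| - |S^\star \cap f^{-1}(Q^\star)|$ (which absorbs the possibility of partial twin classes into a single $\geq$) together with $|f^{-1}(Q^\star)| \geq |S^\star|$ to squeeze out $f^{-1}(Q^\star) = S^\star$, and hence $Q^\star = f(S^\star)$ from the size constraint. Your Step~1 (removing a vertex whose twin is already missing) is a clean structural shortcut that lets you replace this inequality by the exact equality $|\widehat{M}| = 2n|Q| + |Q \setminus S^\star|$, making Step~2 a two-line computation. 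The trade-off is that your argument genuinely uses that $\widehat{M}$ is a \emph{local} minima (you need $\widehat{M} \setminus \{u\}$ to be reachable by one step), whereas the paper's inequality chain only needs $|Q^\star| \leq |f(S^\star)|$; in return your version is arguably more transparent about where the $2n$-versus-$(2n+1)$ asymmetry is doing the work.
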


\begin{proof} 
First, recall that, for $X \subseteq V(G)$, we have $|f(X)| = (2n + 1)|X| - |S^\star \cap X|$. Moreover, 
for $\widehat{X} \subseteq V(H)$, we have $|\widehat{X}| \geq f(f^{-1}(\widehat{X}))$. 
In particular, we have $|\widehat{X}| \geq f(f^{-1}(\widehat{X})) = (2n + 1)|f^{-1}(\widehat{X})| - |S^\star \cap f^{-1}(\widehat{X})|$. 
Consider $Q \in V(\mc{C})$ and $\widehat{Q} \in \widehat{\mc{C}}$. 
From Corollary~\ref{co-components}, we know that $f(Q) \in \widehat{\mc{C}}$ and $f^{-1}(\widehat{Q}) \in V(\mc{C})$. 
Assume that there exists $Q^\star \in V(\widehat{\mc{C}})$ such that $|Q^\star| \leq |f(S^\star)| = (2n + 1)|S^\star| - |S^\star|$. 
Then, 
\begin{equation}\label{eq3}
\begin{split}
(2n + 1)|f^{-1}(Q^\star)| - |S^\star \cap f^{-1}(Q^\star)| \leq |Q^\star| 
 & \leq (2n + 1)|S^\star| - |S^\star|. 
\end{split}
\end{equation} 
Since both $S^\star$ and $f^{-1}(Q^\star)$ belong to $\mc{C}$ and $S^\star$ is a local minima in $\mc{C}$, it must be the case 
that $|f^{-1}(Q^\star)| \geq |S^\star|$. But then 
\begin{equation}\label{eq4}
\begin{split}
(2n + 1)|S^\star| - |S^\star \cap f^{-1}(Q^\star)|
 & \leq (2n + 1)|f^{-1}(Q^\star)| - |S^\star \cap f^{-1}(Q^\star)| \\
 & \leq (2n + 1)|S^\star| - |S^\star|. 
\end{split}
\end{equation} 
Consequently, we have $|S^\star \cap f^{-1}(Q^\star)| \geq |S^\star|$, which is only possible if $f^{-1}(Q^\star) = S^\star$. 
It follows that $f(S^\star)$ is the unique local minima of $\widehat{\mc{C}}$.

Finally, applying Lemma~\ref{le-inequality1}, we get $\textsf{dist}_{\mathcal{R}_k(G)}(S^\star,T) \leq \textsf{dist}_{\mathcal{R}_{\mu}(H)}(f(S^\star),f(T))$. 
Applying Lemma~\ref{le-onedown} at most $\mu$ times starting from $f(T)$, we 
get $\textsf{dist}_{\mathcal{R}_{k'}(G')}(f(S^\star),f(T)) \leq \mu \cdot |V(H)| = \Oh(n^4)$, as needed. 
\end{proof}

Theorem~\ref{th-bounded-diameter} follows from combining Lemmas~\ref{le-onedown} and~\ref{le-minima}. 
To see why, let $\mc{C}$ be a connected component of $\mc{R}_k(G)$ and let $S$ and $T$ be two vertices in $V(\mc{C})$. 
By repeated applications of Lemma~\ref{le-onedown}, we know that -- assuming $S$ is not a local minima -- there exists $S^\star \in V(\mc{C})$ such that $S^\star$ is a 
local minima and $\textsf{dist}_{\mathcal{R}_k(G)}(S,S^\star) \leq nk = \Oh(n^2)$.  
Then, applying Lemma~\ref{le-minima}, we know that $\textsf{dist}_{\mathcal{R}_k(G)}(S^\star,T) = \Oh(n^4)$, implying a 
reconfiguration sequence from $S$ to $T$ of length at most $\Oh(n^4)$.  

\subsubsection*{Proof of Lemma~\ref{le-onedown}}
We start with a few additional definitions. Given a reconfiguration sequence $\sigma$ and its corresponding edit sequence $\eta$, 
we use $\textsf{touch}(\eta)$ to denote the set of vertices touched by $\eta$. Similarly, we use $\textsf{add}(\eta)$ 
and $\textsf{rem}(\eta)$ to denote the sets of added and removed vertices, respectively. We decompose $\eta$ into two types of (maximal) \emph{blocks}. 
That is, we let $\eta = \eta_1 \cdot \eta_2 \cdot \ldots \cdot \eta_\ell$. When $\textsf{add}(\eta_i) \subseteq R$ and $\textsf{rem}(\eta_i) \subseteq L$ 
we say $\eta_i$ is a \emph{right block} (vertices are added from the right side of $G$ and deleted from the left side).  
When $\textsf{add}(\eta_i) \subseteq L$ and $\textsf{rem}(\eta_i) \subseteq R$ we say $\eta_i$ is a \emph{left block}. 
Note that blocks are monotone, i.e., every vertex is touched at most once in each block and $\eta$ alternates between blocks of different types. 
A sequence (or block) is \emph{winning} whenever $\textsf{add}(\eta) < \textsf{rem}(\eta)$, 
it is \emph{losing} when $\textsf{add}(\eta) > \textsf{rem}(\eta)$, and it is \emph{neutral} otherwise. 
Given two vertex cover $S$ and $T$ belonging to the same connected component of $\mc{R}_k(G)$, we say a shortest sequence $\eta$ between them is 
\emph{tight}, if among all shortest sequences between $S$ and $T$, $\eta$ minimizes the sum of the sizes of vertex covers in $\sigma$. 
Formally, we define a potential function, $\textsf{pot}(\sigma) = \sum_{Q \in \sigma}{|Q|}$. We say $\sigma$ (or $\eta$) is 
tight if $\sigma$ is shortest and $\textsf{pot}(\sigma)$ is minimized. 

\begin{lemma}\label{le-swap}
Let $\eta = \eta_1 \cdot \eta_2 \cdot \ldots \cdot \eta_\ell$ be a valid edit sequence starting at $S$, where each $\eta_i$ is a (maximal) block. If there exists 
$i$ such that $\eta_i$ is not winning, $\eta_{i+1}$ is winning, and $\textsf{touch}(\eta_i) \cap \textsf{touch}(\eta_{i + 1}) = \emptyset$ then 
$\eta' = \eta_1 \cdot \ldots \cdot \eta_{i+1} \cdot \eta_i \cdot \ldots \cdot \eta_\ell$ is also valid starting at $S$. 
\end{lemma}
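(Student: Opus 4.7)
The plan is to show that when we swap $\eta_i$ and $\eta_{i+1}$, each individual edit remains legal and produces a vertex cover of $G$ of size at most $k$. Let $Q = (\eta_1\cdots\eta_{i-1})(S)$, $Q' = \eta_i(Q)$, and $Q'' = \eta_{i+1}(Q')$. Because $\textsf{touch}(\eta_i) \cap \textsf{touch}(\eta_{i+1})=\emptyset$, every vertex in $\textsf{touch}(\eta_{i+1})$ has the same membership in $Q$ as in $Q'$, so each add/remove of $\eta_{i+1}$ has the same legality (add applied to an absent vertex, remove to a present one) whether we start from $Q$ or $Q'$; applying $\eta_{i+1}$ to $Q$ yields a well-defined $R$, and then $\eta_i$ applied to $R$ is legal by the symmetric argument and reaches $Q''$. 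It remains to check the two invariants at every intermediate state.

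For the size bound I would proceed by a direct bookkeeping. Writing $P_j'$ for the state after the first $j$ edits of $\eta_{i+1}$ starting from $Q'$, and $P_j$ for the corresponding state starting from $Q$, the difference $|P_j|-|P_j'|$ equals $|Q|-|Q'|$, which is $\le 0$ because $\eta_i$ is not winning (so $|\textsf{add}(\eta_i)|\ge|\textsf{rem}(\eta_i)|$). Hence $|P_j|\le |P_j'|\le k$. Analogously, if $M_j$ is the state after $j$ edits of $\eta_i$ starting from $Q$ and $M_j^{new}$ the same starting from $R$, then $|M_j^{new}|-|M_j|=|R|-|Q|<0$ because $\eta_{i+1}$ is winning, so $|M_j^{new}|\le |M_j|\le k$. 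So sizes are automatic once we know $\eta_i$ is not winning and $\eta_{i+1}$ is winning.

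The hard part will be the vertex-cover condition for the new intermediate states, and this is precisely where bipartiteness enters. Assume without loss of generality that $\eta_i$ is a left block and $\eta_{i+1}$ a right block (the opposite case is symmetric). Fix an edge $uv$ of $G$ with $u\in L$, $v\in R$. For $P_j$ (the new state during the $\eta_{i+1}$-phase) we have $P_j=P_j'\triangle \textsf{touch}(\eta_i)$, so $P_j$ and $P_j'$ agree off $\textsf{touch}(\eta_i)$. If the endpoint covering $uv$ in $P_j'$ lies outside $\textsf{touch}(\eta_i)$, the same endpoint covers $uv$ in $P_j$. Otherwise the covering endpoint is either (i) a left vertex added by $\eta_i$, in which case $u\notin Q$, so $Q$ being a vertex cover forces $v\in Q$, and since $\eta_{i+1}$ adds only absent right vertices (and $v$ is present in $Q$ hence in $Q'$) we get $v\notin \textsf{touch}(\eta_{i+1})$, so $v\in P_j$; or (ii) a right vertex removed by $\eta_i$, which is impossible because such a $v$ is absent from $Q'$ and $\eta_{i+1}$ does not touch $v$, contradicting $v\in P_j'$. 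So in every case $P_j$ covers $uv$. The analogous case analysis for $M_j^{new}=M_j\triangle \textsf{touch}(\eta_{i+1})$ handles the $\eta_i$-phase of the swap.

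Concatenating these local verifications, every state in $\eta_1\cdots\eta_{i-1}\cdot\eta_{i+1}\cdot\eta_i\cdot\eta_{i+2}\cdots\eta_\ell$ applied to $S$ is a vertex cover of $G$ of size at most $k$, and consecutive states differ in exactly one vertex, so $\eta'$ is valid at $S$, as desired. I expect the bipartite case analysis above to be the only delicate point; everything else is direct arithmetic on sizes and a disjointness-commutativity argument.
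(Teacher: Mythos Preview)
Your argument is correct and follows essentially the same route as the paper: compare each new intermediate state to its counterpart in the original sequence, use that $\eta_i$ is not winning and $\eta_{i+1}$ is winning for the size bound, and use that the boundary states $Q$ and $Q''$ are vertex covers (together with bipartiteness and the disjoint-touch hypothesis) to rule out uncovered edges. One small slip to patch: in case~(i) the step ``$v$ is present in $Q$ hence in $Q'$'' can fail when $v\in\textsf{rem}(\eta_i)$; but then $v\in\textsf{touch}(\eta_i)$, so the disjointness hypothesis directly gives $v\notin\textsf{touch}(\eta_{i+1})$ and your conclusion $v\in P_j$ still follows.
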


\begin{proof}
Assume, w.l.o.g., that $\eta_i$ is a right block and $\eta_{i+1}$ is a left block. 
Let $A_1 = \textsf{add}(\eta_i)$, $D_1 = \textsf{rem}(\eta_i)$, $A_{2} = \textsf{add}(\eta_{i + 1})$, and $D_{2} = \textsf{rem}(\eta_{i+1})$. 
Since $\textsf{touch}(\eta_i) \cap \textsf{touch}(\eta_{i + 1}) = \emptyset$, we know that $D_1 \cap A_2 = D_2 \cap A_1 = \emptyset$. 
Let $Q' = \eta_1 \cdot \eta_2 \cdot \ldots \cdot \eta_{i-1}(S)$ and $Q'' = \eta_1 \cdot \eta_2 \cdot \ldots \cdot \eta_{i+1}(S)$. 
Note that before the start of $\eta_i$, all vertices in $D_1 \cup D_2$ are in the vertex cover $Q'$, i.e., $D_1 \cup D_2 \subseteq Q'$ and 
all vertices in $A_1 \cup A_2$ are not, i.e., $(A_1 \cup A_2) \cap Q' = \emptyset$.  
Hence, there are no edges between $A_1$ and $A_2$. 
Similarly, after $\eta_{i+1}$ all vertices in $A_1 \cup A_2$ are in $Q''$ and all vertices in $D_1 \cup D_2$ are not,  
and therefore there are no edges between $D_1$ and $D_2$. 

Since $\eta$ is valid and $\eta_{i+1}$ is a winning block, we know that every set in $\sigma'$ is of size at most $k$. If there exists a set $Q \in \sigma'$ 
that is not a vertex cover then some edge $uv$ is left uncovered. However, $uv$ must have one endpoint in $A_1$ 
and the other in $A_2$ or one endpoint in $D_1$ and the other in $D_2$. But we have just shown that such edges cannot exist. 
\end{proof}

A \emph{crown} is an ordered pair $(C,H)$ of subsets of vertices from $G$ that satisfies the
following criteria: (1) $C \neq \emptyset$ is an independent set of $G$, (2) $H = N(C)$, and (3) there exists a
matching $M$ on the edges connecting $C$ and $H$ such that all elements of $H$ are matched ($M$ \emph{saturates} $H$). 
$H$ is called the \emph{head} of the crown. Note that by definition $|C| \geq |H|$. 
Crowns have proved to be very useful for the design of fixed-parameter tractable and kernelization algorithms 
for the {\sc Vertex Cover} problem and many others~\cite{DBLP:journals/mst/Abu-KhzamFLS07,pcbook}.  
Given $G$, two vertex covers $S$ and $T$ of $G$ of size at most $k$, and an edit sequence $\eta$ transforming $S$ to $T$, we 
say $(C,H)$ is an \emph{$\eta$-local crown} if $(C,H)$ is a crown in $G[\textsf{touch}(\eta)]$. 
In other words, a local crown ignores all vertices that are never touched by $\eta$. 
The usefulness of local crowns is captured by the next lemma. 

\begin{lemma}\label{le-crown}
Let $S$ and $T$ be two vertex covers of $G$ of size at most $k$ and let $\eta$ be an edit sequence transforming $S$ to $T$. 
Moreover, let $(C,H)$ be an $\eta$-local crown, i.e, $C \cup H \subseteq \textsf{touch}(\eta)$. 
If $H \subseteq S$, $C \cap S = \emptyset$, $H \subseteq T$, and $C \cap T = \emptyset$ then $\eta$ is not a shortest edit sequence from $S$ to $T$.  
\end{lemma}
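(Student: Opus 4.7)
The plan is to produce a strictly shorter edit sequence $\eta'$ from $S$ to $T$, contradicting the minimality of $\eta$. Define $\eta'$ to be $\eta$ with every marker on a vertex of $C \cup H$ deleted. Since $H \subseteq S \cap T$ and $C \cap (S \cup T) = \emptyset$, the vertices in $C \cup H$ start and end in the same state, so not touching them still takes $S$ to $T$, provided the net effect on the remaining vertices is unchanged (which it is, since those markers are kept in order). As each vertex of $C \cup H$ lies in $\textsf{touch}(\eta)$ and $C \neq \emptyset$, at least one marker is deleted, so $|\eta'| < |\eta|$. All the work is in showing that $\eta'$ is valid at $S$.

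Let $g(i)$ be the position in $\eta$ of the $i$-th kept marker. The intermediate cover after $i$ steps of $\eta'$ is exactly
\[
Q'_i \;=\; (Q_{g(i)} \setminus (C \cup H)) \cup H,
\]
because $H$ stays in the cover throughout $\eta'$, $C$ stays out, and outside $C \cup H$ the two sequences evolve identically. Checking that $Q'_i$ is a vertex cover is a small case analysis on an edge $uv$ of $G$: edges incident to $H$ are covered since $H \subseteq Q'_i$; $C$ is independent in $G[\textsf{touch}(\eta)]$ so there are no edges inside $C$; for an edge from $u \in C$ to $v \notin C \cup H$, if $v \in \textsf{touch}(\eta)$ then $v \in N_{G[\textsf{touch}(\eta)]}(C) = H$, a contradiction, so $v$ is never touched, and the coverage of $uv$ in $S$ combined with $u \notin S$ forces $v \in S \subseteq Q'_i$; edges entirely outside $C \cup H$ are covered because $Q'_i$ and $Q_{g(i)}$ agree there.

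The main obstacle, and the only place where the matching $M$ saturating $H$ is actually used, is the size bound $|Q'_i| \leq k$. A direct count gives
\[
|Q'_i| \;=\; |Q_{g(i)}| \;+\; |H \setminus Q_{g(i)}| \;-\; |C \cap Q_{g(i)}|,
\]
so it suffices to show $|H \setminus Q_{g(i)}| \leq |C \cap Q_{g(i)}|$. For each $h \in H \setminus Q_{g(i)}$, its $M$-partner $c \in C$ satisfies $ch \in E(G)$; since $Q_{g(i)}$ is a vertex cover and $h \notin Q_{g(i)}$, we must have $c \in Q_{g(i)}$, and because $M$ is a matching the map $h \mapsto c$ is an injection into $C \cap Q_{g(i)}$. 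Hence $|Q'_i| \leq |Q_{g(i)}| \leq k$, $\eta'$ is valid at $S$, and we obtain the desired contradiction to the minimality of $\eta$.
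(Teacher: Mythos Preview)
Your proof is correct and follows essentially the same approach as the paper: delete every marker on $C\cup H$ from $\eta$, then check that the resulting sequence $\eta'$ is valid and strictly shorter. Your size-bound argument via the injection $h\mapsto M(h)$ is exactly the content of the paper's observation that any vertex cover meets $C\cup H$ in at least $|H|$ vertices; you just unpack it more carefully. One small slip: in the coverage case $u\in C$, $v\notin C\cup H$, $v\notin\textsf{touch}(\eta)$, you write ``$v\in S\subseteq Q'_i$'', but of course $S\not\subseteq Q'_i$ in general; the correct phrasing is that $v\in S$ and, being untouched, $v$ lies in every $Q_{g(i)}$ and hence in $Q'_i$.
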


\begin{proof}
Note that, as $M$ saturates $H$, at least $|H|$ vertices are needed to cover edges in $G[C \cup H]$.  
In other words, for any vertex cover $Q$ of $G$, at least $|H|$ vertices of $Q$ are in $C \cup H$. 
Moreover, $(Q \setminus (C \cup H)) \cup H$ is a vertex cover of $G$ of size at most $|Q|$. 
Now consider the edit sequence $\eta'$ obtained from $\eta$ by skipping/deleting all additions and removals of vertices in $C \cup H$. 
Since $|C| \geq |H|$, $H = N(C)$, and vertices in $H$ are never touched, it follows that $\eta'$ remains valid. 
\end{proof}

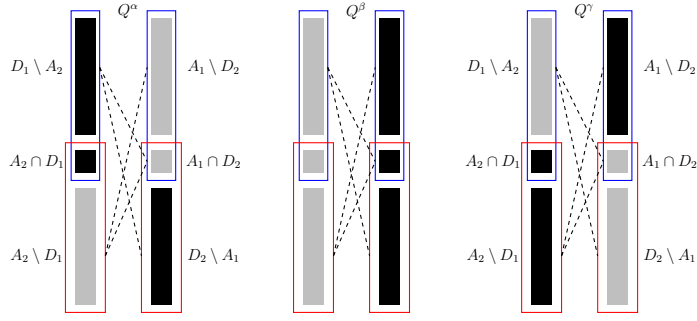
\begin{figure}
\centering
\scalebox{0.5}{\begin{tikzpicture}

\node at (1.5, 8) {\large $Q^\alpha$};
\node at (7.5, 8) {\large $Q^\beta$};
\node at (13.5, 8) {\large $Q^\gamma$};

\node at (-0.9,6.5) {\large $D_1 \setminus A_2$};
\node at (-0.9,4) {\large $A_2 \cap D_1$};
\node at (-0.9,1.5) {\large $A_2 \setminus D_1$};
\node at (3.75,6.5) {\large $A_1 \setminus D_2$};
\node at (3.75,4) {\large $A_1 \cap D_2$};
\node at (3.75,1.5) {\large $D_2 \setminus A_1$};

\node at (-0.9 + 12, 6.5) {\large $D_1 \setminus A_2$};
\node at (-0.9 + 12, 4) {\large $A_2 \cap D_1$};
\node at (-0.9 + 12, 1.5) {\large $A_2 \setminus D_1$};
\node at (3.75 + 12, 6.5) {\large $A_1 \setminus D_2$};
\node at (3.75 + 12, 4) {\large $A_1 \cap D_2$};
\node at (3.75 + 12, 1.5) {\large $D_2 \setminus A_1$};

\draw[blue,thick] (0, 3.5) rectangle (0.75, 8);
\draw[red,thick] (0 - 0.15, 0) rectangle (0.75 + 0.15, 4.5);
\draw[blue,thick] (0 + 2, 3.5) rectangle (0.75 + 2, 8);
\draw[red,thick] (0 - 0.15 + 2, 0) rectangle (0.75 + 0.15 + 2, 4.5);
\draw[black,thick,dashed] (0.75 + 0.15, 1.5) -- (2, 4);
\draw[black,thick,dashed] (0.75 + 0.15, 1.5) -- (2, 6.5);
\draw[black,thick,dashed] (0.75, 6.5) -- (2, 4);
\draw[black,thick,dashed] (0.75, 6.5) -- (2 - 0.15, 1.5);

\draw[blue,thick] (0 + 6, 3.5) rectangle (0.75 + 6, 8);
\draw[red,thick] (0 - 0.15 + 6, 0) rectangle (0.75 + 0.15 + 6, 4.5);
\draw[blue,thick] (0 + 2 + 6, 3.5) rectangle (0.75 + 2 + 6, 8);
\draw[red,thick] (0 - 0.15 + 2 + 6, 0) rectangle (0.75 + 0.15 + 2 + 6, 4.5);
\draw[black,thick,dashed] (0.75 + 0.15 + 6, 1.5) -- (2 + 6, 4);
\draw[black,thick,dashed] (0.75 + 0.15 + 6, 1.5) -- (2 + 6, 6.5);
\draw[black,thick,dashed] (0.75 + 6, 6.5) -- (2 + 6, 4);
\draw[black,thick,dashed] (0.75 + 6, 6.5) -- (2 - 0.15 + 6, 1.5);

\draw[blue,thick] (0 + 12, 3.5) rectangle (0.75 + 12, 8);
\draw[red,thick] (0 - 0.15 + 12, 0) rectangle (0.75 + 0.15 + 12, 4.5);
\draw[blue,thick] (0 + 2 + 12, 3.5) rectangle (0.75 + 2 + 12, 8);
\draw[red,thick] (0 - 0.15 + 2 + 12, 0) rectangle (0.75 + 0.15 + 2 + 12, 4.5);
\draw[black,thick,dashed] (0.75 + 0.15 + 12, 1.5) -- (2 + 12, 4);
\draw[black,thick,dashed] (0.75 + 0.15 + 12, 1.5) -- (2 + 12, 6.5);
\draw[black,thick,dashed] (0.75 + 12, 6.5) -- (2 + 12, 4);
\draw[black,thick,dashed] (0.75 + 12, 6.5) -- (2 - 0.15 + 12, 1.5);

\fill[black] (0 + 0.1, 4.5 + 0.2) rectangle (0.75 - 0.1, 8 - 0.2);
\fill[black] (0 + 0.1, 3.5 + 0.2) rectangle (0.75 - 0.1, 4.5 - 0.2);
\fill[gray!50] (0 + 0.1, 0 + 0.2) rectangle (0.75 - 0.1, 3.5 - 0.2);
\fill[gray!50] (0 + 0.1 + 2, 4.5 + 0.2) rectangle (0.75 - 0.1 + 2, 8 - 0.2);
\fill[gray!50] (0 + 0.1 + 2, 3.5 + 0.2) rectangle (0.75 - 0.1 + 2, 4.5 - 0.2);
\fill[black] (0 + 0.1 + 2, 0 + 0.2) rectangle (0.75 - 0.1 + 2, 3.5 - 0.2);

\fill[gray!50] (0 + 0.1 + 6, 4.5 + 0.2) rectangle (0.75 - 0.1 + 6, 8 - 0.2);
\fill[gray!50] (0 + 0.1 + 6, 3.5 + 0.2) rectangle (0.75 - 0.1 + 6, 4.5 - 0.2);
\fill[gray!50] (0 + 0.1 + 6, 0 + 0.2) rectangle (0.75 - 0.1 + 6, 3.5 - 0.2);
\fill[black] (0 + 0.1 + 2 + 6, 4.5 + 0.2) rectangle (0.75 - 0.1 + 2 + 6, 8 - 0.2);
\fill[black] (0 + 0.1 + 2 + 6, 3.5 + 0.2) rectangle (0.75 - 0.1 + 2 + 6, 4.5 - 0.2);
\fill[black] (0 + 0.1 + 2 + 6, 0 + 0.2) rectangle (0.75 - 0.1 + 2 + 6, 3.5 - 0.2);

\fill[gray!50] (0 + 0.1 + 12, 4.5 + 0.2) rectangle (0.75 - 0.1 + 12, 8 - 0.2);
\fill[black] (0 + 0.1 + 12, 3.5 + 0.2) rectangle (0.75 - 0.1 + 12, 4.5 - 0.2);
\fill[black] (0 + 0.1 + 12, 0 + 0.2) rectangle (0.75 - 0.1 + 12, 3.5 - 0.2);
\fill[black] (0 + 0.1 + 2 + 12, 4.5 + 0.2) rectangle (0.75 - 0.1 + 2 + 12, 8 - 0.2);
\fill[gray!50] (0 + 0.1 + 2 + 12, 3.5 + 0.2) rectangle (0.75 - 0.1 + 2 + 12, 4.5 - 0.2);
\fill[gray!50] (0 + 0.1 + 2 + 12, 0 + 0.2) rectangle (0.75 - 0.1 + 2 + 12, 3.5 - 0.2);

\end{tikzpicture}}
\caption{(Parts of) vertex covers $Q^\alpha$ (left), $Q^\beta$ (middle), and $Q^\gamma$ (right) shown in black (forbidden edges drawn as dotted lines).} 
\label{fig-states-and-forbid}
\end{figure}

We are now ready to prove our main lemma. 

\begin{lemma}
Let $G$ be a bipartite graph and $S$ be a minimal vertex cover of $G$ of size at most $k$. Let $\eta$ be a valid tight winning sequence 
starting from $S$. Then, $\eta$ consists of one (either left or right) block, i.e., $\eta$ is monotone. 
\end{lemma}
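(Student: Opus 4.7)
The plan is a proof by contradiction: assume $\eta = \eta_1 \cdot \eta_2 \cdots \eta_\ell$ has $\ell \geq 2$ maximal blocks, and exhibit either a valid shorter edit sequence from $S$ to $\eta(S)$ (contradicting that $\eta$ is shortest) or a valid sequence of the same length with strictly smaller $\textsf{pot}$ (contradicting tightness of $\eta$).

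First I would set up the local picture on a pair of consecutive blocks $\eta_i, \eta_{i+1}$, without loss of generality a right block followed by a left block. With $A_1, D_1, A_2, D_2$ and $Q^\alpha, Q^\beta, Q^\gamma$ defined as in Lemma~\ref{le-swap}, a direct inspection of which of the six touched subsets of Figure~\ref{fig-states-and-forbid} lies in each of the three covers yields exactly the four forbidden-edge types drawn there. The two combinatorial consequences I would extract are: $Q^\alpha \cap Q^\gamma$ restricted to touched vertices equals $D_1 \cap A_2$ while the touched vertices outside $Q^\alpha \cup Q^\gamma$ form precisely $A_1 \cap D_2$; and every touched neighbor of a vertex in $A_1 \cap D_2$ lies inside $D_1 \cap A_2$ (and symmetrically). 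I would then pick the pair $(i, i+1)$ so that $\eta_i$ is non-winning ($|A_i| \geq |D_i|$) and $\eta_{i+1}$ is winning ($|A_{i+1}| < |D_{i+1}|$): since $\eta$ is winning overall some block is winning, and since $S$ is minimal the first operation of $\eta$ must be an addition, which I would use to force a non-winning block to precede the first winning block.

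The core of the argument is then a case split on whether the touches of $\eta_i$ and $\eta_{i+1}$ are disjoint. If they are, Lemma~\ref{le-swap} lets me swap the two blocks; a direct calculation shows that every intermediate cover shrinks term-by-term and the new mid-state satisfies $|Q^{\beta'}| = |Q^\alpha| + |A_2| - |D_2| < |Q^\beta| = |Q^\alpha| + |A_1| - |D_1|$ by the sign choice on the two blocks, so $\textsf{pot}$ strictly decreases, contradicting tightness. Otherwise the intersection lies inside $(A_1 \cap D_2) \cup (D_1 \cap A_2)$; taking $C := A_1 \cap D_2$ (or the symmetric $D_1 \cap A_2$) and $H := N_G(C) \cap \textsf{touch}(\eta_i \cdot \eta_{i+1})$, the forbidden edges give $H \subseteq D_1 \cap A_2 \subseteq Q^\alpha \cap Q^\gamma$ and $C \cap (Q^\alpha \cup Q^\gamma) = \emptyset$, and a matching in $G[C \cup H]$ saturating $H$ is extracted using the fact that $Q^\beta$ covers every edge of $G[C \cup H]$ from $C$ only (since $Q^\beta \cap H = \emptyset$) together with the ordering constraint inside $\eta_{i+1}$ (each $h \in H$ can only be re-added after a $C$-neighbor justifying its earlier removal is present). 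Hence $(C, H)$ is an $(\eta_i \cdot \eta_{i+1})$-local crown, and Lemma~\ref{le-crown} applied with $(S, T) \leftarrow (Q^\alpha, Q^\gamma)$ yields a strictly shorter subsequence between $Q^\alpha$ and $Q^\gamma$; grafting this back into $\eta$ produces a shorter valid edit sequence from $S$ to $\eta(S)$, contradicting the shortest property of $\eta$.

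The main obstacle lies at two points: guaranteeing the existence of the non-winning-then-winning adjacent pair, which requires turning the minimality of $S$ into a structural constraint on the sign pattern of $(|A_j|-|D_j|)_{j=1}^{\ell}$ — in particular ruling out the pathological case in which every block is winning — and verifying Hall's condition in the overlapping case to extract the saturating matching on $G[C \cup H]$. For the latter, if Hall's condition should fail for some $H' \subseteq H$, I would iteratively peel off a smaller Hall-violating sub-crown using the same forbidden-edge structure and re-apply the argument; formalising this peeling cleanly, while ensuring all intermediate covers after the crown-based skip remain of size at most $k$, is the most delicate step of the proof.
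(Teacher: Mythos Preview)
Your overall architecture matches the paper's: locate a non-winning block $\eta_i$ followed by a winning block $\eta_{i+1}$, swap in the disjoint-touch case to contradict tightness, and in the overlapping case extract a local crown to contradict shortestness via Lemma~\ref{le-crown}. The disjoint case is fine.

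The overlapping case, however, has a real gap. You set $C := A_1 \cap D_2$ and $H := N_G(C)\cap\textsf{touch}(\eta_i\cdot\eta_{i+1}) \subseteq A_2\cap D_1$, and then assert that a matching saturating $H$ exists. Your justification --- ``each $h\in H$ can only be re-added after a $C$-neighbour justifying its earlier removal is present'' --- is the wrong way round: in $\eta_{i+1}$ the vertices of $H\subseteq A_2$ are being \emph{added}, which never requires any precondition, so no ordering constraint of the kind you describe is available. More importantly, nothing in your argument gives $|C|\ge |H|$; without that inequality the Hall-peeling you sketch in the last paragraph can terminate at the empty pair. (Also note that $\textsf{touch}(\eta_i)\cap\textsf{touch}(\eta_{i+1})\neq\emptyset$ does not force $A_1\cap D_2\neq\emptyset$; the overlap could lie entirely in $A_2\cap D_1$, in which case your $C$ is empty and the ``symmetric'' choice $D_1\cap A_2$ cannot serve as the crown's $C$ since it is contained in $Q^\alpha\cap Q^\gamma$.)

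The paper's missing idea is an intermediate construction: swap $\eta_i$ and $\eta_{i+1}$ \emph{and} delete all operations on $(A_1\cap D_2)\cup(A_2\cap D_1)$, obtaining a strictly shorter sequence $\eta'$. Every set in $\eta'$ is a vertex cover (Claim~\ref{cl-no-edges} kills the dangerous edges), so if every set also has size $\le k$ you are done. Otherwise, comparing sizes with the original (Claim~\ref{cl-old-new}) pinpoints a moment where $|R_{\textsf{out}}|>|L_{\textsf{in}}|$ for specific $R_{\textsf{out}}\subseteq A_1\cap D_2$, $L_{\textsf{in}}\subseteq A_2\cap D_1$. The point is that at that moment in the \emph{original} $\eta_{i+1}$, each $c\in R_{\textsf{out}}$ has already been \emph{removed}, and removal does require all neighbours to be covered; since the touched neighbours of $c$ lie in $A_2\cap D_1$ and must already have been added, they lie in $L_{\textsf{in}}$. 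This simultaneously gives $N(R_{\textsf{out}})\cap\textsf{touch}\subseteq L_{\textsf{in}}$ and $|R_{\textsf{out}}|>|L_{\textsf{in}}|$, which is exactly the input the Hall-peeling needs to produce a nonempty crown. Your proposal is missing this $\eta'$-construction and the resulting size inequality, and the ordering argument you do give applies to the wrong direction (additions rather than removals).
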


\begin{proof}
Assume otherwise. Let $\eta = \eta_1 \cdot \ldots \cdot \eta_\ell$ and let $\sigma = \sigma_1 \cdot \ldots \cdot \sigma_\ell$ be the corresponding sequence of vertex covers. 
Then, there exists at least one non-winning block followed by a winning block, say $\eta_i$ and $\eta_{i+1}$. 
If $\textsf{touch}(\eta_i) \cap \textsf{touch}(\eta_{i + 1}) = \emptyset$ then, by Lemma~\ref{le-swap}, we can swap $\eta_i$ and $\eta_{i+1}$ to obtain sequences 
$\eta' = \eta_1 \cdot \ldots \cdot \eta_{i+1} \cdot \eta_i \cdot \ldots \cdot \eta_\ell$ and 
$\sigma' = \sigma_1 \cdot \ldots \cdot \sigma'_{i+1} \cdot \sigma'_i \cdot \ldots \cdot \sigma_\ell$, which are also valid. 
However, as $\eta_{i+1}$ is winning and $\eta_i$ is not, $\eta_1 \cdot \ldots \cdot \eta_{i-1} \cdot \eta_{i+1}(S) < \eta_1 \cdot \ldots \cdot \eta_{i-1} \cdot \eta_{i}(S)$ and 
therefore $\sum_{Q' \in \sigma'_{i+1} \cdot \sigma'_i}{|Q'|} < \sum_{Q \in \sigma_{i} \cdot \sigma_{i+1}}{|Q|}$ (the size of all other vertex cover does not increase). 
Hence, $\textsf{pot}(\sigma') = \sum_{Q' \in \sigma'}{|Q'|} < \textsf{pot}(\sigma) = \sum_{Q \in \sigma}{|Q|}$, contradicting the fact that $\eta$ is tight. 

From now on we assume, w.l.o.g., that $\textsf{touch}(\eta_i) \cap \textsf{touch}(\eta_{i + 1}) \neq \emptyset$, $\eta_i$ is a right block, and $\eta_{i+1}$ is a left block. 
Let $A_1 = \textsf{add}(\eta_i)$, $D_1 = \textsf{rem}(\eta_i)$, $A_{2} = \textsf{add}(\eta_{i + 1})$, and $D_{2} = \textsf{rem}(\eta_{i+1})$. 
Let $Q^\alpha = \eta_1 \cdot \eta_2 \cdot \ldots \cdot \eta_{i-1}(S)$, $Q^\beta = \eta_1 \cdot \eta_2 \cdot \ldots \cdot \eta_{i}(S)$, 
and $Q^\gamma = \eta_1 \cdot \eta_2 \cdot \ldots \cdot \eta_{i+1}(S)$ (see Figure~\ref{fig-states-and-forbid}). 

\begin{claim}\label{cl-no-edges}
There are no edges between $A_1 \cap D_2$ and $D_1 \setminus A_2$, or $A_1 \cap D_2$ and $A_2 \setminus D_1$. 
Similarly, there are no edges between $D_1 \setminus A_2$ and $D_2 \setminus A_1$ or $A_2 \setminus D_1$ and $A_1 \setminus D_2$.
\end{claim}

\begin{proof}
The fact that $((A_1 \setminus D_2) \cup (A_2 \setminus D_1)) \cap Q^\alpha = \emptyset$ and 
$((D_1 \setminus A_2) \cup (D_2 \setminus A_1)) \cap Q^\gamma = \emptyset$ implies that there are no edges 
between $A_1 \setminus D_2$ and $A_2 \setminus D_1$ or between  $D_1 \setminus A_2$ and $D_2 \setminus A_1$. 
As $(A_1 \cap D_2) \cup Q^\alpha = \emptyset$ and $(A_2 \setminus D_1) \cap Q^\alpha = \emptyset$, there can be no edges between 
$A_1 \cap D_2$ and $A_2 \setminus D_1$. 
Finally, as $(A_1 \cap D_2) \cup Q^\gamma = \emptyset$ and $(D_1 \setminus D_2) \cap Q^\gamma = \emptyset$, 
there can be no edges between $A_1 \cap D_2$ and $D_1 \setminus A_2$ (Figure~\ref{fig-states-and-forbid}). 
\end{proof}

Now consider the sequence $\eta' = \eta_1 \cdot \ldots \cdot \eta'_{i+1} \cdot \eta'_i \cdot \ldots \cdot \eta_\ell$ obtained from $\eta$ by swapping $\eta_i$ and $\eta_{i+1}$ and 
deleting all additions and removals of vertices in $(A_1 \cap D_2) \cup (A_2 \cap D_1)$. 
Let $\sigma' = \sigma_1 \cdot \ldots \cdot \sigma'_{i+1} \cdot \sigma'_i \cdot \ldots \cdot \sigma_\ell$ be the corresponding sequence of vertex covers. 

\begin{claim}
Every set $Q'$ in $\sigma'_{i+1} \cdot \sigma'_i$ is a vertex cover of $G$. 
\end{claim}

\begin{proof} 
By Claim~\ref{cl-no-edges}, we know that $N(A_1 \cap D_2) \subseteq A_2 \cap D_1$. 
Since vertices of $A_2 \cap D_1$ are never removed, all edges with one endpoint in $A_2 \cap D_1$ are covered. 
Again by Claim~\ref{cl-no-edges}, for each remaining edge $uv$, we either have $u \in A_1 \setminus D_2$ and $v \in D_1 \setminus A_2$ or 
$u \in A_2 \setminus D_1$ and $v \in D_2 \setminus A_1$. As the original sequence is valid, we know that such edges are covered. 
\end{proof}

\begin{claim}\label{cl-old-new}
Every set $Q'$ in $\sigma'_{i+1}$ can be associated with a set $Q$ in $\sigma_{i+1}$ such that 
$|Q'| = |Q| + |D_1| - |A_1| + |R_{\textsf{out}}| - |L_{\textsf{in}}|$, where $R_{\textsf{out}} \subseteq A_1 \cap D_2$ and $L_{\textsf{in}} \subseteq A_2 \cap D_1$. 
Similarly, every set $Q'$ in $\sigma'_i$ can be associated with a set $Q$ in $\sigma_{i}$ such that 
$|Q'| = |Q| + |A_2| - |D_2| + |R'_{\textsf{out}}| - |L'_{\textsf{in}}|$, where $R'_{\textsf{out}} \subseteq A_1 \cap D_2$ and $L'_{\textsf{in}} \subseteq A_2 \cap D_1$.
\end{claim}

\begin{proof}
Let $Q'$ be a set in $\sigma'_{i+1}$. We associate $Q'$ with $Q$ in $\sigma_{i+1}$ such that 
$Q \cap ((A_2 \setminus D_1) \cup (D_2 \setminus A_1)) = Q' \cap ((A_2 \setminus D_1) \cup (D_2 \setminus A_1))$. 
Since we only delete additions and removals of vertices in $(A_1 \cap D_2) \cup (A_2 \cap D_1)$, such a set $Q$ must exist. 
Given that $\eta_i$ and $\eta_{i+1}$ are swapped in $\eta'$, it follows that $|Q'| = |Q| + |D_1| - |A_1| + |R_{\textsf{out}}| - |L_{\textsf{in}}|$; 
vertices in $D_1$ are not removed, vertices in $A_1$ are not yet added, some set of vertices $R_{\textsf{out}} \subseteq A_1 \cap D_2$ is possibly added, 
and some set of vertices $L_{\textsf{in}} \subseteq A_2 \cap D_1$ is possibly removed (Figure~\ref{fig-caps}). 

Let $Q'$ be a set in $\sigma'_{i}$. We associate $Q'$ with $Q$ in $\sigma_{i}$ such that 
$Q \cap ((A_1 \setminus D_2) \cup (D_1 \setminus A_2)) = Q' \cap ((A_1 \setminus D_2) \cup (D_1 \setminus A_2))$. 
It follows that $|Q'| = |Q| + |A_2| - |D_2| + |R'_{\textsf{out}}| - |L'_{\textsf{in}}|$; 
vertices in $A_2$ are already added, vertices in $D_2$ are removed, some set of vertices $R'_{\textsf{out}} \subseteq A_1 \cap D_2$ is possibly added, 
and some set of vertices $L'_{\textsf{in}} \subseteq A_2 \cap D_1$ is possibly removed (Figure~\ref{fig-caps}). 
\end{proof}

\begin{figure}
\centering
\scalebox{0.5}{\begin{tikzpicture}


\node at (1.5,8.5) {\large $Q' \in \sigma'_{i+1}$};
\node at (7.5,8.5) {\large $Q \in \sigma_{i+1}$};
\node at (16.5,8.5) {\large $Q' \in \sigma'_{i}$};
\node at (22.5,8.5) {\large $Q \in \sigma_{i}$};

\node at (-0.8,6.5) {\large $D_1 \setminus A_2$};
\node at (-0.8,4) {\large $A_2 \cap D_1$};
\node at (-0.8,1.5) {\large $A_2 \setminus D_1$};
\node at (3.5,6.5) {\large $A_1 \setminus D_2$};
\node at (3.5,4) {\large $A_1 \cap D_2$};
\node at (3.5,1.5) {\large $D_2 \setminus A_1$};

\node at (-0.8 + 15, 6.5) {\large $D_1 \setminus A_2$};
\node at (-0.8 + 15, 4) {\large $A_2 \cap D_1$};
\node at (-0.8 + 15, 1.5) {\large $A_2 \setminus D_1$};
\node at (3.5 + 15, 6.5) {\large $A_1 \setminus D_2$};
\node at (3.5 + 15, 4) {\large $A_1 \cap D_2$};
\node at (3.5 + 15, 1.5) {\large $D_2 \setminus A_1$};4

\draw[blue,thick] (0, 5) rectangle (0.75, 8);
\draw[blue,thick] (0 - 0.05, 3.5 - 0.05) rectangle (0.75 + 0.05, 4.5 + 0.05);
\draw[red,thick] (0, 3.5) rectangle (0.75, 4.5);
\draw[red,thick] (0, 0) rectangle (0.75, 3);

\draw[blue,thick] (2, 5) rectangle (2.75, 8);
\draw[blue,thick] (2 - 0.05, 3.5 - 0.05) rectangle (2.75 + 0.05, 4.5 + 0.05);
\draw[red,thick] (2, 3.5) rectangle (2.75, 4.5);
\draw[red,thick] (2, 0) rectangle (2.75, 3);

\draw[black,thick,dashed] (0.75, 1.5) -- (2, 4);
\draw[black,thick,dashed] (0.75, 1.5) -- (2, 6.5);
\draw[black,thick,dashed] (0.75, 6.5) -- (2, 4);
\draw[black,thick,dashed] (0.75, 6.5) -- (2, 1.5);

\fill[black] (0 + 0.1, 5 + 0.1) rectangle (0.75 - 0.1, 8 - 0.1);
\fill[black] (0 + 0.1, 3.5 + 0.1) rectangle (0.75 - 0.1, 4.5 - 0.1);
\fill[gray!50] (0 + 0.1, 1.6) rectangle (0.75 - 0.1, 3 - 0.1);
\fill[black] (0 + 0.1, 0 + 0.1) rectangle (0.75 - 0.1, 1.5 - 0.1);

\fill[gray!50] (0 + 0.1 + 2, 5 + 0.1) rectangle (0.75 - 0.1 + 2, 8 - 0.1);
\fill[gray!50] (0 + 0.1 + 2, 3.5 + 0.1) rectangle (0.75 - 0.1 + 2, 4.5 - 0.1);
\fill[gray!50] (0 + 0.1 + 2, 1.6) rectangle (0.75 - 0.1 + 2, 3 - 0.1);
\fill[black] (0 + 0.1 + 2, 0 + 0.1) rectangle (0.75 - 0.1 + 2, 1.5 - 0.1);
\draw[blue,thick] (0 + 6, 5) rectangle (0.75 + 6, 8);
\draw[blue,thick] (0 - 0.05 + 6, 3.5 - 0.05) rectangle (0.75 + 0.05 + 6, 4.5 + 0.05);
\draw[red,thick] (0 + 6, 3.5) rectangle (0.75 + 6, 4.5);
\draw[red,thick] (0 + 6, 0) rectangle (0.75 + 6, 3);

\draw[blue,thick] (2 + 6, 5) rectangle (2.75 + 6, 8);
\draw[blue,thick] (2 - 0.05 + 6, 3.5 - 0.05) rectangle (2.75 + 0.05 + 6, 4.5 + 0.05);
\draw[red,thick] (2 + 6, 3.5) rectangle (2.75 + 6, 4.5);
\draw[red,thick] (2 + 6, 0) rectangle (2.75 + 6, 3);

\draw[black,thick,dashed] (0.75 + 6, 1.5) -- (2 + 6, 4);
\draw[black,thick,dashed] (0.75 + 6, 1.5) -- (2 + 6, 6.5);
\draw[black,thick,dashed] (0.75 + 6, 6.5) -- (2 + 6, 4);
\draw[black,thick,dashed] (0.75 + 6, 6.5) -- (2 + 6, 1.5);

\fill[gray!50] (0 + 0.1 + 6, 5 + 0.1) rectangle (0.75 - 0.1 + 6, 8 - 0.1);
\fill[black] (0 + 0.1 + 6, 4.05) rectangle (0.75 - 0.1 + 6, 4.5 - 0.1);
\fill[gray!50] (0 + 0.1 + 6, 3.5 + 0.1) rectangle (0.75 - 0.1 + 6, 3.95);
\fill[gray!50] (0 + 0.1 + 6, 1.6) rectangle (0.75 - 0.1 + 6, 3 - 0.1);
\fill[black] (0 + 0.1 + 6, 0 + 0.1) rectangle (0.75 - 0.1 + 6, 1.5 - 0.1);

\fill[black] (0 + 0.1 + 2 + 6, 5 + 0.1) rectangle (0.75 - 0.1 + 2 + 6, 8 - 0.1);
\fill[black] (0 + 0.1 + 2 + 6, 4.05) rectangle (0.75 - 0.1 + 2 +6, 4.5 - 0.1);
\fill[gray!50] (0 + 0.1 + 2 + 6, 3.5 + 0.1) rectangle (0.75 - 0.1 + 2 + 6, 3.95);
\fill[gray!50] (0 + 0.1 + 2 + 6, 1.6) rectangle (0.75 - 0.1 + 2 + 6, 3 - 0.1);
\fill[black] (0 + 0.1 + 2 + 6, 0 + 0.1) rectangle (0.75 - 0.1 + 2 + 6, 1.5 - 0.1);
\draw[blue,thick] (0 + 15, 5) rectangle (0.75 + 15, 8);
\draw[blue,thick] (0 - 0.05 + 15, 3.5 - 0.05) rectangle (0.75 + 0.05 + 15, 4.5 + 0.05);
\draw[red,thick] (0 + 15, 3.5) rectangle (0.75 + 15, 4.5);
\draw[red,thick] (0 + 15, 0) rectangle (0.75 + 15, 3);

\draw[blue,thick] (2 + 15, 5) rectangle (2.75 + 15, 8);
\draw[blue,thick] (2 - 0.05 + 15, 3.5 - 0.05) rectangle (2.75 + 0.05 + 15, 4.5 + 0.05);
\draw[red,thick] (2 + 15, 3.5) rectangle (2.75 + 15, 4.5);
\draw[red,thick] (2 + 15, 0) rectangle (2.75 + 15, 3);

\draw[black,thick,dashed] (0.75 + 15, 1.5) -- (2 + 15, 4);
\draw[black,thick,dashed] (0.75 + 15, 1.5) -- (2 + 15, 6.5);
\draw[black,thick,dashed] (0.75 + 15, 6.5) -- (2 + 15, 4);
\draw[black,thick,dashed] (0.75 + 15, 6.5) -- (2 + 15, 1.5);
\draw[blue,thick] (0 + 21, 5) rectangle (0.75 + 21, 8);
\draw[blue,thick] (0 - 0.05 + 21, 3.5 - 0.05) rectangle (0.75 + 0.05 + 21, 4.5 + 0.05);
\draw[red,thick] (0 + 21, 3.5) rectangle (0.75 + 21, 4.5);
\draw[red,thick] (0 + 21, 0) rectangle (0.75 + 21, 3);

\draw[blue,thick] (2 + 21, 5) rectangle (2.75 + 21, 8);
\draw[blue,thick] (2 - 0.05 + 21, 3.5 - 0.05) rectangle (2.75 + 0.05 + 21, 4.5 + 0.05);
\draw[red,thick] (2 + 21, 3.5) rectangle (2.75 + 21, 4.5);
\draw[red,thick] (2 + 21, 0) rectangle (2.75 + 21, 3);

\draw[black,thick,dashed] (0.75 + 21, 1.5) -- (2 + 21, 4);
\draw[black,thick,dashed] (0.75 + 21, 1.5) -- (2 + 21, 6.5);
\draw[black,thick,dashed] (0.75 + 21, 6.5) -- (2 + 21, 4);
\draw[black,thick,dashed] (0.75 + 21, 6.5) -- (2 + 21, 1.5);
\fill[black] (0 + 15 + 0.1, 6.6) rectangle (0.75 + 15 - 0.1, 8 - 0.1);
\fill[gray!50] (0 + 15 + 0.1, 5 + 0.1) rectangle (0.75 + 15 - 0.1, 6.5);
\fill[black] (0 + 17 + 0.1, 6.6) rectangle (0.75 + 17 - 0.1, 8 - 0.1);
\fill[gray!50] (0 + 17 + 0.1, 5 + 0.1) rectangle (0.75 + 17 - 0.1, 6.5);
\fill[black] (0 + 21 + 0.1, 6.6) rectangle (0.75 + 21 - 0.1, 8 - 0.1);
\fill[gray!50] (0 + 21 + 0.1, 5 + 0.1) rectangle (0.75 + 21 - 0.1, 6.5);
\fill[black] (0 + 23 + 0.1, 6.6) rectangle (0.75 + 23 - 0.1, 8 - 0.1);
\fill[gray!50] (0 + 23 + 0.1, 5 + 0.1) rectangle (0.75 + 23 - 0.1, 6.5);

\fill[black] (0 + 15 + 0.1, 3.5 + 0.1) rectangle (0.75 + 15 - 0.1, 4.5 - 0.1);
\fill[gray!50] (2 + 15 + 0.1, 3.5 + 0.1) rectangle (2.75 + 15 - 0.1, 4.5 - 0.1);
\fill[black] (0 + 21 + 0.1, 4.05) rectangle (0.75 + 21 - 0.1, 4.5 - 0.1);
\fill[gray!50] (0 + 21 + 0.1, 3.6) rectangle (0.75 + 21 - 0.1, 3.95);
\fill[black] (2 + 21 + 0.1, 4.05) rectangle (0.75 + 23 - 0.1, 4.5 - 0.1);
\fill[gray!50] (2 + 21 + 0.1, 3.6) rectangle (0.75 + 23 - 0.1, 3.95);

\fill[black] (0 + 15 + 0.1, 0 + 0.1) rectangle (0.75 + 15 - 0.1, 3 - 0.1);
\fill[gray!50] (2 + 15 + 0.1, 0 + 0.1) rectangle (2.75 + 15 - 0.1, 3 - 0.1);

\fill[gray!50] (0 + 21 + 0.1, 0 + 0.1) rectangle (0.75 + 21 - 0.1, 3 - 0.1);
\fill[black] (2 + 21 + 0.1, 0 + 0.1) rectangle (2.75 + 21 - 0.1, 3 - 0.1);
\end{tikzpicture}}
\caption{$|Q'| = |Q| + |D_1| - |A_1| + |R_{\textsf{out}}| - |L_{\textsf{in}}|$ (left, forbidden edges shown as dotted lines).
$|Q'| = |Q| + |A_2| - |D_2| + |R'_{\textsf{out}}| - |L'_{\textsf{in}}|$ (right).} 
\label{fig-caps}
\end{figure}
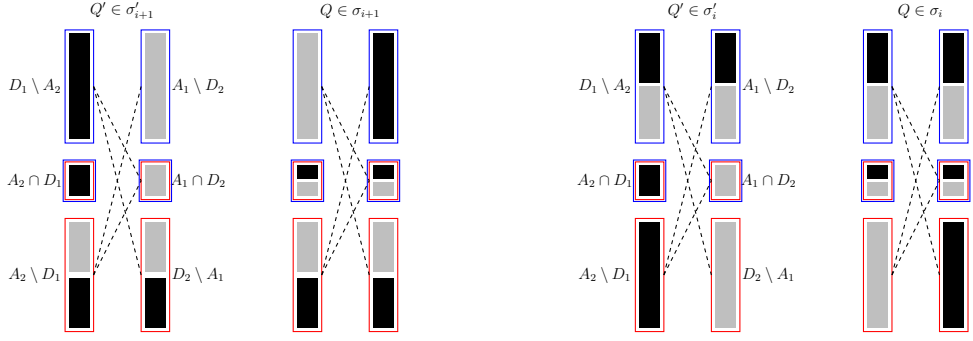

Recall that $\eta_i$ is a non-winning block and $\eta_{i+1}$ is a winning block. 
Therefore, $|A_1| - |D_1| \geq 0$ (or $|D_1| - |A_1| \leq 0$) and $|A_2| - |D_2| \leq 0$. 
Consequently, if $|R_{\textsf{out}}| - |L_{\textsf{in}}| \leq 0$ and $|R'_{\textsf{out}}| - |L'_{\textsf{in}}| \leq 0$ 
for all sets in $\sigma'_{i+1} \cdot \sigma'_i$ then $\eta'$ is a shorter sequence from $S$ to $T$, again contradicting our choice of $\eta$. 
So there exists at least one set such that either $|R_{\textsf{out}}| - |L_{\textsf{in}}| > 0$ (or $|R'_{\textsf{out}}| - |L'_{\textsf{in}}| > 0$). 
But then there exists a pair $(C = R_{\textsf{out}}, H = L_{\textsf{in}})$ such that $|C| > |H|$, $N(C) \subseteq H$ 
(by Claim~\ref{cl-no-edges}), 
$H \subseteq A_1 \cap D_2 \subseteq Q^\alpha$, $C \cap Q^\alpha = \emptyset$, $H \subseteq Q^\gamma$, and $C \cap Q^\gamma = \emptyset$. 
We claim that there exists $(C' \subseteq C, H' \subseteq H)$ such that $(C',H')$ is an 
($\eta_{i} \cdot \eta_{i+1}$)-local crown. Applying Lemma~\ref{le-crown} to the local crown completes the proof; as it implies that $\eta_{i} \cdot \eta_{i+1}$ is not the 
shortest possible sequence from $Q^\alpha$ to $Q^\gamma$.   

We conclude the proof by constructing $(C',H')$. First, we set $C' = C$ and $H' = H$. Next, we delete all vertices in 
$H'$ with no neighbors in $C'$ so that $N(C') = H'$ (note that $|C|' > |H'|$ still holds). 
The remaining condition for $(C',H')$ to satisfy is for $G[C' \cup H']$ to have a matching which saturates $H'$.
Hall's Marriage Theorem~\cite{H87} states that such a saturating matching exists if and only if
for every subset $W$ of $H'$, $|W| \leq |N_{G[C' \cup H']}(W)|$. 
By a simple application of Hall's theorem, if
no saturating matching exists then there exists a
subgraph $Z$ of $G[C' \cup H']$ such that $|V(Z) \cap C'| < |V(Z) \cap H'|$.
By repeatedly deleting such subgraphs, we eventually reach a pair $(C',H')$ which satisfies all the required properties.
Since $|C| > |H|$, such a pair is guaranteed to exist as otherwise every subset $W$ of $H$ would satisfy
$|W| > |N_{G[C \cup H]}(W)|$ and hence $|H| > |C|$, a contradiction. 
\end{proof}

\section{PSPACE-completeness under the token sliding model}\label{sec-pspace}
We now switch from the ``vertex cover view'' to the ``independent set view'' of the problem.  
Given a graph $G$ and an integer $k$, the reconfiguration graph $\mc{S}_k(G)$ is the graph whose nodes are  
independent sets of $G$ of size exactly $k$. Two independent sets $I$ and $J$ are adjacent in $\mc{S}_k(G)$ 
if we can transform one into the other by sliding a token along an edge. 
More formally, $I$ and $J$ are adjacent in $\mc{S}_k(G)$ if $J \setminus I = \{u\}$, 
$I \setminus J$ = \{v\}, and $uv$ is an edge of G. 
The {\sc Token Sliding} problems asks, given a graph $G$ and two independent sets $I$ and $J$ of size $k$, whether 
$\textsf{dist}_{\mc{S}_k(G)}(I,J)$ is finite. 

\begin{figure}
\centering
    \begin{tikzpicture}[scale=.35,every node/.style={circle},inner/.style={circle},outer/.style={circle}]
        \node [inner, circle, fill=white!50, draw=black, inner sep=1pt] (s1)  {\small $h_1$};
        \node [inner, circle, fill=white!50, draw=black, inner sep=1pt, right=1.5cm of s1] (h1) {\small $f_1$};
		\node [inner, circle, fill=white!50, draw=black, inner sep=1pt, right=1.5cm of h1] (h2) {\small $f_2$};
		\node [inner, circle, fill=white!50, draw=black, inner sep=1pt, right=1.5cm of h2] (h3) {\small $f_3$};
		\node [inner, circle, fill=white!50, draw=black, inner sep=1pt, right=1.5cm of h3] (h4) {\small $f_4$};
		\node [inner, circle, fill=white!50, draw=black, inner sep=1pt, right=1.5cm of h4] (s2) {\small $h_2$};
		\node [inner, circle, fill=white!50, draw=black, inner sep=1pt, above=1.0cm of h2] (h5) {\small $f_5$};
		\node [inner, circle, fill=white!50, draw=black, inner sep=1pt, above=1.0cm of h3] (h6) {\small $f_6$};
		\node [inner, circle, fill=white!50, draw=black, inner sep=1pt, below=1.0cm of h2] (h7) {\small $f_7$};
		\node [inner, circle, fill=white!50, draw=black, inner sep=1pt, below=1.0cm of h3] (h8) {\small $f_8$};
		
        \foreach \from/\to in {s1/h1,h1/h2,h2/h3,h3/h4,h4/s2,h1/h5,h1/h7,h5/h6,h7/h8,h4/h8,h4/h6}
        \draw (\from) -- (\to);
		\draw (s1) to [out=50,in=130] (h6);
		\draw (s1) to [out=45,in=150] (h3);
		\draw (s1) to [out=-50,in=-130] (h8);
		\draw (s2) to [out=130,in=50] (h5);
		\draw (s2) to [out=150,in=45] (h2);
		\draw (s2) to [out=-130,in=-50] (h7);
	\end{tikzpicture}
\caption{The bipartite extended cage graph $H$.}
\label{fig-cage}
\end{figure}
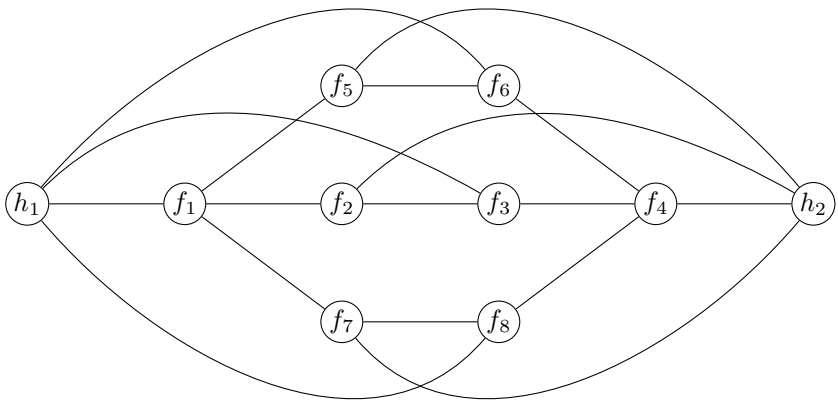

We show that {\sc Token Sliding} is \PSPACE-complete by a reduction from a variant of the {\sc Word Reconfiguration} problem.
Given a pair $W = (\Sigma, A)$, where $\Sigma = \{\sigma_1, \sigma_2, \ldots, \sigma_{|\Sigma|}\}$ is a set of symbols and 
$A = \{(\sigma_{x},\sigma_{x'}), \ldots, (\sigma_{y},\sigma_{y'})\}\subseteq \Sigma \times \Sigma = \Sigma^2$ is a binary relation between symbols, a string 
over $\Sigma$ is a \emph{word} if every two consecutive symbols are in the relation $A$. 
If one looks at $W$ as a directed graph (possibly with loops), a string is a word if and only
if it is a walk in $W$. The \PSPACE-complete {\sc Word Reconfiguration} problem~\cite{DBLP:journals/corr/Wrochna14} asks whether
two given words, $w_s$ and $w_t$, of equal length $n$ can be transformed into one another 
by changing one symbol at a time so that all intermediary strings are also words. A crucial observation 
about the {\sc Word Reconfiguration} problem is that one can actually treat even/odd positions in words independently. 
Formally, we introduce the {\sc Even/Odd Word Reconfiguration} problem, where instead of changing symbols one at a time, 
we allow bulk changes as long as they occur at only even or odd positions. 
The next proposition follows from the observation above and the \PSPACE-completeness of the {\sc Word Reconfiguration} problem. 

\begin{proposition}\label{pr-even-odd}
{\sc Even/Odd Word Reconfiguration} is \PSPACE-complete. 
\end{proposition}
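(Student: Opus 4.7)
The plan is to prove Proposition~\ref{pr-even-odd} by showing that the identity map is both a correctness-preserving reduction from {\sc Word Reconfiguration} to {\sc Even/Odd Word Reconfiguration} and vice versa, so that the two problems have the same reachability relation on the common space of valid words. \PSPACE-membership is routine: the configuration $w$ has size $n\log|\Sigma|$, so standard nondeterministic space $\Oh(n\log|\Sigma|)$ guessing of the next configuration (together with Savitch's theorem) places the problem in \PSPACE. All the content is therefore in the hardness direction, which I will get by reducing from the \PSPACE-complete {\sc Word Reconfiguration} of Wrochna~\cite{DBLP:journals/corr/Wrochna14} using the same pair $(W, w_s, w_t)$ as input.

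The forward direction is immediate: any single-symbol change at position $i$ is itself a bulk change at positions of parity $i \bmod 2$, so every valid standard reconfiguration sequence is already a valid even/odd reconfiguration sequence. For the backward direction I need to show that if $w$ and $w'$ are two valid words that differ only at positions of one fixed parity, then $w$ can be transformed into $w'$ by single-symbol changes that stay within the set of valid words. Let $P \subseteq \{1,\ldots,n\}$ be the (say, even) positions where $w$ and $w'$ disagree, and order them arbitrarily as $p_1, p_2, \ldots, p_j$. I will define $w^{(0)} = w$ and $w^{(s)}$ to be the word that agrees with $w'$ at $p_1,\ldots,p_s$ and with $w$ elsewhere, and argue that each $w^{(s)}$ is a valid word. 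The key observation is the even/odd independence: since $p_s$ is even, its neighbors $p_s-1$ and $p_s+1$ are odd and therefore lie outside $P$, so they carry the same symbol in $w$, $w'$, and every intermediate $w^{(s)}$. Consequently, the pair $(w^{(s)}_{p_s-1}, w^{(s)}_{p_s})$ equals $(w'_{p_s-1}, w'_{p_s}) \in A$, and similarly for $(w^{(s)}_{p_s}, w^{(s)}_{p_s+1})$; all other consecutive pairs are unchanged from $w^{(s-1)}$, which is valid by induction. Thus $w = w^{(0)}, w^{(1)}, \ldots, w^{(j)} = w'$ is a valid standard reconfiguration sequence, and in particular a valid even/odd one whose image under a hypothetical standard simulation is the required standard reconfiguration.

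Chaining these single bulk steps over an entire even/odd reconfiguration sequence $w_s = u_0, u_1, \ldots, u_m = w_t$ yields a standard reconfiguration sequence from $w_s$ to $w_t$ whose length is at most $nm$, which is polynomial and hence well within the size budget of a \PSPACE reduction. Combining both directions gives the equivalence of the two decision problems, and with \PSPACE-membership above this establishes \PSPACE-completeness of {\sc Even/Odd Word Reconfiguration}.

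There is no real obstacle; the only thing to be careful about is the even/odd independence argument, where one must note that a position being of even parity forces both its neighbors to be odd and thus unaffected by the current bulk step, so single-symbol changes can be performed in any order without ever leaving the set of valid words.
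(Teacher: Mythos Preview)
Your proposal is correct and follows exactly the approach the paper indicates: the paper does not give a proof beyond stating that the proposition ``follows from the observation above and the \PSPACE-completeness of the {\sc Word Reconfiguration} problem,'' where the observation is precisely the even/odd independence you exploit. You have simply spelled out that observation in full, showing that a bulk change at positions of one parity decomposes into single-symbol changes because each affected position has neighbors of the opposite parity that remain fixed.
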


Let $\{f_1, f_2, f_3, f_4, f_5, f_6, f_7, f_8\}$ and $\{h_1, h_2\}$ be two sets of vertices. 
We let $H$ be the graph depicted in Figure~\ref{fig-cage}. 
Formally, we have $V(H) = \{f_1, f_2, f_3, f_4, f_5, f_6, f_7, f_8\} \cup \{h_1, h_2\}$ 
and $E(H)$ consists of $h_1f_1$, $h_1f_3$, $h_1f_6$, $h_1f_8$, $h_2f_2$, $h_2f_4$, $h_2f_5$, $h_2f_7$, and 
three internally vertex-disjoint paths between $f_1$ and $f_4$, namely $P_1 = \{f_1, f_2, f_3, f_4\}$,    
$P_2 = \{f_1, f_5, f_6, f_4\}$, and $P_3 = \{f_1, f_7, f_8, f_4\}$. 
We call $F = H[V(H) \setminus \{h_1, h_2\}]$ a \emph{cage (graph)} and we call $H$ an \emph{extended cage (graph)}.  
In the remainder of this section, when we say ``add a cage between two vertices $u$ and $v$'' we mean creating a new extended cage 
and identifying $u$ with $h_1$ and $v$ with $h_2$. 

\begin{proposition}\label{pr-remain-bip}
Let $B$ be a bipartite graph with bipartition $(L,R)$. 
Let $B'$ be the graph obtained from $B$ by adding a cage between $u \in L$ and $v \in R$. 
Then, $B'$ is bipartite. 
\end{proposition}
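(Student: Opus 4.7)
The plan is to give an explicit 2-coloring of $B'$ that extends the given bipartition $(L,R)$ of $B$. Since bipartiteness is equivalent to admitting a valid 2-coloring, it suffices to assign each of the eight new cage vertices $f_1,\dots,f_8$ to either $L$ or $R$ and check that every edge of $B'$ still has its endpoints on opposite sides.

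Concretely, I would use the fact that the cage $F$ has a very symmetric structure: the vertex $h_1$ (identified with $u\in L$) is adjacent in $F$ only to $\{f_1,f_3,f_6,f_8\}$, and $h_2$ (identified with $v\in R$) is adjacent only to $\{f_2,f_4,f_5,f_7\}$. This forces a choice: put $L' = L\cup\{f_2,f_4,f_5,f_7\}$ and $R' = R\cup\{f_1,f_3,f_6,f_8\}$. Then all cage edges incident to $u$ go from $L$ to $R'$, and all cage edges incident to $v$ go from $R$ to $L'$, so those edges respect the bipartition by construction. It remains to verify the nine edges lying inside $F$: along $P_1$ we have $f_1f_2,f_2f_3,f_3f_4$, which alternate $R'$-$L'$-$R'$-$L'$; along $P_2$ we have $f_1f_5,f_5f_6,f_6f_4$, alternating $R'$-$L'$-$R'$-$L'$; and along $P_3$ we have $f_1f_7,f_7f_8,f_8f_4$, again alternating correctly. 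Every edge of $B'$ therefore has endpoints in distinct classes of $(L',R')$.

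There is no real obstacle here; the only thing one has to be a little careful about is that the identification of $h_1$ with $u$ (and $h_2$ with $v$) does not create any edges that conflict with $B$'s own bipartition. But since $u$ is already in $L$ and all cage vertices newly adjacent to $u$ land in $R'$, and symmetrically for $v$, the old edges of $B$ are untouched and the new edges of $B'\setminus B$ are all covered by the verification above. Hence $(L',R')$ is a bipartition of $B'$, proving the proposition.
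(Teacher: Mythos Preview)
Your proof is correct and follows essentially the same approach as the paper: both exhibit the explicit bipartition $L' = L \cup \{f_2,f_4,f_5,f_7\}$ and $R' = R \cup \{f_1,f_3,f_6,f_8\}$, with your version spelling out the edge-by-edge verification that the paper leaves implicit.
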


\begin{proof}
Consider the following bipartition of $V(B')$. 
Let $L' = L \cup \{f_2, f_4, f_5, f_7\}$ and let $R' = R \cup \{f_1, f_3, f_6, f_8\}$. 
Since there are no edges between vertices in $L'$ and no edges between vertices in $R'$, the proposition follows. 
\end{proof}

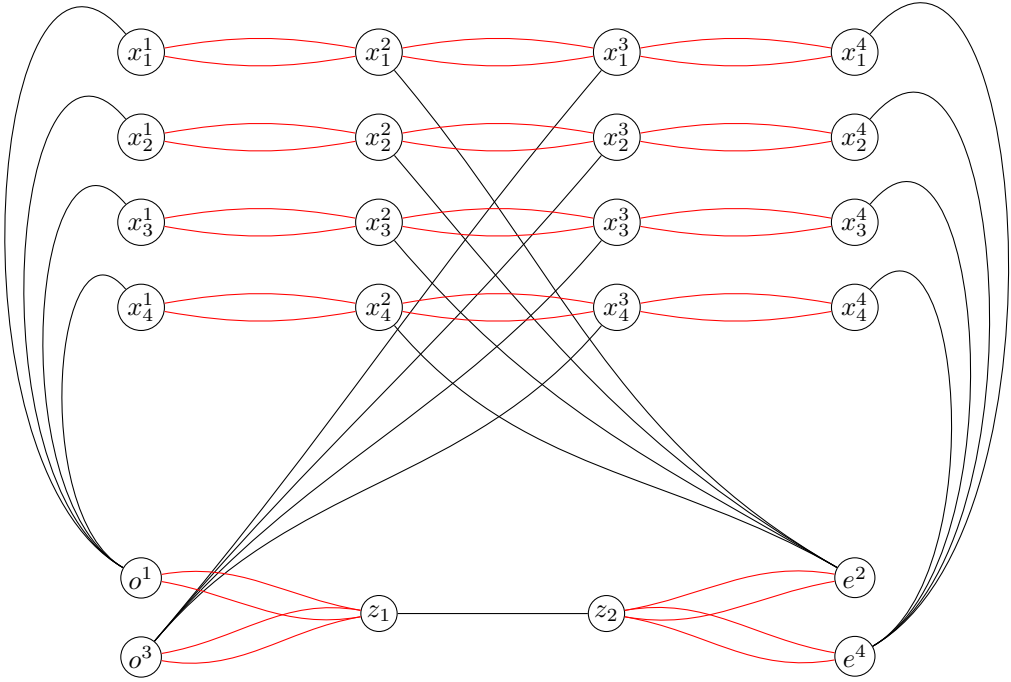
\begin{figure}[ht]
\centering
    \begin{tikzpicture}[scale=.35, auto=left, remember picture ,every node/.style={circle},inner/.style={circle},outer/.style={circle}]
        \node [inner, circle, fill=white!50, draw=black, inner sep=1pt] (x11)  {\small $x^1_1$};
        \node [inner, circle, fill=white!50, draw=black, inner sep=1pt, below=0.5cm of x11] (x12) {\small $x^1_2$};
		\node [inner, circle, fill=white!50, draw=black, inner sep=1pt, below=0.5cm of x12] (x13) {\small $x^1_3$};
		\node [inner, circle, fill=white!50, draw=black, inner sep=1pt, below=0.5cm of x13] (x14) {\small $x^1_4$};

        \node [inner, circle, fill=white!50, draw=black, inner sep=1pt, right=2.5cm of x11] (x21) {\small $x^2_1$};
        \node [inner, circle, fill=white!50, draw=black, inner sep=1pt, below=0.5cm of x21] (x22) {\small $x^2_2$};
		\node [inner, circle, fill=white!50, draw=black, inner sep=1pt, below=0.5cm of x22] (x23) {\small $x^2_3$};
		\node [inner, circle, fill=white!50, draw=black, inner sep=1pt, below=0.5cm of x23] (x24) {\small $x^2_4$};

        \node [inner, circle, fill=white!50, draw=black, inner sep=1pt, right=2.5cm of x21] (x31) {\small $x^3_1$};
        \node [inner, circle, fill=white!50, draw=black, inner sep=1pt, below=0.5cm of x31] (x32) {\small $x^3_2$};
		\node [inner, circle, fill=white!50, draw=black, inner sep=1pt, below=0.5cm of x32] (x33) {\small $x^3_3$};
		\node [inner, circle, fill=white!50, draw=black, inner sep=1pt, below=0.5cm of x33] (x34) {\small $x^3_4$};
		
        \node [inner, circle, fill=white!50, draw=black, inner sep=1pt, right=2.5cm of x31] (x41) {\small $x^4_1$};
        \node [inner, circle, fill=white!50, draw=black, inner sep=1pt, below=0.5cm of x41] (x42) {\small $x^4_2$};
		\node [inner, circle, fill=white!50, draw=black, inner sep=1pt, below=0.5cm of x42] (x43) {\small $x^4_3$};
		\node [inner, circle, fill=white!50, draw=black, inner sep=1pt, below=0.5cm of x43] (x44) {\small $x^4_4$};
		
        \node [inner, circle, fill=white!50, draw=black, inner sep=1pt, below=3.0cm of x14] (x10) {\small $o^1$};
        \node [inner, circle, fill=white!50, draw=black, inner sep=1pt, below=0.5cm of x10] (x30) {\small $o^3$};
		\draw (x10) to [out=150,in=130] (x11);
		\draw (x10) to [out=150,in=130] (x12);
		\draw (x10) to [out=150,in=130] (x13);
		\draw (x10) to [out=150,in=130] (x14);
		\draw (x30) to [out=50,in=230] (x31);
		\draw (x30) to [out=50,in=230] (x32);
		\draw (x30) to [out=50,in=230] (x33);
		\draw (x30) to [out=50,in=230] (x34);
		
        \node [inner, circle, fill=white!50, draw=black, inner sep=1pt, below=3.0cm of x44] (x20) {\small $e^2$};
        \node [inner, circle, fill=white!50, draw=black, inner sep=1pt, below=0.5cm of x20] (x40) {\small $e^4$};]
		\draw (x20) to [out=150,in=-50] (x21);
		\draw (x20) to [out=150,in=-50] (x22);
		\draw (x20) to [out=150,in=-50] (x23);
		\draw (x20) to [out=150,in=-50] (x24);
		\draw (x40) to [out=30,in=50] (x41);
		\draw (x40) to [out=30,in=50] (x42);
		\draw (x40) to [out=30,in=50] (x43);
		\draw (x40) to [out=30,in=50] (x44);
		
		\draw[red] (x11) to [out=10,in=170] (x21);
		\draw[red] (x11) to [out=-10,in=190] (x21);
		\draw[red] (x12) to [out=10,in=170] (x22);
		\draw[red] (x12) to [out=-10,in=190] (x22);
		\draw[red] (x13) to [out=10,in=170] (x23);
		\draw[red] (x13) to [out=-10,in=190] (x23);
		\draw[red] (x14) to [out=10,in=170] (x24);
		\draw[red] (x14) to [out=-10,in=190] (x24);
		
		\draw[red] (x21) to [out=10,in=170] (x31);
		\draw[red] (x21) to [out=-10,in=190] (x31);
		\draw[red] (x22) to [out=10,in=170] (x32);
		\draw[red] (x22) to [out=-10,in=190] (x32);
		\draw[red] (x23) to [out=10,in=170] (x33);
		\draw[red] (x23) to [out=-10,in=190] (x33);
		\draw[red] (x24) to [out=10,in=170] (x34);
		\draw[red] (x24) to [out=-10,in=190] (x34);
		
		\draw[red] (x31) to [out=10,in=170] (x41);
		\draw[red] (x31) to [out=-10,in=190] (x41);
		\draw[red] (x32) to [out=10,in=170] (x42);
		\draw[red] (x32) to [out=-10,in=190] (x42);
		\draw[red] (x33) to [out=10,in=170] (x43);
		\draw[red] (x33) to [out=-10,in=190] (x43);
		\draw[red] (x34) to [out=10,in=170] (x44);
		\draw[red] (x34) to [out=-10,in=190] (x44);
		
		\node [inner, circle, fill=white!50, draw=black, inner sep=1pt, below=3.5cm of x24] (z1) {\small $z_1$};
		\node [inner, circle, fill=white!50, draw=black, inner sep=1pt, right=2.5cm of z1] (z2) {\small $z_2$};
		\draw (z1) to (z2);
		\draw[red] (x10) to [out=10,in=170] (z1);
		\draw[red] (x10) to [out=-10,in=190] (z1);
		\draw[red] (x30) to [out=10,in=170] (z1);
		\draw[red] (x30) to [out=-10,in=190] (z1);
		
		\draw[red] (x20) to [out=170,in=10] (z2);
		\draw[red] (x20) to [out=190,in=-10] (z2);
		\draw[red] (x40) to [out=170,in=10] (z2);
		\draw[red] (x40) to [out=190,in=-10] (z2);
	\end{tikzpicture}
\caption{The graph $G$ for $n = 4$, $\Sigma = \{\sigma_1,\sigma_2,\sigma_3,\sigma_4\}$, and 
$A = (\Sigma \times \Sigma) \setminus \{(\sigma_1,\sigma_1), (\sigma_2,\sigma_2), (\sigma_3,\sigma_3), (\sigma_4,\sigma_4)\}$ (red double edges represent cages).}
\label{fig-graph}
\end{figure}

Given an instance $(W = (\Sigma, A), w_s, w_t)$ of the {\sc Word Reconfiguration} problem (where $|w_s| = |w_t| = n$), we assume, 
without loss of generality, that $n$ is even. 
We construct a graph $G$ as follows. Let $\Sigma = \{\sigma_1, \sigma_2, \ldots, \sigma_{|\Sigma|}\}$. 
We first create $n$ sets of vertices, $X_1$, $X_2$, $\ldots$, $X_n$, each of size $|\Sigma|$. 
We set $X_i = \{x^i_1, x^i_2, \ldots, x^i_{|\Sigma|}\}$. 
For each $X_i$, $i \in [n]$ and $i$ even, we create a vertex $e^i$ and add all edges between $e^i$ and vertices in $X_i$. 
For each $X_i$, $i \in [n]$ and $i$ odd, we create a vertex $o^i$ and add all edges between $o^i$ and vertices in $X_i$. 
We let $U_{\textsf{odd}} = \{o^i \mid i \text{ is odd}\}$ and $U_{\textsf{even}} = \{e^i \mid i \text{ is even}\}$. 
For all $1 \leq i < n$, $1 \leq j,j' \leq |\Sigma|$, $x^{i}_{j} \in X_{i}$, and $x^{i+1}_{j'} \in X_{i+1}$, we add a cage between 
$x^{i}_{j}$ and $x^{i+1}_{j'}$ in $G$ if and only if $(\sigma_j,\sigma_{j'}) \not\in A$. That is, we 
add a cage between $x^{i}_{j}$ and $x^{i+1}_{j'}$ whenever $\sigma_j$ and $\sigma_{j'}$ cannot appear consecutively in a word. 
Finally, we add two vertices $z_1$ and $z_2$ connected by an edge and we add a cage between $z_1$ and 
each vertex in $U_{\textsf{odd}}$ and a cage between $z_2$ and each vertex in $U_{\textsf{even}}$. 
We call the (extended) cage between vertices $u$ and $v$ a $(u,v)$-(extended) cage. 
This concludes the construction of the graph $G$ (see Figure~\ref{fig-graph}). 
Note that $|V(G)| = |\Sigma|(n + 1) + 2 + 8[m(n-1) + n]$, where $m = |\Sigma^2 \setminus A|$. 

\begin{lemma}\label{le-bip}
The graph $G$ is bipartite. 
\end{lemma}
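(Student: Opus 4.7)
The plan is to exhibit an explicit bipartition of the ``non-cage'' part of $G$ and then add the cages one at a time, invoking Proposition~\ref{pr-remain-bip} at every step to preserve bipartiteness.

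Concretely, I would define
\[
L = \{x^i_j : i \text{ odd},\, 1 \le j \le |\Sigma|\} \;\cup\; \{e^i : i \text{ even}\} \;\cup\; \{z_1\},
\]
\[
R = \{x^i_j : i \text{ even},\, 1 \le j \le |\Sigma|\} \;\cup\; \{o^i : i \text{ odd}\} \;\cup\; \{z_2\}.
\]
Let $G_0$ denote the subgraph of $G$ obtained by deleting the interior vertices of every cage together with all cage-edges; thus $G_0$ has vertex set $\bigcup_i X_i \cup U_{\textsf{odd}} \cup U_{\textsf{even}} \cup \{z_1,z_2\}$ and edge set $\{e^i x^i_j : i \text{ even}\} \cup \{o^i x^i_j : i \text{ odd}\} \cup \{z_1 z_2\}$. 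A direct inspection shows that each of these edges has one endpoint in $L$ and one in $R$, so $(L,R)$ is a bipartition of $G_0$.

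Next, I would verify that the two ``port'' vertices of every cage in $G$ lie on opposite sides of $(L,R)$. For an $(x^i_j, x^{i+1}_{j'})$-cage this holds because $i$ and $i+1$ have opposite parities. For a $(z_1, o^i)$-cage (with $i$ odd), $z_1 \in L$ and $o^i \in R$. For a $(z_2, e^i)$-cage (with $i$ even), $z_2 \in R$ and $e^i \in L$. Hence, adding the cages to $G_0$ in any order, Proposition~\ref{pr-remain-bip} applies inductively: at each step the two port vertices of the new cage sit on opposite sides of the current bipartition, so the proposition extends the bipartition by placing the eight interior vertices of the new cage ($\{f_2,f_4,f_5,f_7\}$ and $\{f_1,f_3,f_6,f_8\}$) on the appropriate sides. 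After all cages have been processed we recover $G$ equipped with a valid bipartition, proving the lemma.

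No real obstacle is expected: the lemma is essentially a bookkeeping exercise whose nontrivial content is already encapsulated in Proposition~\ref{pr-remain-bip}. The only subtle point is to pick the parity convention for $X_i$ that is compatible with the attached vertex in $U_{\textsf{odd}} \cup U_{\textsf{even}}$, after which the sides of $z_1$ and $z_2$ are forced by the cages incident to them.
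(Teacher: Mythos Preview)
Your proposal is correct and follows essentially the same approach as the paper: the paper exhibits the identical bipartition $(L,R)$ of the non-cage vertices (odd-indexed $X_i$'s, $U_{\textsf{even}}$, and $z_1$ on one side; even-indexed $X_i$'s, $U_{\textsf{odd}}$, and $z_2$ on the other), observes that the induced graph on $L\cup R$ is bipartite, and then appeals to Proposition~\ref{pr-remain-bip} for the cages. Your write-up is slightly more explicit about the inductive application of the proposition, but the argument is the same.
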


\begin{proof}
Let $L = X_1 \cup X_3 \cup \ldots \cup X_{n-1} \cup U_{\textsf{even}} \cup \{z_1\}$ and 
$R = X_2 \cup X_4 \cup \ldots \cup X_n \cup U_{\textsf{odd}} \cup \{z_2\}$. 
First, note that both graphs $G[L]$ and $G[R]$ are edgeless, and therefore $G[L \cup R]$ is bipartite.  
Moreover, for any two vertices $u \in L$ and $v \in R$, either $uv \in E(G)$, $uv \not\in E(G)$, or 
$u$ and $b$ are connected by a distinct cage. Applying Proposition~\ref{pr-remain-bip}, we know that 
$G$ must also be bipartite. 
\end{proof}

We now describe independent sets of $G$ and distinguish between different types of tokens to help simplify the presentation. 
We will always have independent sets of size exactly $k = n + 1 + 3[m(n-1) + n]$. 
We say an independent set $I$ of $G$ is \emph{well-formed} if it has size $k$ and we can label its tokens as follows. 
One token, labeled the \emph{switch token}, is either on vertex $z_1$ or vertex $z_2$. 
For each $i \in [n]$, there is a token, labeled the \emph{$\Sigma_i$-token}, on some vertex in $X_i \cup \{e^i\}$ if $i$ is even and 
on some vertex in $X_i \cup \{o^i\}$ if $i$ is odd. 
We call the collection of all such tokens the \emph{$\Sigma$-tokens} (there are exactly $n$ $\Sigma$-tokens). 
For each $(u,v)$-extended cage $H$, there are three tokens, labeled \emph{$(u,v)$-caged tokens}, on vertices in $V(H)$ (which includes $u$ and $v$). 
The total number of caged tokens is $3[m(n-1) + n]$. 
We say $I$ is \emph{strictly well-formed} if it is well-formed and each $\Sigma_i$-token is on some vertex in $X_i$ (not on vertex $e^i$ or $o^i$). 

\begin{lemma}\label{le-well}
If $I$ is a well-formed independent set then there exists no $(u,v)$-cage in $G$ such that $\{u,v\} \subset I$.  
\end{lemma}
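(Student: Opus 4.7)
\noindent
The plan is to exploit the combinatorial structure of the extended cage: in $H$ every internal vertex $f_i$ is adjacent to $h_1$ or to $h_2$. Indeed, reading off the edges of $H$, we have $h_1 \in N(f_1) \cap N(f_3) \cap N(f_6) \cap N(f_8)$ and $h_2 \in N(f_2) \cap N(f_4) \cap N(f_5) \cap N(f_7)$, so $\{f_1,\dots,f_8\} \subseteq N(h_1) \cup N(h_2)$. I will establish this as a small preliminary observation and use it as the engine of the proof.

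From there the argument is a short counting argument. Suppose, for contradiction, that there is a $(u,v)$-cage with $\{u,v\} \subset I$, and let $H$ be the corresponding extended cage with $u=h_1$ and $v=h_2$. Since $I$ is independent and every internal vertex of $H$ is adjacent to $u$ or to $v$, we get $I \cap \{f_1,\dots,f_8\} = \emptyset$, and therefore
\[
|I \cap V(H)| \;\leq\; |\{u,v\}| \;=\; 2.
\]
On the other hand, because $I$ is well-formed, the definition guarantees three distinct $(u,v)$-caged tokens, all placed on vertices of $V(H)$, so
\[
|I \cap V(H)| \;\geq\; 3.
\]
These two inequalities contradict each other, and the lemma follows.

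The only point that requires a sentence of justification is the first inequality's upper bound, namely that no $f_i$ can be in $I$: this uses independence of $I$ together with the structural observation above. The second inequality is immediate from the well-formed labeling, since three tokens correspond to three distinct vertices of $I$ inside $V(H)$. I do not expect any obstacle here; the lemma is essentially a verification that the cage gadget functions as intended, i.e.\ that forcing the presence of three tokens inside each cage is incompatible with occupying both endpoints simultaneously. This is precisely the property that will later be used to argue that adjacent cage endpoints behave like a ``locked'' pair in subsequent reductions.
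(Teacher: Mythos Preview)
Your proof is correct and follows essentially the same approach as the paper: both argue that every internal cage vertex is adjacent to one of $u,v$ (the paper phrases this as ``all vertices of $H$ are at distance at most one from either $u$ or $v$''), so having $\{u,v\}\subset I$ would force $|I\cap V(H)|\le 2$, contradicting the three caged tokens guaranteed by well-formedness. Your version is slightly more explicit in listing the adjacencies and writing out the two inequalities, but the underlying argument is identical.
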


\begin{proof}
Assume otherwise and let $H$ be such an extended cage (including $u$ and $v$). 
Since $I$ is well-formed, we know that $|I \cap V(H)| \geq 3$. 
However, all vertices of $H$ are at distance at most one from either $u$ or $v$. 
Hence, if $|I \cap V(H)| \geq 3$ it cannot be the case that $\{u,v\} \subset I$. 
\end{proof}

\begin{lemma}\label{le-strict}
Let $I$ be a strictly well-formed independent set and let $x^1_{j_1}$, $x^2_{j_2}$, $\ldots$, and $x^n_{j_n}$ 
be the set of $\Sigma$-tokens. Then $w_I = \sigma_{j_1}\sigma_{j_2}\ldots\sigma_{j_n}$ is a word.  
\end{lemma}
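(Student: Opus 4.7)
The plan is to show that the cage structure of $G$ exactly encodes the relation $A$, so a strictly well-formed independent set cannot simultaneously contain $x^i_{j_i}$ and $x^{i+1}_{j_{i+1}}$ when the transition is forbidden by $A$.

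First I would unpack what needs to be proved: $w_I$ being a word means that for every $i \in [n-1]$, the pair $(\sigma_{j_i}, \sigma_{j_{i+1}})$ lies in $A$. Recall from the construction of $G$ that for every $i \in [n-1]$ and every pair of indices $j,j' \in [|\Sigma|]$, we added an $(x^i_j, x^{i+1}_{j'})$-cage in $G$ precisely when $(\sigma_j, \sigma_{j'}) \notin A$. So the target claim is equivalent to: no $(x^i_{j_i}, x^{i+1}_{j_{i+1}})$-cage exists in $G$.

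Suppose for contradiction that some $i \in [n-1]$ violates this, i.e., there is a cage between $u = x^i_{j_i}$ and $v = x^{i+1}_{j_{i+1}}$. Since $I$ is strictly well-formed, the $\Sigma_i$-token sits on $x^i_{j_i} = u$ and the $\Sigma_{i+1}$-token sits on $x^{i+1}_{j_{i+1}} = v$, so $\{u, v\} \subseteq I$. But $I$ is in particular well-formed, and Lemma~\ref{le-well} forbids exactly this configuration: no $(u,v)$-cage can have both endpoints in a well-formed independent set. This contradiction establishes $(\sigma_{j_i}, \sigma_{j_{i+1}}) \in A$ for every $i \in [n-1]$, hence $w_I$ is a word.

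There is essentially no obstacle here; the proof is a direct consequence of the design of $G$ (cages realize forbidden transitions) paired with Lemma~\ref{le-well} (well-formedness forbids tokens on both anchors of a cage). The only small thing to note is that ``strictly well-formed'' is invoked solely to guarantee the $\Sigma$-tokens lie on the $X_i$-vertices rather than on the auxiliary vertices $e^i, o^i$, so that the indices $j_1,\ldots,j_n$ actually determine the symbols of $w_I$; the well-formedness property itself is all that is needed to apply Lemma~\ref{le-well}.
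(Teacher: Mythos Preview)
Your proof is correct and follows essentially the same argument as the paper: assume some consecutive pair $(\sigma_{j_i},\sigma_{j_{i+1}})\notin A$, observe that the construction then places an $(x^i_{j_i},x^{i+1}_{j_{i+1}})$-cage with both endpoints in $I$, and derive a contradiction from Lemma~\ref{le-well}. Your added remark clarifying why ``strictly'' is needed (to ensure the $\Sigma$-tokens actually sit on $X_i$-vertices) is a nice touch not made explicit in the paper.
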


\begin{proof}
It suffices to show that for any two consecutive symbols $\sigma_j$ and $\sigma_{j'}$ in $w_I$, we have $(\sigma_j,\sigma_{j'}) \in A$. 
Assume otherwise. Then, by our construction of $G$, there exists an ($x^i_{j}$,$x^{i + 1}_{j'}$)-cage 
and $\{x^i_{j},x^{i + 1}_{j'}\} \subset I$. Since $I$ is well-formed, we can apply Lemma~\ref{le-well} and get a contradiction.  
\end{proof}

\noindent 
Given a word $w$ and the graph $G$, we associate $w$ with a strictly well-formed independent set $I_w$ as follows. 
We add the switch token on vertex $z_1$ (or $z_2$). For each $i \in [n]$, we add a $\Sigma_i$-token on vertex $x^i_j$ whenever $w[i] = \sigma_j$, where $1 \leq j \leq |\Sigma|$ 
and $w[i]$ denotes the $i$th symbol of $w$. For each $(u,v)$-cage, $u,v \in V(G)$, we add three tokens 
on vertices $f_2$, $f_5$, and $f_7$ if there is no token on $v$ and we add three tokens on $f_3$, $f_6$, and $f_8$ otherwise. 

\begin{lemma}\label{le-forward}
If $(W = (\Sigma, A), w_s, w_t)$ is a yes-instance then $(G,I_{w_s},I_{w_t},k)$ is a yes-instance. 
\end{lemma}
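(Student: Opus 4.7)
The plan is to simulate each step of the Even/Odd Word Reconfiguration sequence by a subsequence of token slides. Given a valid witness $w_s = w_0, w_1, \ldots, w_\ell = w_t$, I will concatenate individual simulations $I_{w_j} \leadsto I_{w_{j+1}}$ to obtain a sliding sequence from $I_{w_s}$ to $I_{w_t}$. For a single transition I may assume (WLOG) that the changes occur at even positions and let $S = \{i \text{ even} : w_j[i] \neq w_{j+1}[i]\}$. The key structural fact I will exploit is that the switch token controls which parity of positions is active: when it sits on $z_1$ each $(z_1,o^i)$-cage has its endpoint $h_1 = z_1$ occupied, so the slide $f_2 \to f_3$ needed to flip the cage is illegal and every $o^i$ stays blocked, while the symmetric $(z_2,e^i)$-cages are free to be flipped. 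So before simulating an even-position transition I make sure the switch sits on $z_1$, sliding it across the edge $z_1z_2$ if needed; this slide is valid because, between transitions, all $(z_2,e^i)$-cages sit in their default state $\{f_2,f_5,f_7\}$ which vacates every neighbor of $z_2$ inside the cage.

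With the switch on $z_1$, I carry out the transition in five subphases.
\emph{(i) Open each gate}: for every $i \in S$, flip the $(z_2,e^i)$-cage from $\{f_2,f_5,f_7\}$ to $\{f_3,f_6,f_8\}$ via the three slides $f_2 \to f_3$, $f_5 \to f_6$, $f_7 \to f_8$, unblocking $e^i$.
\emph{(ii) Park tokens}: for every $i \in S$, slide $x^i_{w_j[i]} \to e^i$.
\emph{(iii) Unblock destinations}: for every $i \in S$ and every $c$ with $c \neq w_j[i-1]$ and $(\sigma_c,\sigma_{w_{j+1}[i]}) \notin A$, flip the cage $(x^{i-1}_c, x^i_{w_{j+1}[i]})$ from $\{f_2,f_5,f_7\}$ (blocking $v = x^i_{w_{j+1}[i]}$) to $\{f_3,f_6,f_8\}$ (blocking $u = x^{i-1}_c$). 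The analogous cage for $c = w_j[i-1]$ does not exist because $w_{j+1}$ is a word and $w_{j+1}[i-1] = w_j[i-1]$, so $(\sigma_{w_j[i-1]},\sigma_{w_{j+1}[i]}) \in A$; and on the $i+1$ side the token-free vertex $x^i_{w_{j+1}[i]}$ plays the role of $u$, so no flip there is necessary.
\emph{(iv) Drop tokens}: for every $i \in S$, slide $e^i \to x^i_{w_{j+1}[i]}$.
\emph{(v) Restore defaults}: flip back each $(z_2,e^i)$-cage and each old-destination cage $(x^{i-1}_c,x^i_{w_j[i]})$ (with $c \neq w_j[i-1]$, $(\sigma_c,\sigma_{w_j[i]}) \notin A$) so that the cage orientations agree with the convention defining $I_{w_{j+1}}$.

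The main obstacle, and the part that needs careful execution, is the per-slide validity bookkeeping: for every slide I must verify that the target is empty and that no neighbor of the target is occupied. This reduces to a local check inside each cage, because the three internally disjoint paths of the cage gadget let every flip proceed one path at a time and the neighbors of $f_3, f_6, f_8$ outside the path consist only of $f_4$ (never occupied), the path's other $f$-vertex (empty after the slide), and the endpoint $h_1$, whose required emptiness is exactly the gate invariant enforced by the switch token. An entirely symmetric description handles transitions at odd positions, with the roles of $z_1,z_2$ and of $o^i,e^i$ swapped. Finally, if the convention defining $I_{w_t}$ places the switch on a specific side, I append one last slide across $z_1z_2$ to land on the right vertex, producing a reconfiguration sequence of length polynomial in the size of $G$ and the length of the word sequence.
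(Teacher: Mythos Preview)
Your proposal is correct and follows essentially the same simulation strategy as the paper: position the switch token on the appropriate side, flip the gate cages to free the relevant $e^i$ (or $o^i$) vertices, park the $\Sigma$-tokens there, reorient the inter-column cages so the new destinations are unblocked, drop the $\Sigma$-tokens on their new positions, and restore the cage orientations to the convention for $I_{w_{j+1}}$. One small bookkeeping slip: when you justify the preliminary switch slide to $z_1$, the cages that must be in default state are the $(z_1,o^i)$-cages (these contain the neighbors of $z_1$), not the $(z_2,e^i)$-cages you cite---but the symmetric argument you need is identical, so the proof goes through.
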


\begin{proof}
Let $\langle w_1 = w_s, w_2, \ldots, w_{\ell-1}, w_\ell = w_t \rangle$ denote the corresponding sequence of words. 
Let $I_w$ and $I_{w'}$ be (arbitrarily chosen) strictly well-formed independent sets corresponding to two consecutive words in this sequence. 
It is enough to show that $\textsf{dist}_{\mc{S}_k(G)}(I_w,I_{w'})$ is finite. 
Recall that every two consecutive words differ in either odd or even positions. 
Therefore, to get from $I_w$ to $I_{w'}$, the following sequence of token slides will do. 
Assume that $w$ and $w'$ differ in odd positions. First, for each $(e,z_2)$-cage, we slide 
the token on $f_3$ to $f_2$, the token on $f_6$ to $f_5$, and the token on $f_8$ to $f_7$ (if needed). 
Now we can slide the switch token from $z_1$ to $z_2$. 
Then, for each $(o,z_1)$-cage, we slide the token on $f_3$ to $f_2$, the token on $f_6$ to $f_5$, and the token on $f_8$ to $f_7$. 
Then, we slide each $\Sigma_i$-token, for $i$ odd, to vertex $o^i$. 
For each ($x^i_{j}$,$x^{i + 1}_{j'}$)-cage, where $i$ is odd and $w'[i] = \sigma_j$ (and hence there is no token on $x^{i + 1}_{j'}$), we slide the token on 
$f_3$ to $f_2$, the token on $f_6$ to $f_5$, and the token on $f_8$ to $f_7$ (if needed). 
For each ($x^i_{j}$,$x^{i + 1}_{j'}$)-cage, where $i$ is odd, $w'[i] \neq \sigma_j$, and there is no token on $x^{i + 1}_{j'}$, we slide the token on 
$f_2$ to $f_3$, the token on $f_5$ to $f_6$, and the token on $f_7$ to $f_8$ (if needed). 
Next, for each odd $i \in [n]$ and assuming $w'[i] = \sigma_j$, we slide 
the token on $o^i$ to vertex $x^i_j$. The even case is handled similarly. 
\end{proof}

\begin{lemma}\label{le-struct}
Let $I$ be an independent set of $G$ and let $J$ be an independent set obtained from $I$ by sliding a single token. 
If $I$ is well-formed then so is $J$.   
\end{lemma}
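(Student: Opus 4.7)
The plan is to case-analyze on the type of edge $uv$ along which the token slides. The edges of $G$ come in three kinds: (I) the single edge $z_1 z_2$; (II) the bipartite edges within each local gadget $Y_i := X_i \cup \{e^i\}$ (for $i$ even) or $X_i \cup \{o^i\}$ (for $i$ odd); (III) edges internal to some extended cage $H$. For cases (I) and (II), any labeling of $I$ certifying well-formedness must assign $v$ the unique switch-label (respectively, the unique $\Sigma_i$-label) associated to the region containing both $u$ and $v$, since well-formedness puts exactly one token on $\{z_1, z_2\}$ and one on each $Y_i$; the slide keeps that token inside the same region, and we simply transfer the label to the new position. The remaining labels of $I$ carry over to $J$ unchanged, and a short check verifies that each cage count $|J \cap V(H')|$ that is affected (when $v$ or $u$ is an $h$-vertex of some cage $H'$) can still be distributed into the required three caged-$H'$ tokens plus the labels of other shared-vertex tokens.

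The substance of the proof lies in case (III), and rests on the following sub-claim about the extended cage: \emph{if $h_1 \in I$ and the token on $h_1$ is not labeled caged-$H$, then no valid slide from $h_1$ can land inside $V(H) \setminus \{h_1\}$; the symmetric statement holds for $h_2$.} To prove the sub-claim, I would observe that any non-caged-$H$ label on $h_1$ forces three additional caged-$H$ tokens inside $V(H) \setminus \{h_1\}$, so $|I \cap (V(H) \setminus \{h_1\})| \geq 3$. Independence together with $h_1 \in I$ excludes $I$ from the four $V(H)$-neighbors $\{f_1, f_3, f_6, f_8\}$ of $h_1$, so these three tokens sit inside $\{h_2, f_2, f_4, f_5, f_7\}$; and the adjacency of $h_2$ to each of $f_2, f_4, f_5, f_7$ further excludes $h_2$, forcing the three tokens into $\{f_2, f_4, f_5, f_7\}$. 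A direct finite check of each possible destination $u \in \{f_1, f_3, f_6, f_8\}$ then shows that validity of the slide would require at least one element of $\{f_2, f_4, f_5, f_7\}$ to be empty (because that element is a $V(H)$-neighbor of $u$ other than $h_1$), leaving fewer than three positions available for the three caged tokens, a contradiction.

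Given the sub-claim, case (III) reduces to the following observation: in any valid slide $v \to u$ with $u, v \in V(H)$, the token moved from $v$ must be labeled caged-$H$; for otherwise, either $v \in \{h_1, h_2\}$ and the sub-claim applied at $v$ forbids the slide, or $v \in \{f_1, \ldots, f_8\}$ and then $v$ lies in no other structural region, so caged-$H$ is the only label $v$ could have had to begin with. We then transfer the caged-$H$ label from $v$ to $u$. Any other structural region that shares a vertex with $v$ (another cage $H'$ sharing an $h$-vertex with $H$, or $Z$, or $Y_i$ when $v$ lies there too) is either disjoint from the slide or absorbs the shift because its label budget remains feasible. The main obstacle is precisely the sub-claim's finite case check: the extended cage was engineered so that three caged tokens together with a token on $h_1$ (or $h_2$) saturate the interior enough to prevent any ``escape'' of the boundary token, and verifying this is the nontrivial step. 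Once the sub-claim is in hand the remaining verification is routine label bookkeeping.
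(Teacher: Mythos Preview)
Your argument is correct and rests on the same cage-saturation observation as the paper's proof; you organize the case analysis by edge type (switch edge, $Y_i$-edge, cage-internal edge) whereas the paper organizes it by token label (switch, $\Sigma_i$, caged) and shows each ``escape'' move is blocked, but the content is the same. Two small fixes are needed in your write-up. First, in the sub-claim's finite check, each candidate destination $u \in \{f_1, f_3, f_6, f_8\}$ has at least \emph{two} neighbors in $\{f_2, f_4, f_5, f_7\}$, not merely one; you need this, since emptying only one of four positions would still leave three slots for the three caged tokens. Second, in cases~(I) and~(II) the assertion that ``well-formedness puts exactly one token on $\{z_1,z_2\}$ and one on each $Y_i$'' is not immediate from the definition (caged tokens may sit on $h$-vertices, which lie in these regions); rather, it follows because $z_1 z_2$ is an edge, and because validity of a slide to $o^i$ (resp.\ $e^i$) forces $X_i \cap I$ to be the single vertex being slid from, after which the $\Sigma_i$-label is forced onto that vertex.
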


\begin{proof}
We divide the proof into several cases. First, assume that the switch token slides 
into vertex $f_4$ of an $(o,z_1)$-cage $H$ (or symmetrically into vertex $f_1$ of an $(e,z_2)$-cage). 
Before the slide, the token must be on vertex $z_1$. If $I$ is well-formed then $|I \cap V(H)| \geq 3$ 
and none of those three tokens can be on $f_2$, $f_5$, or $f_7$. Hence, at least one 
token must be on either $f_3$, $f_6$, or $f_8$, implying that the switch token cannot slide from $z_1$ to $f_4$. 

Assume that a $\Sigma_i$-token, $i$ even (or symmetrically $i$ odd), slides into an $(e,z_2)$-cage ($(o,z_1)$-cage) $H$. 
When $I$ is well-formed we know that the switch token is either on $z_1$ or $z_2$.  
Before the slide, the $\Sigma_i$-token must be on vertex $e$. If the switch token is on $z_2$ then this is 
not possible (Lemma~\ref{le-well}). Hence, the switch token must be on $z_1$ and the $\Sigma_i$-token must slide from vertex $e$ 
to vertex $f_4$. But using the same argument as in the previous case we know that at least one 
token must be on either $f_3$, $f_6$, or $f_8$, implying that the $\Sigma_i$-token cannot slide from $e$ to $f_4$. 
The same arguments also hold for the case where a $\Sigma_i$-token slides into an ($x^i_{j}$,$x^{i + 1}_{j'}$)-cage. 

Assume that a ($x^i_{j}$,$x^{i + 1}_{j'}$)-caged token, $i$ odd, slides out of its extended cage, i.e., either slides to  
vertex $o^i \in U_{\textsf{odd}}$ or vertex $e^{i+1} \in U_{\textsf{even}}$. 
Since $I$ is well-formed, we know that, for each odd $i$, $X_i \cup \{o^i\}$ contains one $\Sigma_i$-token.  
In addition, for each even $i$, $X_i \cup \{e^i\}$ contains one $\Sigma_i$-token. 
Since $o^i$ is connected to all vertices in $X_i$ and $e^{i+1}$ is connected to all vertices in $X_{i+1}$, 
no ($x^i_{j}$,$x^{i + 1}_{j'}$)-caged token can slide to $o^i$ nor $e^{i+1}$. 

Finally, assume that an $(e,z_2)$-caged token or an $(o,z_1)$-caged token slides out of its extended cage. 
Since the switch token must be on $z_1$ or $z_2$, it must be the case that the caged token slides from 
some vertex $o^i \in U_{\textsf{odd}}$ or some vertex $e^i \in U_{\textsf{even}}$ to its neighbor in $X_i$. 
But this contradicts the fact that the $\Sigma_i$-token must be on some vertices in $X_i$.  
\end{proof}

\begin{lemma}\label{le-switch}
Let $\langle I_1, I_2, \ldots, I_{\ell-1}, I_\ell \rangle$ denote a reconfiguration sequence 
between two strictly well-formed independent sets in $\mc{S}_k(G)$.  
Let $I_p$ and $I_q$ be two sets in this sequence such that the following holds: 
\begin{itemize}
\item $1 \leq p < q \leq \ell$; 
\item $I_p$ is obtained from $I_{p-1}$ by sliding the switch token from $z_1$ to $z_2$ (or vice-versa); 
\item $I_q$ is obtained from $I_{q-1}$ by sliding the switch token from $z_1$ to $z_2$ (or vice-versa); 
\item There exists no set $I_r$ such that $p \le r \le q$ and $I_r$ is obtained from $I_{r - 1}$ by sliding the switch token. 
\end{itemize}
Then, $I_p$ and $I_q$ are strictly well-formed. Moreover, either $w_{I_p}[i] = w_{I_q}[i]$, for all even $i$, 
or $w_{I_p}[i] = w_{I_q}[i]$, for all odd $i$. 
\end{lemma}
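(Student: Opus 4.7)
The plan is to establish the two conclusions in sequence: first the strict well-formedness of $I_p$ and $I_q$, and then the parity-preservation statement.

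For strict well-formedness of $I_p$, I would use that $I_p$ differs from $I_{p-1}$ only in the position of the switch token, so the non-switch tokens of $I_p$ coincide with those of $I_{p-1}$. Assume, without loss of generality, that the switch slides from $z_1$ to $z_2$. The independence of $I_{p-1}$ (switch on $z_1$) forces the four cage-neighbors of $z_1$ in each $(z_1,o^i)$-cage to be empty, while the independence of $I_p$ (switch on $z_2$) forces the four cage-neighbors of $z_2$ in each $(z_2,e^i)$-cage to be empty. Consequently the three $(z_2,e^i)$-caged tokens must occupy three vertices of $\{e^i, f_1, f_3, f_6, f_8\}$; but placing the $\Sigma_i$-token on $e^i$ would make $e^i = h_1$ block all four of $\{f_1,f_3,f_6,f_8\}$, leaving no room for three caged tokens. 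Hence every $\Sigma_i$-token for even $i$ lies in $X_i$; the identical argument on the $(z_1,o^i)$-cages places the $\Sigma_i$-tokens for odd $i$ in $X_i$. Thus $I_p$ is strictly well-formed, and the same reasoning applied to the slide at step $q$ gives strict well-formedness of $I_q$.

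For the second claim, assume (the other case being symmetric) that the switch sits on $z_2$ throughout $I_p,\ldots,I_{q-1}$; I will show that no $\Sigma_i$-token for even $i$ moves during this interval. By Lemma~\ref{le-struct} every intermediate configuration is well-formed, and the same ``no room for three caged tokens'' argument (now using that the switch on $z_2=h_2$ empties $\{f_2,f_4,f_5,f_7\}$ of every $(z_2,e^i)$-cage) shows at every intermediate step that the $\Sigma_i$-token for even $i$ lies on some $x^i_j \in X_i$ and that the three caged tokens of that cage occupy three of $\{f_1,f_3,f_6,f_8\}$. The $\Sigma_i$-token at $x^i_j$ has as its only neighbors in $G$ the vertex $e^i$ together with the cage-vertices $\{f_2,f_4,f_5,f_7\}$ of each incident $(x^{i-1}_{j'},x^i_j)$- or $(x^i_j,x^{i+1}_{j'})$-cage (in which $x^i_j$ plays the role of $h_2$ since $i$ is even). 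A slide to $e^i$ is blocked because $e^i$'s four neighbors $\{f_1,f_3,f_6,f_8\}$ in the $(z_2,e^i)$-cage contain at least three tokens. For each $X_i$-cage incident to $x^i_j$, the analogous lockdown argument (using the $\Sigma_i$-token on $x^i_j = h_2$) forces three of $\{f_1,f_3,f_6,f_8\}$ to be occupied, so every potential target $f \in \{f_2,f_4,f_5,f_7\}$ has at least one occupied neighbor in $\{f_1,f_3,f_6,f_8\}$, and the slide is illegal. The $\Sigma_i$-token is thus frozen between $I_p$ and $I_{q-1}$, and since only the switch moves between $I_{q-1}$ and $I_q$, we conclude $w_{I_p}[i] = w_{I_q}[i]$ for every even $i$.

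The main technical obstacle will be ruling out that caged tokens can collectively rearrange, during the intermediate steps, in some clever way that temporarily opens a legal slide for a $\Sigma_i$-token. This is dealt with by proving the lockdown condition at \emph{every} well-formed intermediate configuration rather than only at the endpoints $I_p$ and $I_q$, so that no transient reshuffling of caged tokens can ever create a valid target for a $\Sigma_i$-slide.
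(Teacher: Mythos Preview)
Your proposal is correct and follows essentially the same line as the paper's proof: use the position of the switch token immediately before and after the slide to rule out any $\Sigma_i$-token sitting on $o^i$ or $e^i$ (giving strict well-formedness of $I_p$ and $I_q$), and then use the fixed position of the switch during the interval $[p,q-1]$ to argue that the $\Sigma_i$-tokens on one parity cannot leave their current $x^i_j$.

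The difference is purely one of packaging. The paper invokes Lemma~\ref{le-well} directly (both endpoints of a cage cannot simultaneously be in a well-formed $I$) to get $U_{\textsf{even}}\cap I_{p'}=\emptyset$ whenever the switch is on $z_2$, and then relies on Lemma~\ref{le-struct} to conclude that the $\Sigma_i$-token stays in $X_i\cup\{e^i\}$; since $X_i$ is an independent set, the token is frozen. You instead unpack these two lemmas and reprove the ``three caged tokens leave no room'' lockdown explicitly at every step. In particular, your paragraph arguing that the $\Sigma_i$-token cannot slide into an adjacent $(x^{i-1}_{j'},x^i_j)$- or $(x^i_j,x^{i+1}_{j'})$-cage is redundant once you have invoked Lemma~\ref{le-struct}, which already guarantees the token never leaves $X_i\cup\{e^i\}$. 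This redundancy is harmless.

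One small imprecision: in an $(x^i_j,x^{i+1}_{j'})$-cage the construction always identifies the first listed vertex with $h_1$, so $x^i_j$ plays the role of $h_1$ there regardless of the parity of $i$; your claim that ``$x^i_j$ plays the role of $h_2$ since $i$ is even'' holds only for the $(x^{i-1}_{j'},x^i_j)$-cages. The lockdown argument is symmetric in $h_1$ and $h_2$, so this does not affect correctness, but you should phrase it without committing to which endpoint is which.
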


\begin{proof}
Since $I_1$ is well-formed, we can assume, by Lemma~\ref{le-struct}, that all sets in the sequence are well-formed. 
To show that $I_p$ and $I_q$ are strictly well-formed we will in fact show that any independent set $I$ (in the sequence)  
obtained from its predecessor $I'$ by sliding the switch token must be strictly well-formed.  
Since $I$ is well-formed, it is enough to show that there are no $\Sigma$-tokens from $I$ on 
$U_{\textsf{odd}} \cup U_{\textsf{even}}$. Assume otherwise and consider the case where the switch token 
slides from $z_1$ to $z_2$ (the other case is symmetric). Applying Lemma~\ref{le-well}, we know that, in $I$, there can 
be no tokens on vertices in $U_{\textsf{odd}}$. Similarly, and since the switch token is the only token to slide, we 
know that in $I'$ there can be no tokens on vertices in $U_{\textsf{even}}$. Therefore, we can conclude that $I$, 
$I_p$, and $I_q$ are strictly well-formed. Moreover, by Lemma~\ref{le-strict}, we know that $w_{I}$, $w_{I_p}$, and $w_{I_q}$ are words. 

Now assume that $I_p$ is obtained from $I_{p-1}$ by sliding the switch token from $z_1$ to $z_2$ and 
$I_q$ is obtained from $I_{q-1}$ by sliding the switch token from $z_2$ to $z_1$. Then, 
for all $I_{p'}$, $p < p' < q$, $I_{p'} \cap U_{\textsf{even}} = \emptyset$. In other words, 
all $\Sigma_i$-tokens, $i$ even, cannot slide to vertex $e^i$ (Lemma~\ref{le-well}). 
Hence, $w_{I_p}[i] = w_{I_q}[i]$, for all even $i$. Considering the symmetric case 
where the switch token starts by sliding from $z_2$ to $z_1$ we get $w_{I_p}[i] = w_{I_q}[i]$, for all odd $i$. 
\end{proof} 

\begin{lemma}\label{le-backward}
If $(G,I_{w_s},I_{w_t},k)$ is a yes-instance then $(W = (\Sigma, A), w_s, w_t)$ is a yes-instance. 
\end{lemma}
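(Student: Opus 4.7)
The plan is to extract a valid Even/Odd Word Reconfiguration sequence from the given reconfiguration sequence $\sigma = \langle I_1 = I_{w_s}, \ldots, I_\ell = I_{w_t} \rangle$ in $\mc{S}_k(G)$. By iterated application of Lemma~\ref{le-struct}, every $I_j$ is well-formed. Let $t_1 < t_2 < \cdots < t_m$ be the indices at which the switch token slides, and set $s_0 = 1$, $s_j = t_j$ for $1 \le j \le m$, and $s_{m+1} = \ell$. The proof of Lemma~\ref{le-switch} shows that any independent set produced by a switch-token slide is strictly well-formed, and $I_1, I_\ell$ are strictly well-formed by construction of $I_{w_s}$ and $I_{w_t}$. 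Hence by Lemma~\ref{le-strict} each $I_{s_j}$ yields a word $w^{(j)}$ of length $n$ over $\Sigma$, with $w^{(0)} = w_s$ and $w^{(m+1)} = w_t$.

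The task is then to verify that the sequence $w^{(0)}, w^{(1)}, \ldots, w^{(m+1)}$ (after deleting consecutive repetitions) is a valid Even/Odd Word Reconfiguration sequence from $w_s$ to $w_t$: every $w^{(j)}$ is a word over $(\Sigma, A)$ (immediate from Lemma~\ref{le-strict}), and any two consecutive $w^{(j)}$, $w^{(j+1)}$ agree on all positions of a single parity. For interior indices $1 \le j \le m - 1$ the parity-agreement is exactly the second conclusion of Lemma~\ref{le-switch}. For the two boundary transitions $w^{(0)} \to w^{(1)}$ and $w^{(m)} \to w^{(m+1)}$ the plan is to reuse the argument from the proof of Lemma~\ref{le-switch}: throughout $[1, s_1 - 1]$ the switch token sits on a fixed vertex $z_\xi \in \{z_1, z_2\}$, so by Lemma~\ref{le-well} no vertex of the corresponding set $U_\xi \in \{U_{\textsf{odd}}, U_{\textsf{even}}\}$ can carry a token in any $I_j$ of that range; hence every $\Sigma_i$-token whose parity matches $U_\xi$ is confined to $X_i$ throughout. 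Since $X_i$ is independent and its only external neighbours are the forbidden vertex of $U_\xi$ and cage-entry vertices (the latter ruled out by the case analysis inside the proof of Lemma~\ref{le-struct}), such a $\Sigma_i$-token cannot actually change position on $[1, s_1 - 1]$, and the final switch slide at $s_1$ keeps it fixed as well. This gives $w^{(0)}[i] = w^{(1)}[i]$ for all $i$ of the relevant parity; the segment $[s_m, s_{m+1}]$ is handled symmetrically.

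The step I expect to require the most care is this ``tokens are stuck inside $X_i$'' step: one has to rule out, case by case, every way a $\Sigma_i$-token could leave its current $x^i_j$ — through $o^i$ or $e^i$ (forbidden by Lemma~\ref{le-well} in the fixed-switch regime), into a different $X$-set via a direct edge (no such edges exist in $G$), or into one of the cages attached to $x^i_j$ (blocked by the well-formedness invariant that each cage contains exactly three caged tokens, as in the proof of Lemma~\ref{le-struct}). Once this is settled, the word-validity of each transition is automatic: if positions $i$ and $i+1$ of $w^{(j+1)}$ are $\sigma_{j'}$ and $\sigma_{j''}$, then $x^i_{j'}$ and $x^{i+1}_{j''}$ both lie in the well-formed $I_{s_{j+1}}$, so by Lemma~\ref{le-well} there is no $(x^i_{j'}, x^{i+1}_{j''})$-cage in $G$, which by construction means $(\sigma_{j'}, \sigma_{j''}) \in A$. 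Concatenating these ingredients then produces the required Even/Odd word sequence from $w_s$ to $w_t$, completing the proof.
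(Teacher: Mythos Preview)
Your proposal is correct and follows essentially the same approach as the paper: extract the subsequence of configurations at which the switch token slides, observe (via Lemma~\ref{le-struct} and the argument inside Lemma~\ref{le-switch}) that these are strictly well-formed and hence yield words, and conclude that consecutive words agree on all positions of one parity. You are simply more careful than the paper about the two boundary segments (from $I_{w_s}$ to the first switch-slide and from the last switch-slide to $I_{w_t}$), explicitly redoing the confinement argument there; the paper handles these segments by an implicit appeal to the same reasoning and a one-line remark about the degenerate case where no switch-slide occurs.
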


\begin{proof}
Let $\eta = \langle I_1 = I_{w_s}, I_2, \ldots, I_{\ell-1}, I_\ell = I_{w_t}\rangle$ denote 
a reconfiguration sequence from $I_{w_s}$ to $I_{w_t}$ in $\mc{S}_k(G)$. 
Since $I_{w_s}$ is (strictly) well-formed, we can, from Lemma~\ref{le-struct}, assume 
that each set in the sequence is well-formed. 
Consider the subsequence $\eta' = \langle I_{p_1}, \ldots, I_{p_\ell} \rangle$ which is obtained by restricting $\eta$ 
to only those sets which are obtained from their predecessor by sliding the switch token. In other words, 
$I_{p_j}$ is obtained from $I_{p_j - 1}$ by sliding the switch token. 
We claim that $\langle w_s, w_{I_{p_1}}, \ldots, w_{I_{p_\ell}}, w_t \rangle$ is the required solution to the instance $(W = (\Sigma, A), w_s, w_t)$. 
Applying Lemma~\ref{le-switch}, we know that each independent set in $\eta'$ is strictly well-formed 
and therefore corresponds to a word (Lemma~\ref{le-strict}). Moreover, from Lemma~\ref{le-switch}, 
we know that each two consecutive words differ in either odd or even positions only, as needed. 
Note that $\langle w_s, w_{I_{p_1}}, \ldots, w_{I_{p_\ell}}, w_t \rangle$ might contain duplicate words which can safely be deleted. 
Moreover, if $\eta'$ is empty then $w_s$ and $w_t$ differ in only odd or even positions. 
\end{proof}

Theorem~\ref{th-pspace} follows by combining Proposition~\ref{pr-even-odd} and Lemmas~\ref{le-bip},~\ref{le-forward}, and~\ref{le-backward}. 

\begin{theorem}\label{th-pspace}
{\sc Token Sliding} is \PSPACE-complete on bipartite graphs. 
\end{theorem}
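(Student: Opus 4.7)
The plan is to assemble Theorem~\ref{th-pspace} from the pieces already developed in this section. For membership in \PSPACE, I would invoke the standard argument for reconfiguration problems: the reconfiguration graph $\mc{S}_k(G)$ has at most $\binom{|V(G)|}{k}$ nodes, so one can nondeterministically guess a reconfiguration sequence one slide at a time, storing only the current independent set and a polynomial-bounded counter; this gives an \NPSPACE\ algorithm, and by Savitch's theorem the problem lies in \PSPACE.

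For \PSPACE-hardness, I would reduce from {\sc Even/Odd Word Reconfiguration}, which is \PSPACE-complete by Proposition~\ref{pr-even-odd}. Given an instance $(W=(\Sigma,A),w_s,w_t)$ with $|w_s|=|w_t|=n$ (which we may assume even), I would build the graph $G$ exactly as described preceding Lemma~\ref{le-bip}, together with the two strictly well-formed independent sets $I_{w_s}$ and $I_{w_t}$ of size $k = n+1+3[m(n-1)+n]$. The construction is clearly polynomial in $|W|+n$, and Lemma~\ref{le-bip} guarantees that $G$ is bipartite, which is the restriction we need.

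Correctness of the reduction is exactly the combination of Lemma~\ref{le-forward} and Lemma~\ref{le-backward}: the former shows that a solution to the word reconfiguration instance yields a token sliding sequence from $I_{w_s}$ to $I_{w_t}$ in $\mc{S}_k(G)$, and the latter shows the converse by extracting from any token sliding sequence the subsequence of ``switch slides'' and reading off the associated words via Lemma~\ref{le-strict} and Lemma~\ref{le-switch}. Together these establish that $(G,I_{w_s},I_{w_t},k)$ is a yes-instance of {\sc Token Sliding} on a bipartite graph if and only if $(W,w_s,w_t)$ is a yes-instance of {\sc Even/Odd Word Reconfiguration}, completing the \PSPACE-hardness reduction.

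There is essentially no new obstacle to overcome, since all the structural work (gadget correctness, preservation of well-formedness, and the correspondence between switch slides and even/odd bulk changes) has been done in the preceding lemmas. The only minor care needed is to verify that $I_{w_s}$ and $I_{w_t}$ have the advertised size and are indeed independent sets of $G$ (which is immediate from the cage bipartition used in Proposition~\ref{pr-remain-bip} and the definition of $I_w$), and to spell out explicitly that the reduction is polynomial-time computable, using the bound $|V(G)|=|\Sigma|(n+1)+2+8[m(n-1)+n]$ noted in the construction.
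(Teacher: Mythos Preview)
Your proposal is correct and follows essentially the same approach as the paper, which simply states that the theorem follows by combining Proposition~\ref{pr-even-odd} with Lemmas~\ref{le-bip},~\ref{le-forward}, and~\ref{le-backward}. Your write-up is in fact more detailed than the paper's one-line proof, adding the standard Savitch argument for membership in \PSPACE\ and explicitly noting the polynomial size of the construction.
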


\bibliographystyle{siam}
\bibliography{references}
\end{document}